\documentclass[11pt]{article}
\usepackage{tocloft}
\usepackage{etoolbox}
\makeatletter
\patchcmd{\l@section}{\hfil}{\hfil\parfillskip=0pt \relax \raggedright}{}{}
\makeatother

 \usepackage{basket_Fourier}
\usepackage{float}      
\usepackage[section]{placeins} 
\usepackage{afterpage} 
\usepackage{comment}
\usepackage{algpseudocode}

\pagestyle{plain}                                                      
\setlength{\textwidth}{6.5in}     
\setlength{\oddsidemargin}{0in}   
\setlength{\evensidemargin}{0in}  
\setlength{\textheight}{8.5in}    
\setlength{\topmargin}{-0.2in}    
\setlength{\headheight}{0in}      
\setlength{\headsep}{0in}         
\setlength{\footskip}{.5in}       


\makeatletter
\def\BState{\State\hskip-\ALG@thistlm}
\makeatother

\usepackage[thinlines]{easytable}
\renewcommand{\arraystretch}{2.5}
\setlength{\arrayrulewidth}{1pt}

\usepackage{array}
\newcolumntype{P}[1]{>{\raggedright\arraybackslash}p{#1}}
\newcolumntype{Y}{>{\raggedright\arraybackslash}X}
\usepackage{diagbox} 
\usepackage{multirow} 
\usepackage{calc} 
\newcommand{\diagonalentry}[2]{%
	\settowidth{\dimen0}{$#1$}%
	\settowidth{\dimen2}{$#2$}%
	\ifdim\dimen0>\dimen2
	\raisebox{0.5\dimexpr\dimen0-\dimen2}{#1}%
	\else
	\raisebox{-0.5\dimexpr\dimen2-\dimen0}{#2}%
	\fi
}

\title{Quasi-Monte Carlo with Domain Transformation for Efficient Fourier Pricing of Multi-Asset Options}

\author[1]{Christian Bayer}
\author[2]{Chiheb Ben Hammouda}
\author[3]{Antonis Papapantoleon}
\author[4]{Michael Samet\thanks{samet@uq.rwth-aachen.de}}
\author[4,5]{Ra\'ul Tempone}
\affil[1]{Weierstrass Institute for Applied Analysis and Stochastics (WIAS), Berlin, Germany.}
\affil[2]{Mathematical Institute, Utrecht University, Utrecht, the Netherlands}
\affil[3]{Delft Institute of Applied Mathematics, EEMCS, TU Delft, 2628 Delft, The Netherlands, and Department of Mathematics, SAMPS, NTUA, 15780 Athens, Greece and Institute of Applied and Computational Mathematics, FORTH, 70013 Heraklion, Greece.}
\affil[4]{Chair of Mathematics for Uncertainty Quantification, RWTH Aachen University, Aachen, Germany.}
\affil[5]{King Abdullah University of Science and Technology (KAUST), Computer, Electrical and Mathematical Sciences \& Engineering Division (CEMSE), Thuwal, Saudi Arabia.}

\begin{document}
	\date{}
\maketitle

\begin{abstract}
Efficiently pricing multi-asset options poses a significant challenge in quantitative finance. Fourier methods leverage the regularity properties of the integrand in the Fourier domain to accurately and rapidly value options that typically lack regularity in the physical domain.  However, most of the existing Fourier approaches face hurdles in high-dimensional settings due to the tensor product (TP) structure of the commonly employed numerical quadrature techniques. To overcome this difficulty,  this work advocates using the randomized quasi-Monte Carlo (RQMC) quadrature to improve the scalability of Fourier methods in high dimensions. The RQMC technique benefits from the smoothness of the integrand and alleviates the curse of dimensionality while providing practical error estimates. Nonetheless, the applicability of RQMC on the unbounded domain, $\mathbb{R}^d$, requires a domain transformation to $[0,1]^d$, which may result in singularities of the transformed integrand at the corners of the hypercube, and hence deteriorate the performance of RQMC. To circumvent this difficulty, we design an efficient domain transformation procedure based on boundary growth conditions on the transformed integrand. The proposed transformation preserves sufficient regularity of the original integrand for fast convergence of the RQMC method. To validate our analysis, we demonstrate the efficiency of employing RQMC with an appropriate transformation to evaluate options in the Fourier space for various pricing models, payoffs, and dimensions. Finally, we highlight the computational advantage of applying RQMC over TP quadrature in the Fourier domain, and over MC in the physical domain for options with up to 15 assets. \\
 
\textbf{Keywords} 
	option pricing,  Fourier methods,  quasi-Monte Carlo,  multi-asset options, domain transformation, boundary growth conditions

\textbf{2020 Mathematics Subject Classification} 65D32, 65T50, 65Y20, 	91B25, 91G20, 91G60
\end{abstract}

\tableofcontents

\section{Introduction}

Computing the price of options depending on multiple assets presents a long-standing challenge in quantitative finance. Option pricing approaches can be broadly divided into two classes. The first class consists of partial differential equation  (PDE) approaches, which are typically solved using finite difference, finite volume or finite element methods \cite{duffy2013finite,wang2004novel,hilber2013computational}. Despite the wide literature on the PDE approaches, they tend to suffer from the curse of dimensionality \cite{reisinger2007efficient} and, hence, are not suitable for multi-asset option pricing. Recent studies have employed deep learning techniques to address the curse of dimensionality \cite{han2018solving,glau2022deep,georgoulis2026deep,gnoatto2022deep}; although the developed methods are efficient, they often rely on computationally intensive offline training procedure.  The second class of methods is based on the integral representation of option prices, which are given as expected values  \cite{longstaff1995option}. For multi-asset derivatives,  an advantage of computing these expectations using the Monte Carlo (MC) method  is that  it yields a dimension-independent convergence rate, $\mathcal{O}(N^{-\frac{1}{2}})$ \cite{glasserman2004monte}, with $N$ being the number of simulated paths. However, this rate of convergence is considered to be rather slow \cite{glasserman2004monte}, and a faster convergence rate can be achieved by applying efficient numerical integration techniques, such as the quasi-Monte Carlo (QMC) \cite{dick2013high}  or the (adaptive) sparse grid quadrature (SGQ)  \cite{holtz2010sparse} methods. The shortcoming of these quadrature methods is that their convergence rates are sensitive to the dimensionality and the regularity of the integrand. Moreover, since most of the option payoff functions are discontinuous or have discontinuous derivatives, off-the-shelf application of QMC or SGQ to such non-regular pricing problems leads to poor performance \cite{bayer2018smoothing,bayer2023numerical}.  Consequently, considerable research has been devoted to developing analytical and numerical smoothing techniques. For instance, analytical smoothing can be performed by applying a conditional expectation with respect to (w.r.t) selected integration variables, as in \cite{xiao2018conditional,bayer2018smoothing,bayer2020hierarchical}.  Moreover, when analytical smoothing is inapplicable, numerical smoothing can be carried out by combining root-finding methods with preintegration   \cite{bayer2023numerical}. A more implicit smoothing technique is to map the integration problem from the direct space to an image space via an integral transform (e.g., Fourier, Laplace, Mellin, or Hilbert) \cite{lewis2001simple,raible2000levy,feng2008pricing,panini2004option}, where the integrand is more regular, or possibly analytic in some region of the complex plane.  The integration problem can then be solved more efficiently in the image space if appropriate numerical methods are employed, and  a suitable contour of integration is chosen (for more details, see \cite{bayer2023optimal,lord2007optimal}). Several works \cite{ruijter2012two,kirkby2016frame,colldeforns2017two,hurd2010fourier,von2015benchop} demonstrated that when the characteristic function of the log-price is computable, Fourier methods may present a solid alternative to the MC method and PDE approaches for options with up to two underlying assets. However,  the literature on Fourier pricing in high dimensions is scarce. 

There are three widely used Fourier pricing approaches, which we briefly describe, but for a more elaborate discussion, refer to the introduction in \cite{bayer2023optimal}. The first approach \cite{carr1999option} takes the Fourier transform of the exponentially dampened option price w.r.t the log-strike variable and applies the fast Fourier transform (FFT) algorithm to evaluate options for multiple strikes simultaneously.  The second approach  is known as the COS method \cite{fang2008novel} and is based on expanding the density function of the log-price in a Fourier cosine series, and expressing the Fourier cosine coefficients in terms of the characteristic function. The third approach\footnote{This approach was concurrently interpreted in terms of the double Laplace transform by Raible \cite{raible2000levy} and the extended Fourier transform by Lewis \cite{lewis2001simple}. } \cite{lewis2001simple,raible2000levy} takes the inverse extended Fourier transform\footnote{The extended Fourier transform is sometimes referred to as the Fourier-Laplace transform \cite{duistermaat1996fourier} or the generalized Fourier transform \cite{titchmarsh1948introduction}.}  \cite{duistermaat1996fourier} of the payoff function and the density function separately in the log-price variable. 
The main advantage of the latter approach is that it can be straightforwardly extended to multiple dimensions  \cite{eberlein2010analysis}, given explicitly as a multivariate contour integral in the complex plane. 
 For instance,  \cite{hurd2010fourier} adopted this approach and proposed a numerical scheme based on the FFT for the pricing of two-dimensional (2D) spread options. While their method demonstrated efficiency in handling 2D options, due to its tensor product (TP) nature, the curse of dimensionality occurs for high-dimensional cases. The results in  \cite{bayer2023optimal} revealed that the curse of dimensionality can be alleviated by employing a dimension-adaptive SGQ method and by parametrically smoothing the integrand via an optimized choice of the contour of integration. Their numerical experiments show that basket and rainbow options with up to six underlying assets following Lévy dynamics can be priced very efficiently. 
 Other related work \cite{kastoryano2022highly}  uses the tensor-train cross algorithm to exploit the possibly low rank structure of the integrand in the Fourier domain. 
  Nevertheless, this Tensor-Fourier method was reported to become numerically unstable in high dimensions, and the numerical experiments were restricted to the geometric Brownian motion (GBM) model.  
   Thus, only a limited number of studies have offered efficient numerical methodologies for the pricing of multi-asset options in a Fourier representation.

   Despite the wide applicability of the QMC method in option pricing in the physical space (we refer to \cite{l2009quasi} for an overview), to the best of our knowledge, there have been no attempts to apply the QMC method in Fourier pricing. The critical aspect of applying QMC in the Fourier domain is that the target integrand is supported on $\mathbb{R}^d$, whereas QMC low discrepancy (LD) points are designed on $[0,1]^d$ \cite{dick2013high}. The standard technique to transform the integration problem from $\mathbb{R}^d$ to $[0,1]^d$ is by composing the original integrand with an inverse cumulative distribution function (ICDF) \cite{hartinger2004quasi, owen2006halton, nichols2014fast}. This transformation poses two main challenges. The first challenge is that if the ICDF and its parameters are not chosen carefully, it can lead to integrands which are unbounded at the boundary of $[0,1]^d$, and hence deteriorate the performance of  the QMC method \cite{owen2006halton}. The second challenge is  that the CDF of a multivariate distribution with dependent components is generally not invertible. In their study, \cite{kuo2006randomly} analyzed the optimal rate of convergence of the QMC method applied on $\mathbb{R}^d$ when computing expected values of a random variable (RV) for various combinations of parameters of a weighted function space and distributions of the RV, highlighting the need for appropriate choice of the weight space.  Moreover, it was proved in \cite{owen2006halton} that the convergence rate of the QMC method can be severely impacted by a singularity with polynomial growth at some corner of the hypercube, depending on the rate of growth conditions on the  integrand, and corner-avoidance properties of the LD sequences. More recently,  \cite{ouyang2024achieving} extended the work of  \cite{owen2006halton} and studied the impact of the boundary singularities on the rate of convergence of the RQMC method for unbounded integrands with exponential rate of growth.  They show that  RQMC combined with importance sampling (IS) can achieve an asymptotic convergence rate $\mathcal{O}(N^{- \frac{3}{2} + \epsilon_r})$, $\epsilon_r > 0$. A related work \cite{liu2023nonasymptotic} studies the nonasymptotic convergence rate of the QMC method aided with IS.  In contrast to the mentioned research, the integration problem in the Fourier space is deterministic. Consequently, IS is enrooted in the proposed approach, and the choice of the proposal IS density may have adverse effects on the convergence of the RQMC method. The contributions of this paper are as follows:

 \begin{itemize}
 	\item To the best of our knowledge, we are the first to propose the use of RQMC in the Fourier space for option pricing in high dimensions. We provide a practical model-specific domain transformation strategy that avoids introducing the singularity of the integrand near the boundaries of $[0,1]^d$. The key idea is to preserve the original features of the integrand by choosing a proposal domain transformation distribution that shares the same functional form as the asymptotically dominant part of the integrand, in particular, the extended characteristic function of the log-price.  Then, we tune the parameters of the proposal density to satisfy boundary growth conditions that ensure fast convergence of the RQMC method.
 	\item Compared to other related works \cite{nichols2014fast, ouyang2024achieving,liu2023nonasymptotic}, we do not treat weighted integration problems w.r.t the Gaussian density. We consider more challenging integrands that may have a slower decay, e.g., root-exponential decay in the case of the generalized hyperbolic (GH) model and power-law decay in the case of the variance gamma (VG) model. In addition, most research works \cite{nichols2014fast, ouyang2024achieving, liu2023nonasymptotic} assume that the components of the density they integrate against are independent. Consequently, they perform the domain transformation to the hypercube independently for each dimension via marginal ICDFs. In contrast, this work proposes a two-step domain transformation strategy that accommodates multivariate transformation distributions with dependent RVs. In the first step, we express the transformation density in terms of a normal variance-mean mixture form and eliminate dependencies between the components of the multivariate normal distribution using the Cholesky factorization. In the second step, we apply the ICDF mapping to the hypercube separately for the mixing distribution and the multivariate standard normal distribution.
 	\item We demonstrate the computational advantage of employing RQMC in the Fourier space compared to the TP-Laguerre quadrature or MC method in the Fourier space and compared to the MC method in the direct space. We provide several numerical experiments for various pricing models and options with up to 15 underlying assets.
 	
 	\item The methodology of mapping the problem to the Fourier space and applying RQMC is extendable to high-dimensional expectation problems, beyond the specific context of option pricing. Furthermore, the domain transformation procedure we propose represents a general approach to addressing deterministic integrals on $\mathbb{R}^d$ using RQMC methods.
 \end{itemize} 
 
The outline of this paper is as follows. Section \ref{sec:Problem Setting and Pricing Framework} introduces the problem setting of multivariate Fourier pricing and provides the necessary background on the RQMC method.  In Section \ref{sec:RQMC_fourier_pricing}, we explain our methodology. In Section \ref{sec:general_formulation}, we motivate the importance of appropriately handling the domain transformation to obtain nearly optimal convergence rates of the RQMC method. Then, in Section \ref{sec:model_spec_dom_transf}, we present practical domain transformation strategies for the GBM, the VG and the GH models, based on boundary growth conditions on the transformed integrand, summarized in Tables \ref{tab:univariate_dom_tranf} and \ref{tab:multivariate_dom_tranf}. Finally, in Section \ref{sec: num_exp_results}, we report and analyze the obtained numerical results. We illustrate the advantages of the proposed domain transformation on the rate of convergence of the RQMC method. Furthermore, in Section \ref{sec:qmc_vs_quad_run}, we highlight the considerable computational gains achieved compared to the TP-Laguerre quadrature and MC method in the Fourier space and compared to the MC method in the physical space, for options with up to 15 assets.

\section{Problem Setting and  Background}\label{sec:Problem Setting and Pricing Framework}
Section~\ref{sec:fourier_val_formula} briefly revisits the general Fourier valuation framework for multi-asset options considered in this work (explained in more detail in a previous study \cite{bayer2023optimal}). Then, Section~\ref{sec:qmc_intro} describes the RQMC method.


\subsection{Fourier Valuation Formula}
\label{sec:fourier_val_formula}
Prior to presenting the valuation formula in Proposition~\ref{prop:Multivariate Fourier pricing valuation formula}, we introduce the necessary notation, definitions, and assumptions.
\begin{notation}[Notations and Definitions]\label{notation}\
	
	\begin{itemize}
		\item   $\boldsymbol{X}_t:=\left(X_t^1,\dots, X_t^d\right)$ is a $d$-dimensional ($d \in \mathbb{N}$) vector of log-asset prices\footnote{$X^i_t:= \log(S^i_t), i = 1, \ldots, d$, $\{S^i_t \}_{i=1}^d$ are asset prices at time $t > 0$.} whose dynamics follow a multivariate stochastic model with the market parameters denoted by the vector $\boldsymbol{\Theta}_X$. 
		\item $\Phi_{\boldsymbol{X}_T}(\boldsymbol{z}):= \mathbb{E}[e^{\mathrm{i} \boldsymbol{z}^{\top} \boldsymbol{X}_T }]$, for $\boldsymbol{z} \in \mathbb{C}^d$, denotes the extended characteristic function, where $\boldsymbol{z}^{\top}$ represents the transpose of $\boldsymbol{z}$. We define the bilinear form $\boldsymbol{x}^{\top} \boldsymbol{y}= \sum_{j = 1}^d x_j y_j$ for $\boldsymbol{x}, \boldsymbol{y} \in \mathbb{C}^d$. 
		\item  $P: \mathbb{R}^d \mapsto \mathbb{R}_+$ denotes the payoff function, and $\hat{P}(\boldsymbol{z}):= \int_{\mathbb{R}^d} e^{ - \mathrm{i} \boldsymbol{z}^{\top} \boldsymbol{x}} P(\boldsymbol{x}) d\boldsymbol{x} $, for $\boldsymbol{z} \in \mathbb{C}^d$ represents its extended Fourier transform. 
		\item  $\boldsymbol{\Theta}_P = (K, T, r)$ represents the vector of the payoff parameters, where $K$ denotes the strike price, $T$ is the maturity time, and $r$ is the risk-free interest rate.
		\item $\mathrm{i}$ denotes the unit imaginary number, and $\Re[\cdot]$ and $\Im[\cdot]$ represent the real and imaginary parts of a complex number, respectively.
		\item $L^1(\mathbb{R}^d)$ denotes the space of integrable functions on $\mathbb{R}^d$.
		\item $\boldsymbol{A} \succ 0$ (respectively $\boldsymbol{A} \succeq 0$) denotes positive (semi-)definiteness, and $\boldsymbol{A} \prec 0$ (respectively $\boldsymbol{A} \preceq 0$) denotes the negative (semi-)definiteness of the matrix $\boldsymbol{A} \in \mathbb{R}^{d \times d}$. 
		
		\item Let $\mathbb{I}_d := \{1,\ldots,d\}$.
		
		\item $\Gamma(z) = \int_{0}^{+ \infty} e^{-t} t^{z-1} dt$ is the complex Gamma function defined for $\Re{[z]} > 0 $.
		
		\item $K_{v}(y)$ is the modified Bessel function of the second kind with $v = \frac{2 - d}{2}$, defined for $y >0$, see \cite{kotz2001laplace}.
		
		\item Let $\boldsymbol{u} = (u_1, u_2, \ldots, u_m) \in [0,1]^m$ with $m,d \in \mathbb{N}, m \geq d$. We define the subvector $\boldsymbol{u}_{1:d}$ by $\boldsymbol{u}_{1:d} := (u_1, u_2, \ldots, u_d)$.
	\end{itemize}
\end{notation}

\begin{assumption}[Assumptions on the payoff]\label{ass:Assumptions on  the payoff}\
	\begin{itemize} 
	
		\item $ \delta_P :=  \{ \boldsymbol{R} \in \mathbb{R}^d \; |  \; \boldsymbol{x} \mapsto e^{ \boldsymbol{R}^{\top} \boldsymbol{x} }P(\boldsymbol{x}) \in L^1(\mathbb{R}^d)  \} \neq \emptyset$.
	\end{itemize}
\end{assumption} 

\begin{assumption}[Assumptions on the model]\label{ass:Assumptions on  the distribution}\
	\begin{itemize}	
			\item $ \delta_X : = \{ \boldsymbol{R} \in \mathbb{R}^d \; | \; \boldsymbol{y} \mapsto \mid \Phi_{\boldsymbol{X}_T}(\boldsymbol{y} + \mathrm{i} \boldsymbol{R}) \mid < \infty, \text{and} \; \boldsymbol{y}   \mapsto  \Phi_{\boldsymbol{X}_T}(\boldsymbol{y} + \mathrm{i} \boldsymbol{R}) \in L^1(\mathbb{R}^d) \} \neq \emptyset $.

	\end{itemize} 	
\end{assumption}

\begin{proposition}[Multivariate Fourier Pricing Valuation Formula]\label{prop:Multivariate Fourier pricing valuation formula} We employ the notation in \ref{notation} and suppose that Assumptions~\ref{ass:Assumptions on the payoff} and \ref{ass:Assumptions on the distribution} hold and that $\delta_V = \delta_X \cap \delta_P  \neq \emptyset$. Then, for $\boldsymbol{R} \in \delta_V$, the option value is given by the following:
	\begin{equation}
	 V\left(\boldsymbol{\Theta}_X, \boldsymbol{\Theta}_P\right)   = (2 \pi)^{-d} e^{-r T} \int_{\mathbb{R}^d} \Re\left[  \Phi_{\boldsymbol{X}_{T}}(\boldsymbol{y}+\mathrm{i} \boldsymbol{R}) \widehat{P}(\boldsymbol{y}+\mathrm{i} \boldsymbol{R}) \right] \mathrm{d} \boldsymbol{y}.
		\label{QOI}
	\end{equation}
\end{proposition}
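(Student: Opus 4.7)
The plan is to follow the classical Parseval/Fourier-inversion derivation (as in Eberlein-Glau-Papapantoleon), adapted to the $L^1$-based assumptions at hand. Starting from the risk-neutral representation $V(\boldsymbol\Theta_X,\boldsymbol\Theta_P) = e^{-rT}\mathbb{E}[P(\boldsymbol{X}_T)]$, I would split the trivial identity $1 = e^{\boldsymbol{R}^{\top}\boldsymbol{x}}\cdot e^{-\boldsymbol{R}^{\top}\boldsymbol{x}}$ between the payoff and the density $\rho_{\boldsymbol{X}_T}$ of $\boldsymbol{X}_T$, introducing the damped pair
\begin{equation*}
P_{\boldsymbol{R}}(\boldsymbol{x}) := e^{\boldsymbol{R}^{\top}\boldsymbol{x}}P(\boldsymbol{x}), \qquad h_{\boldsymbol{R}}(\boldsymbol{x}) := e^{-\boldsymbol{R}^{\top}\boldsymbol{x}}\rho_{\boldsymbol{X}_T}(\boldsymbol{x}),
\end{equation*}
so that $\mathbb{E}[P(\boldsymbol{X}_T)] = \int_{\mathbb{R}^d} P_{\boldsymbol{R}}(\boldsymbol{x}) h_{\boldsymbol{R}}(\boldsymbol{x})\,d\boldsymbol{x}$. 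The two integrability conditions $\boldsymbol{R}\in\delta_P$ and $\boldsymbol{R}\in\delta_X$ supply exactly the ingredients required: $P_{\boldsymbol{R}}\in L^1(\mathbb{R}^d)$ is immediate from Assumption~\ref{ass:Assumptions on  the payoff}; the direct identity $\widehat{h}_{\boldsymbol{R}}(\boldsymbol{y}) = \Phi_{\boldsymbol{X}_T}(-\boldsymbol{y} + \mathrm{i}\boldsymbol{R})$ places $\widehat{h}_{\boldsymbol{R}}$ in $L^1$ by Assumption~\ref{ass:Assumptions on  the distribution}; and finiteness of $\Phi_{\boldsymbol{X}_T}(\mathrm{i}\boldsymbol{R})$ (the $\boldsymbol{y}=\boldsymbol{0}$ case of that assumption) additionally gives $h_{\boldsymbol{R}}\in L^1$.

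With $h_{\boldsymbol{R}}, \widehat{h}_{\boldsymbol{R}} \in L^1$, the Fourier inversion theorem applies and yields an almost-everywhere representation of $h_{\boldsymbol{R}}$ as the inverse Fourier transform of $\widehat{h}_{\boldsymbol{R}}$. Substituting this representation into $\int P_{\boldsymbol{R}} h_{\boldsymbol{R}}\,d\boldsymbol{x}$ and exchanging the order of integration via Fubini's theorem, justified by the product bound
\begin{equation*}
\int\!\!\int \bigl|P_{\boldsymbol{R}}(\boldsymbol{x})\,\widehat{h}_{\boldsymbol{R}}(\boldsymbol{y})\bigr|\,d\boldsymbol{x}\,d\boldsymbol{y} = \|P_{\boldsymbol{R}}\|_1\,\|\widehat{h}_{\boldsymbol{R}}\|_1 < \infty,
\end{equation*}
produces the expression $(2\pi)^{-d}\int \widehat{h}_{\boldsymbol{R}}(\boldsymbol{y})\,\widehat{P}_{\boldsymbol{R}}(-\boldsymbol{y})\,d\boldsymbol{y}$. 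A change of variables $\boldsymbol{y}\mapsto -\boldsymbol{y}$, together with the identifications $\widehat{P}_{\boldsymbol{R}}(\boldsymbol{y}) = \widehat{P}(\boldsymbol{y}+\mathrm{i}\boldsymbol{R})$ and $\widehat{h}_{\boldsymbol{R}}(-\boldsymbol{y}) = \Phi_{\boldsymbol{X}_T}(\boldsymbol{y}+\mathrm{i}\boldsymbol{R})$, recasts the integral into $(2\pi)^{-d}\int \Phi_{\boldsymbol{X}_T}(\boldsymbol{y}+\mathrm{i}\boldsymbol{R})\,\widehat{P}(\boldsymbol{y}+\mathrm{i}\boldsymbol{R})\,d\boldsymbol{y}$, and multiplication by the discount factor $e^{-rT}$ recovers~\eqref{QOI} once the real part is extracted, which is permissible because $V$ is real valued by construction.

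The main technical obstacle is precisely this joint $L^1$ justification of Fourier inversion and Fubini under the weak (non-$L^2$) integrability hypotheses in use. The intersection $\delta_V = \delta_X \cap \delta_P$ is designed so that a single damping parameter $\boldsymbol{R}$ simultaneously enforces finiteness of the exponential moment $\Phi_{\boldsymbol{X}_T}(\mathrm{i}\boldsymbol{R})$, the $L^1$ decay in $\boldsymbol{y}$ of $\Phi_{\boldsymbol{X}_T}(\cdot+\mathrm{i}\boldsymbol{R})$, and the $L^1$ damping of the payoff; each of these three is used exactly once in the argument above, so the nonemptiness of $\delta_V$ is the essential hypothesis and cannot be relaxed without imposing stronger smoothness or decay elsewhere.
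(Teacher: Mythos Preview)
Your proposal is correct and follows essentially the same route as the paper's proof: damp the density by $e^{-\boldsymbol{R}^{\top}\boldsymbol{x}}$, invoke Fourier inversion on the damped density using the $L^1$ hypothesis on $\Phi_{\boldsymbol{X}_T}(\cdot+\mathrm{i}\boldsymbol{R})$, substitute into the expectation, swap the order of integration via Fubini with the product bound $\|P_{\boldsymbol{R}}\|_1\|\widehat{h}_{\boldsymbol{R}}\|_1<\infty$, and finish with the sign flip $\boldsymbol{y}\mapsto -\boldsymbol{y}$ together with the observation that $V$ is real. The only cosmetic differences are that the paper places the real part inside the inversion formula from the start and phrases the final symmetry step as ``evenness of Fourier transforms of real non-negative functions,'' whereas you extract $\Re[\cdot]$ at the end; your explicit remark that $h_{\boldsymbol{R}}\in L^1$ follows from $\Phi_{\boldsymbol{X}_T}(\mathrm{i}\boldsymbol{R})<\infty$ is a detail the paper leaves implicit.
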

\begin{proof} We refer the reader to Appendix \ref{sec:proof_of_fourier_valuation_formula} for the detailed proof. For continuous payoffs, the valuation formula \eqref{QOI} can be derived under less restrictive assumptions on the model; see \cite{bayer2023optimal}.
\end{proof}

From \eqref{QOI}, we define the integrand of interest as follows:
\begin{equation}
	g\left(\boldsymbol{y} ; \boldsymbol{R}, \boldsymbol{\Theta}_X, \boldsymbol{\Theta}_P\right) := (2 \pi)^{-d} e^{-r T} \Re [  \Phi_{\boldsymbol{X}_{T}}(\boldsymbol{y}+\mathrm{i} \boldsymbol{R}) \widehat{P}(\boldsymbol{y}+\mathrm{i} \boldsymbol{R}) ], \boldsymbol{y} \in \mathbb{R}^d, \boldsymbol{R} \in \delta_V.
	\label{g_integrand}
\end{equation}

Bayer et al. \cite{bayer2023optimal} proposed a rule for the choice of the damping parameters, $\boldsymbol{R}$, that leads to a regular integrand, and numerical evidence shows that their rule can accelerate the convergence of numerical quadrature methods significantly. Their proposed rule  is given by:
\begin{equation}
	\label{eq:optimal_damping_rule}
	\mathbf{R}^* =\underset{\mathbf{R} \in \delta_V}{\arg \min } \; g\left(\mathbf{0}; \mathbf{R}, \boldsymbol{\Theta}_X, \boldsymbol{\Theta}_P\right) .
\end{equation}
The advantage of this rule is that the numerical computation of the optimal damping parameters is very fast (in the order of milliseconds), and it works for a wide range of payoff functions and asset-price dynamics; we refer to \cite{bayer2023optimal} for more details. In the remainder of the paper, we use the values of the damping parameters calculated according to rule \eqref{eq:optimal_damping_rule}.

The integrand in \eqref{g_integrand} is analytic along the strip $\boldsymbol{z} = \boldsymbol{y} + \mathrm{i} \boldsymbol{R} \in \mathbb{R}^d + \mathrm{i} \delta_V \subseteq \mathbb{C}^d$. For examples of strips of analyticity we refer to Tables \ref{table:strip_table} and \ref{table:payoff_strip_table}. This observation motivates the use of quadrature methods that leverage the analyticity to enhance the convergence rates compared to traditional approaches, such as the MC method \cite{bayer2023optimal}. In addition, we suggest the use of the RQMC quadrature in the Fourier space to address the integration problem efficiently in high dimensions. Section~\ref{sec:qmc_intro} introduces the RQMC method, and Section~\ref{sec:RQMC_fourier_pricing} explains the necessary transformations to implement the RQMC method in the Fourier setting. Next, Section~\ref{sec: num_exp_results} presents the benefits of adopting this approach through concrete examples involving basket and rainbow options under various pricing models.

\subsection{RQMC Method}
\label{sec:qmc_intro}
This section introduces the RQMC method.  The QMC estimator of an integral of a function, $f:[0,1]^d \mapsto \mathbb{R}$, is an $N$-point equal-weighted quadrature rule  denoted by $Q^{QMC}_{N, d}[\cdot]$, and reads as:
\begin{equation}
	\label{qmc_estimator}
	I[f] := \int_{[0,1]^d} f(\boldsymbol{u}) \mathrm{d}\boldsymbol{u} \approx Q^{QMC}_{N,d}[f]:= \frac{1}{N}\sum_{n = 1}^N  f(\boldsymbol{u}_n),
	\end{equation}
	where $\boldsymbol{u}_1,\ldots,\boldsymbol{u}_N$ is a set of deterministic LD sequences (e.g., Sobol, Niederreiter, Halton, Hammersley, and Faure, see \cite{dick2013high}), $\boldsymbol{u}_n \in [0,1]^d, n \in \{ 1,\ldots, N\}$. The advantage of \eqref{qmc_estimator} compared to the MC estimator is that the points $\{\boldsymbol{u}_n  \}_{n = 1}^N$ are generated to ensure the more uniform coverage of $[0,1]^d$. Consequently, the estimator \eqref{qmc_estimator} may achieve a convergence rate of order $\mathcal{O}(N^{-1 + \epsilon_r})$ \cite{owen2006halton}, with $\epsilon_r> 0$, depending on the regularity of $f(\cdot)$ and the dimension, $d$, of the domain.  In contrast, MC points are sampled randomly and independently and may cluster and miss important regions of the integrand, unless importance sampling techniques are employed \cite{glasserman2004monte}. Nonetheless, the shortcoming of the estimator in \eqref{qmc_estimator} is that the central limit theorem cannot be directly applied to obtain probabilistic error estimates as in the case of the MC method. The points $\{\boldsymbol{u}_n \}_{n = 1}^N$ are not sampled independently. In \cite{hlawka1961funktionen}, a deterministic error bound for the estimator in \eqref{qmc_estimator} was derived, known as the Koksma--Hlawka inequality. 
	This error bound is usually impractical because its computation involves the integration of the mixed first partial derivatives of the integrand, which can be more difficult than solving the original problem. As a remedy, a randomized variant of the QMC estimator \eqref{qmc_estimator} was introduced (see \cite{l2002recent}), called the RQMC estimator, and it is given by:
	\begin{equation}
		\label{eq:RQMC_estimator}
		Q^{RQMC}_{N,S,d}[f] := \frac{1}{S} \sum_{s = 1}^S  \frac{1}{N}\sum_{n = 1}^N f( \boldsymbol{u}_n^{(s)} ),
	\end{equation}
where $\{ \boldsymbol{u}_n\}_{n = 1}^N$ is the sequence of deterministic QMC points, and for $n =1,\ldots N$, $\{ \boldsymbol{u}_n^{(s)} \}_{s =1}^S$ is obtained by the appropriate randomization of $\{ \boldsymbol{u}_n\}_{n = 1}^N$, such that $ \boldsymbol{u}^{(s)}_n \sim \mathcal{U}([0,1]^d) $. For fixed $n=1,\ldots, N$, $ \boldsymbol{u}_n^{(s)}$ are independent for any $s = 1, \ldots, S$. An additional rationale to apply the RQMC estimator is that the set of points $\{ \boldsymbol{u}_n^{(s)} \}_{s =1}^S$ yields an unbiased estimator \eqref{eq:RQMC_estimator} (i.e., $\mathbb{E}[Q^{RQMC}_{N, S, d}[f]] = I[f]$). Several randomization methods exist with different theoretical guarantees (for an overview of the most popular methods, see \cite{l2002recent}). This work adopts Sobol sequences \cite{sobol2011construction} with digital shifting for the randomization \cite{cranley1976randomization}. 
Finally, the randomization of the LD points enables the derivation of the root mean squared error of the estimator, given by \cite{dick2013high}
	\begin{equation}
		\label{rqmc_error}
		\mathcal{E}^{RQMC}_{N,S,d}[f]: = \frac{C_{\alpha}}{\sqrt{S}} \sqrt{    \frac{1}{S-1} \sum_{s= 1}^S \left(   \frac{1}{N} \sum_{n = 1}^N f( \boldsymbol{u}^{(s)}_n )  - Q^{RQMC}_{N,S,d}[f]   \right)^2    },
	\end{equation} 
	where $C_{\alpha}$ denotes the $(1-\frac{\alpha}{2})$-quantile of the standard normal distribution for a confidence level $0< \alpha \ll1$. In this paper, we work with $C_{\alpha} = 1.96$, corresponding to a $95 \%$ confidence interval. Moreover, \eqref{rqmc_error} reveals that the statistical error can be controlled by the number of digital shifts, $S$, which we apply in the order of $S = 30$ to compute the error estimate. 
	
  There is growing literature on the application of the QMC method to singular/unbounded integrands which arise, for instance, when dealing with integrands on unbounded domains \cite{owen2006halton,kuo2006randomly,ouyang2024achieving,liu2023nonasymptotic}. Owen \cite{owen2006halton} analysed the asymptotic rate of convergence of the RQMC method for integrands that are singular at the boundary of $[0,1]^d$, satisfying the following boundary growth condition:
\begin{equation}		
	\label{eq:owen_bgc}
		\left|\partial^{\boldsymbol{k}} f(\boldsymbol{u})\right| \leq B \prod_{j=1}^d \min \left(u_j, 1-u_j\right)^{-A_j-1_{j \in \boldsymbol{k}}},	
\end{equation}
where $\partial^{\boldsymbol{k}} f(\boldsymbol{u}) = \prod_{j \in \boldsymbol{k}} \left(\frac{ \partial f(\boldsymbol{u})}{\partial u_j} \right)$, for some $A_j > 0$,  some $B < \infty$, and for all $\boldsymbol{k} \subset \mathbb{I}_d$.  He proved that if $\boldsymbol{u}_1, \ldots \boldsymbol{u}_N$  are randomly sampled LD points on  $[0,1]^d$, with $\boldsymbol{u}_n \sim \mathcal{U}([0,1]^d)$ for $n \in \{1,\ldots,N\}$,  and  star discrepancy\footnote{We refer to \cite{dick2013high} for the definition of the star discrepancy.} satisfying $\mathbb{E}\left(D_n^*\left(x_1, \ldots, x_n\right)\right)=\mathcal{O}\left(N^{-1+\epsilon_r}\right)$ for all $\epsilon_r>0$, then the asymptotic rate of convergence of the RQMC method is given by:
\begin{equation}
	\label{eq:owen_rqmc_rate}
	\mathbb{E}\left[ \mid  I[f] - 	Q^{RQMC}_{N,S,d}[f] \mid \right] = \mathcal{O}\left(N^{-1 + \max_j A_j + \epsilon_r}\right).
\end{equation}
Equation \eqref{eq:owen_rqmc_rate} shows that the RQMC method may converge even for singular integrands and have a better rate than the MC method if $\max_j A_j < \frac{1}{2}$. However, the rate of convergence of RQMC is significantly impacted by the rate of growth of the integrand at the boundary, described by the condition on the rate of growth of the mixed first partial derivatives of the integrand in \eqref{eq:owen_bgc}. Moreover, Owen pointed out that the implied constant implicit in the $\mathcal{O}(\cdot)$ factor in \eqref{eq:owen_rqmc_rate} can blow up for unbounded integrands, deteriorating the performance of RQMC.  

All the aforementioned works lead to the same conclusion, that the stronger the singularity at the boundary, the worse the error rates of RQMC. However, the application of appropriate importance sampling \cite{ouyang2024achieving,liu2023nonasymptotic}, which in our setting relates to an appropriate domain transformation (see Section \ref{sec:RQMC_fourier_pricing}), can lead to improved  convergence.
	


Finally, despite the advantages of using RQMC to obtain  computable error estimates, this method suffers from a significant drawback, which may impede its application, especially for high-dimensional problems. Specifically, the construction of QMC points is constrained to simple geometries, such as the hypercube, $[0,1]^d$; hence, transformations of the original domains to the hypercube are necessary to address unbounded integrals in the form of \eqref{QOI}, as we discuss in Section~\ref{sec:RQMC_fourier_pricing}.
\section{Efficient Domain Transformation for RQMC in  Fourier Pricing}
\label{sec:RQMC_fourier_pricing}
\label{sec:RQMC_fourier_pricing}
{
	This section explains how the RQMC estimator, defined in \eqref{eq:RQMC_estimator}, can be applied to approximate the Fourier integral defined on the unbounded domain $\mathbb{R}^d$. Section~\ref{sec:general_formulation} details the general approach to transforming the pricing problem defined in Proposition~\ref{prop:Multivariate Fourier pricing valuation formula} from $\mathbb{R}^d$ to $[0,1]^d$. Section~\ref{sec:model_spec_dom_transf} develops a domain transformation strategy for the three decay classes defined in Table~\ref{tab:identification}, namely light-tailed, semi-heavy-tailed, and heavy-tailed characteristic functions. These classes are illustrated by the GBM, GH/NIG, and VG models, respectively. We note that the same logic of the domain transformation applies for the general computation of deterministic integrals on $\mathbb{R}^d$ using QMC, and is not limited to option pricing applications.

\subsection{General Formulation}
\label{sec:general_formulation}

The problem we address is the computation of a deterministic integral over $\mathbb{R}^d$ in the form of:

\begin{equation}
	\label{eq:deterministic_integral}
	\int_{\mathbb{R}^d} g(\boldsymbol{y}) \mathrm{d} \boldsymbol{y}
\end{equation}

where $g(\cdot)$ is given in \eqref{g_integrand}. Motivated by the smoothness of the integrand, we aim to apply the RQMC method to evaluate the integral in \eqref{eq:deterministic_integral}. However, the main QMC constructions are restricted to the generation of LD point sets on $[0,1]^d$ (for a comprehensive survey, see [41]). Consequently, an appropriate integral transformation is necessary to apply RQMC to the unbounded domain, $\mathbb{R}^d$. In the first step, we express the integral as an expectation with respect to a probability density function  (PDF) $\psi_{\boldsymbol{Y}}(\cdot)$ as follows
\begin{equation}
	\int_{\mathbb{R}^d} g(\boldsymbol{y}) \mathrm{d} \boldsymbol{y} = 	\int_{\mathbb{R}^d} \frac{g(\boldsymbol{y})}{\psi_{\boldsymbol{Y}}(\boldsymbol{y})} \psi_{\boldsymbol{Y}}(\boldsymbol{y}) \mathrm{d} \boldsymbol{y},
\end{equation}
where $\boldsymbol{Y} $ is an $\mathbb{R}^d$-valued random vector. Then, we perform a domain transformation to map the integral from $\mathbb{R}^d$ to the hypercube, which we describe in the next paragraph.

In this work, we distinguish three cases for the domain transformation i) $\boldsymbol{Y}$ can be expressed in the normal variance-mean mixture form ii) $\boldsymbol{Y}$ is normally distributed, iii) $\boldsymbol{Y}$ has independent components. To illustrate the main idea, we introduce the different transformations and refer to Section 3.2 for a more detailed explanation. In the first case i), we have that $\boldsymbol{Y} \stackrel{d}{=} \tilde{\boldsymbol{\mu}}+W^{\frac{\gamma}{2}} \boldsymbol{N},{ }^5$ with $\gamma \in\{-1,1\}, W>0$ a scalar mixing RV, $\boldsymbol{N} \stackrel{d}{=} \tilde{\boldsymbol{\mu}}+W^{\frac{\gamma}{2}} \tilde{\boldsymbol{L}} \boldsymbol{Z}$ follows a $d$-variate normal distribution with mean $\tilde{\boldsymbol{\mu}}$ and covariance matrix $\tilde{\boldsymbol{\Sigma}}, \boldsymbol{Z}$ follows the $d$-variate standard normal distribution with $\tilde{\boldsymbol{L}}$ being the square root matrix of $\tilde{\boldsymbol{\Sigma}}$. In what follows, we set $\tilde{\boldsymbol{\mu}}=\mathbf{0}$, and we denote by $\Psi_{\boldsymbol{Y}}^{-1}(\boldsymbol{u})=\left(\Psi_{Y_1}^{-1}\left(u_1\right), \ldots, \Psi_{Y_d}^{-1}\left(u_d\right)\right), \boldsymbol{u} \in[0,1]^d$, the componentwise application of the inverse cumulative distribution function (ICDF) of each component of $\boldsymbol{Y}$. The domain trasformation can then be achieved by the following transformation function
\begin{equation}
	\label{eq:general_transformation_function}
	\begin{aligned}
		\mathcal{T}_{\text{mix}} \colon [0,1]^{d+1} &\mapsto \mathbb{R}^d \times \mathbb{R}^+ \\
		\boldsymbol{u} &\mapsto	\mathcal{T}_{\text{mix}}(\boldsymbol{u}) := \left(
		(\Psi_W^{-1}(u_{d+1}))^{\frac{\gamma}{2}} \tilde{\boldsymbol{L}} \Psi_{\boldsymbol{Z}}^{-1}(\boldsymbol{u}_{1: d}), \,
		\Psi_W^{-1}(u_{d+1})
		\right),
	\end{aligned}
\end{equation}
The mapping \eqref{eq:general_transformation_function} leads to the following $(d+1)$-dimensional integration problem:
\begin{equation}
	\begin{aligned}
		\label{eq:dom_transf_general}
		\int_{\mathbb{R}^d} 	
		\frac{g(\boldsymbol{y})}{\psi_{\boldsymbol{Y}}(\boldsymbol{y})} \psi_{\boldsymbol{Y}}(\boldsymbol{y}) \mathrm{d} \boldsymbol{y} &= \int_{\mathbb{R}^d}  \frac{g(\boldsymbol{y})}{\psi_{\boldsymbol{Y}}(\boldsymbol{y})} \left(   \int_{\mathbb{R}^+} w^{\frac{\gamma d}{2}} \psi_W(w) \psi_{\boldsymbol{N}}\left(w^{\frac{\gamma}{2}} \boldsymbol{y}  \right) \mathrm{d} w \right) \mathrm{d}\boldsymbol{y} \\
		& = \int_{\mathbb{R}^d} 	\int_{\mathbb{R}^+} \underbrace{\frac{g\left(w^{-\frac{\gamma}{2}} \boldsymbol{y}\right)}{\psi_{\boldsymbol{Y}}\left(w^{-\frac{\gamma}{2}} \boldsymbol{y}\right)} \psi_W(w) \psi_{\boldsymbol{N}}(\boldsymbol{y})}_{g_{\psi}(\boldsymbol{y}, w)}  \mathrm{d}\boldsymbol{y}  \mathrm{d}w\\
		& = \int_{[0,1]^{d+1}}\underbrace{ g_{\psi} \circ \mathcal{T}_{\text{mix}}(\boldsymbol{u}) \left|\operatorname{det} (J_{\mathcal{T}_{\text{mix}}}(\boldsymbol{u})) \right| }_{:= \tilde{g}_{\mathcal{T}_{\text{mix}}}(\boldsymbol{u})} \mathrm{d} \boldsymbol{u}. \\
	\end{aligned}
\end{equation}

	In contrast, in the remaining cases, the domain transformation results in the following $d$-dimensional integration problem:

\begin{equation}
	\label{eq:transformed_fourier_pricing_integral}
	\begin{aligned}
		\int_{\mathbb{R}^d}  
		\frac{g(\boldsymbol{y})}{\psi_{\boldsymbol{Y}}(\boldsymbol{y})} \psi_{\boldsymbol{Y}}(\boldsymbol{y}) \, \mathrm{d} \boldsymbol{y} 
		& = \int_{[0,1]^{d}} \underbrace{g \circ \mathcal{T}(\boldsymbol{u}) \left|\det \left(J_{\mathcal{T}}(\boldsymbol{u})\right) \right| }_{\tilde{g}_{\mathcal{T}}(\boldsymbol{u})} \, \mathrm{d} \boldsymbol{u}, \quad \mathcal{T}: [0,1]^{d} \mapsto \mathbb{R}^d. \\
	\end{aligned}
\end{equation}
{
	The transformation maps used in cases ii) and iii) are defined by

	\begin{equation}
		\label{eq:T_nor_T_ind}
		\mathcal{T}_{\mathrm{nor}}
		:=
		\tilde{\boldsymbol{L}}\Psi_{\boldsymbol{Z}}^{-1},
		\qquad
		\mathcal{T}_{\mathrm{ind}}(\boldsymbol{u})
		:=
		\Psi^{-1}_{\boldsymbol{Y}}(\boldsymbol{u}),
		\qquad
		\boldsymbol{u}\in[0,1]^d .
	\end{equation}

	In \eqref{eq:transformed_fourier_pricing_integral}, $\mathcal{T}=\mathcal{T}_{\mathrm{nor}}$ in case ii), while $\mathcal{T}=\mathcal{T}_{\mathrm{ind}}$ in case iii). The latter case is the one that is typically treated in the literature.

The primary challenge of the domain transformation in \eqref{eq:transformed_fourier_pricing_integral}  is that it often results in integrands that are unbounded near the boundary \cite{owen2006halton}. In fact, since $\lim_{u_j \to \{0,1\}} \mathcal{T}(\boldsymbol{u}) = \pm \infty$, then we have that $\lim_{u_j \to \{0,1\}}  g \circ \mathcal{T}(\boldsymbol{u}) = 0$ for any $j \in \mathbb{I}_d$. Moreover, in the considered cases we have that $\lim_{u_j \to \{0,1\}} \left|\det \left(J_{\mathcal{T}}(\boldsymbol{u})\right) \right| \to \infty$.\footnote{Through symmetry; the same argument applies when $u_j \to 0$ or $u_j \to 1$, hence, in the remainder of the paper, we choose to study the limiting behavior when $u_j \to 0$.} For instance, $\lim_{u_j \to \{0,1\}} \left|\det \left(J_{\mathcal{T_{\text{ind}}}}(\boldsymbol{u})\right) \right|  = \lim_{u_j \to \{0,1\}} \frac{1}{\psi_{\boldsymbol{Y}} \circ \Psi_{\boldsymbol{Y}} ^{-1}(\boldsymbol{u})} = +\infty$.  Consequently, in all cases i), ii), and iii), depending on the choice of $\psi_{\boldsymbol{Y}}(\cdot)$ and its parameters, the resulting transformed integrand, $\tilde{g}(\cdot)$,\footnote{In the remainder of the paper we refer to $\tilde{g}_{\mathcal{T}}(\cdot)$ by $\tilde{g}(\cdot)$ to simplify the notation.} can be singular at the boundaries of $[0,1]^d$, which would deteriorate the rate of convergence of RQMC, as explained in Section \ref{sec:qmc_intro}.    
Therefore, it is critical to find an appropriate choice of $\psi_{\boldsymbol{Y}}(\cdot)$ that ensures $\tilde{g}(\cdot)$ has no singularities.

Compared to our setting, most of the literature \cite{kuo2006randomly,kuo2011quasi, nichols2014fast, kuo2010randomly,ouyang2024achieving} addresses the weighted integration problem \eqref{eq:literature_integrand}, which takes a different form than the problem \eqref{eq:transformed_fourier_pricing_integral}, and is given by:
\begin{equation}
	\label{eq:literature_integrand}
	\int_{\mathbb{R}^d} g( \boldsymbol{y}) \left( \prod_{j = 1}^d \rho(y_j) \right)  \mathrm{d}y_1 \ldots \mathrm{d} y_d,
\end{equation}
where $\rho(\boldsymbol{y}) = \prod_{j = 1}^d \rho(y_j)$ is the joint PDF of independent RVs (case (iii)). Applying the change of variable $\boldsymbol{y} = \mathcal{T}_{\text{ind}}(\boldsymbol{u})$ to \eqref{eq:literature_integrand} results in the following
\begin{equation}
	\label{eq:literature_transformed_integrand}
	\int_{\mathbb{R}^d} g( \boldsymbol{y}) \left( \prod_{j = 1}^d \rho(y_j) \right)  \mathrm{d}y_1 \ldots \mathrm{d} y_d = \int_{[0,1]^d} g \circ \Psi_{\boldsymbol{Y}}^{-1}(\boldsymbol{u}) \left( \prod_{j = 1}^d \frac{\rho \circ \Psi_{\boldsymbol{Y}}^{-1}(u_j)}{ \psi_{\boldsymbol{Y}}\circ \Psi_{\boldsymbol{Y}}^{-1}(u_j)} \right) \mathrm{d} u_1\ldots \mathrm{d}u_d.
\end{equation}
The computation of the $\rho$-weighted integral in \eqref{eq:literature_transformed_integrand} is simpler than the integration in \eqref{eq:transformed_fourier_pricing_integral} due to more constrained choices of $\psi_{\boldsymbol{Y}}(\cdot)$ in the latter setting. Indeed, the set of density functions $\psi_{\boldsymbol{Y}}(\cdot)$ that ensure $\frac{g \circ \Psi_{\boldsymbol{Y}}^{-1}}{\psi_{\boldsymbol{Y}} \circ \Psi_{\boldsymbol{Y}}^{-1}} \in L^1(\mathbb{R}^d)$ is a subset of those guaranteeing $\frac{(g \rho) \circ \Psi_{\boldsymbol{Y}}^{-1}}{\psi_{\boldsymbol{Y}} \circ \Psi_{\boldsymbol{Y}}^{-1}} \in L^1(\mathbb{R}^d)$, particularly when $\rho(\cdot)$ decays rapidly (e.g., a Gaussian density).

	Our objective is to select a transformation density $\psi_{\boldsymbol{Y}}(\cdot)$ that ensures the transformed integrand $\tilde{g}(\cdot)$ remains well-behaved. Specifically, we aim to achieve $\lim_{u_j \to \{0,1\}} \tilde{g}(\boldsymbol{u}) = 0$ for any $j \in \mathbb{I}_d$, which requires comparing the asymptotic decay of $g(\cdot)$ and $\psi_{\boldsymbol{Y}}(\cdot)$.
	
	For $\boldsymbol{R}\in\delta_V\subseteq\delta_P$, Assumption~\ref{ass:Assumptions on  the payoff} implies

	\begin{equation}
		\label{eq:payoff_transform_bound}
		\left|
		\widehat{P}(\boldsymbol{y}+\mathrm{i}\boldsymbol{R})
		\right|
		\leq
		\int_{\mathbb{R}^d}
		e^{\boldsymbol{R}^{\top}\boldsymbol{x}}
		|P(\boldsymbol{x})|
		\,\mathrm{d}\boldsymbol{x}
		=:C_P(\boldsymbol{R})<\infty,
		\qquad
		\boldsymbol{y}\in\mathbb{R}^d .
	\end{equation}

	Consequently, by the definition of $g$ in \eqref{g_integrand}, we obtain

	\begin{equation}
		\label{eq:ratio_controls_full_integrand}
		\left|
		\frac{
			g(\boldsymbol{y};\boldsymbol{R},\boldsymbol{\Theta}_X,\boldsymbol{\Theta}_P)
		}{
			\psi_{\boldsymbol{Y}}(\boldsymbol{y})
		}
		\right|
		\leq
		(2\pi)^{-d}e^{-rT}
		C_P(\boldsymbol{R})
		\left|
		\frac{
			\Phi_{\boldsymbol{X}_T}(\boldsymbol{y}+\mathrm{i}\boldsymbol{R})
		}{
			\psi_{\boldsymbol{Y}}(\boldsymbol{y})
		}
		\right|.
	\end{equation}
}
{
	Thus, \eqref{eq:ratio_controls_full_integrand} shows that controlling the ratio between the extended characteristic function and the transformation density is a sufficient condition for controlling the boundary growth of the full transformed Fourier integrand. This motivates the general ratio
}
{
	\begin{equation}
		\label{eq:general_boundary_ratio}
		r(\boldsymbol{y};\boldsymbol{R})
		:=
		\frac{
			\Phi_{\boldsymbol{X}_T}(\boldsymbol{y}+\mathrm{i}\boldsymbol{R})
		}{
			\psi_{\boldsymbol{Y}}(\boldsymbol{y})
		},
		\qquad
		\boldsymbol{y}\in\mathbb{R}^d,
		\quad
		\boldsymbol{R}\in\delta_V .
	\end{equation}
}
{
	In Sections~\ref{sec:gbm_dom_transf}, \ref{sec:nig_dom_transf}, and~\ref{sec:vg_dom_transf}, the ratio in \eqref{eq:general_boundary_ratio} is analyzed for representative pricing models and transformation densities. The transformation density $\psi_{\boldsymbol{Y}}(\cdot)$ is chosen so that its tail is not lighter than the leading decay of the extended characteristic function. For dependent asset prices, $\psi_{\boldsymbol{Y}}(\cdot)$ is chosen to reflect the dependence structure in $\Phi_{\boldsymbol{X}_T}(\cdot)$. Since the standard component-wise mapping $\mathcal{T}_{\mathrm{ind}}(\cdot)$ defined in \eqref{eq:T_nor_T_ind} cannot be naturally extended to dependent $\boldsymbol{Y}$ due to the non-invertibility of the associated ICDF, we use the transformations $\mathcal{T}_{\mathrm{mix}}(\cdot)$ and $\mathcal{T}_{\mathrm{nor}}$ defined in \eqref{eq:general_transformation_function} and \eqref{eq:T_nor_T_ind}, respectively. Table~\ref{table:chf_table} provides examples of characteristic functions treated in this paper.
}

\FloatBarrier
\begin{table}[h]
	\centering
	\begin{tabular}{| p{1.3cm} | p{9.5cm} | }
		\hline   \textbf{Model} & $\phi_{\boldsymbol{X}_T}(\boldsymbol{z}), \boldsymbol{z} \in \mathbb{C}^d, \: \Im[\boldsymbol{z}] \in \delta_X$ \\
		\hline
		\textbf{GBM} &   \small $  \exp \left( -\frac{T}{2}  \boldsymbol{z}^{\top}\boldsymbol{\Sigma} \boldsymbol{z}\right)$  \\ 
		\hline
		\textbf{NIG}  & \small $ \exp \left(\delta T\left(\sqrt{\alpha^{2}-\boldsymbol{\beta}^{\top} \boldsymbol{\Delta} \boldsymbol{\beta}}-\sqrt{\alpha^{2}-(\boldsymbol{\beta}+ \mathrm{i} \boldsymbol{z})^{\top}\boldsymbol{\Delta}(\boldsymbol{\beta}+ \mathrm{i} \boldsymbol{z})}\right)\right)$    \\
		\hline
		
		\textbf{GH} & \small $\left(\frac{\alpha^2-\boldsymbol{\beta}^{\top} \boldsymbol{\Delta} \boldsymbol{\beta}}{\alpha^2  -(\boldsymbol{\beta}+ \mathrm{i} \boldsymbol{z})^{\top}\boldsymbol{\Delta}(\boldsymbol{\beta}+ \mathrm{i} \boldsymbol{z})}\right)^{\lambda / 2} \frac{\mathrm{~K}_\lambda\left(\delta T \sqrt{\alpha^2-(\boldsymbol{\beta}+ \mathrm{i} \boldsymbol{z})^{\top}\boldsymbol{\Delta}(\boldsymbol{\beta}+ \mathrm{i} \boldsymbol{z})}\right)}{\mathrm{K}_\lambda\left(\delta T \sqrt{\alpha^2-\boldsymbol{\beta}^{\top} \boldsymbol{\Delta} \boldsymbol{\beta}}\right)}$  \\
		\hline
		\textbf{VG} & \small $ \left(1-\mathrm{i} \nu  \boldsymbol{z}^{\top}  \boldsymbol{\theta}+\frac{1}{2} \nu \boldsymbol{z}^{\top} \boldsymbol{\Sigma} \boldsymbol{z}\right)^{-T / \nu} $\\
		\hline
	\end{tabular}
	\caption{The extended characteristic function of the pricing models is given by $\Phi_{\boldsymbol{X}_T}(\boldsymbol{z}) = \exp \left(\mathrm{i} \boldsymbol{z}^{\top} (\boldsymbol{X}_{0} + (r + \boldsymbol{\mu}) T) \right) \phi_{\boldsymbol{X}_T}(\boldsymbol{z})$. $\boldsymbol{\mu} $ is the martingale correction term, defined for each model in the Appendix \ref{sec:pricing_models}, $r$ is the risk-free interest rate. More details on the parameters of each model are provided in the Appendix~\ref{sec:pricing_models}. Table~\ref{table:strip_table} provides the strip of analyticity, $\delta_X$, for each of the characteristic functions.   }	
	\label{table:chf_table}
\end{table}
\FloatBarrier

To summarize, in this work, we base the choice of $\psi_{\boldsymbol{Y}}(\cdot)$ on the following properties:   
\begin{itemize}
	\item We consider transformation densities, $\psi_{\boldsymbol{Y}}(\cdot)$, supported on $\mathbb{R}^d$, that are smooth and symmetric around the origin to match the corresponding features of the original integrand \eqref{g_integrand} (a consequence of Fourier transform properties \cite{lukacs1970characteristic}). 
	\item We consider $\psi_{\boldsymbol{Y}}(\cdot)$, the PDF of a normal variance-mean mixture distribution. This choice provides a flexible framework to handle the dependence structures in our integration problem using $\boldsymbol{Y}$ with dependent components, see \eqref{eq:dom_transf_general}.
	{
		\item We select $\psi_{\boldsymbol{Y}}(\cdot)$ to asymptotically follow the same functional form as the extended characteristic function. Specifically, for fixed $\boldsymbol{R}\in\delta_V$ and $\boldsymbol{y}\in\mathbb{R}^d$, we use the three decay classes defined in Table~\ref{tab:identification}:
		\begin{itemize}
			\item Light-tailed:
			$
			|\Phi_{\boldsymbol{X}_T}(\boldsymbol{y}+\mathrm{i}\boldsymbol{R})|
			\leq
			C\exp(-\boldsymbol{y}^{\top}\boldsymbol{A}\boldsymbol{y}),
			$
			where $C>0$ and $\boldsymbol{A}\succ0$.
			
			\item Semi-heavy-tailed:
			$
			|\Phi_{\boldsymbol{X}_T}(\boldsymbol{y}+\mathrm{i}\boldsymbol{R})|
			\leq
			C\exp(-\gamma\sqrt{\boldsymbol{y}^{\top}\boldsymbol{A}\boldsymbol{y}}),
			$
			where $C>0$, $\boldsymbol{A}\succ0$, and $\gamma>0$.
			
			\item Heavy-tailed:
			$
			|\Phi_{\boldsymbol{X}_T}(\boldsymbol{y}+\mathrm{i}\boldsymbol{R})|
			\leq
			C(\boldsymbol{y}^{\top}\boldsymbol{A}\boldsymbol{y})^{-\gamma},
			$
			where $C>0$, $\boldsymbol{A}\succ0$, and $\gamma>d/2$.
		\end{itemize}
	}
	Table~\ref{table:chf_table} presents examples of the characteristic functions considered in this work.
	\item We select the parameters of $\psi(\cdot)$ to control the boundary growth of the transformed integrand in \eqref{eq:transformed_fourier_pricing_integral}. We separately derive the boundary growth conditions for three examples of pricing models, which are summarized in Tables \ref{tab:univariate_dom_tranf} and \ref{tab:multivariate_dom_tranf} (for more details, see Sections~\ref{sec:gbm_dom_transf}, \ref{sec:vg_dom_transf}, and \ref{sec:nig_dom_transf}).
\end{itemize}
Once the choice of $\psi(\cdot)$ (respectively $\Psi^{-1}(\cdot)$) is determined, the RQMC estimator of \eqref{eq:transformed_fourier_pricing_integral} can be expressed as follows:
$$
Q^{RQMC}_{N,S,d}[\tilde{g}] := \frac{1}{S} \sum_{s= 1}^S  \frac{1}{N}\sum_{n = 1}^N \tilde{g}( \boldsymbol{u}^{(s)}_n).
$$

To illustrate the importance of efficiently designing the domain transformation procedure, we show in Figure~\ref{fig:domain_transformation_2D_illustration} an example of a 1D put option (i.e., $P(X_T) = \max( K - e^{X_T}, 0)$) under the GBM model, and  compare the  original integrand on $\mathbb{R}$ to the transformed integrand on $[0,1]$ for different values of the scale parameter, $\tilde{\sigma}$, of a Gaussian proposal density, i.e., $\psi(y) =   \frac{1}{ \sqrt {2 \tilde{\sigma}^2 }} \exp(- \frac{y^2}{2\tilde{\sigma}^2})$. Figure~\ref{singular_integrand} indicates that, for $\tilde{\sigma} = 1$, the integrand is singular near the boundary, whereas the original integrand in Figure~\ref{original_integrand} is bounded and decays to zero at infinity. Moreover, for values of $\tilde{\sigma}  \geq 5$, the integrand decays to zero, with the rate of decay being faster the larger the value of $\tilde{\sigma}$. Section~\ref{sec:gbm_dom_transf} clarifies the reason for this decay and provides a strategy to control it.


\FloatBarrier
\begin{figure}[h!]
	\centering	
	\begin{subfigure}{0.4\textwidth}
		\includegraphics[width=\linewidth]{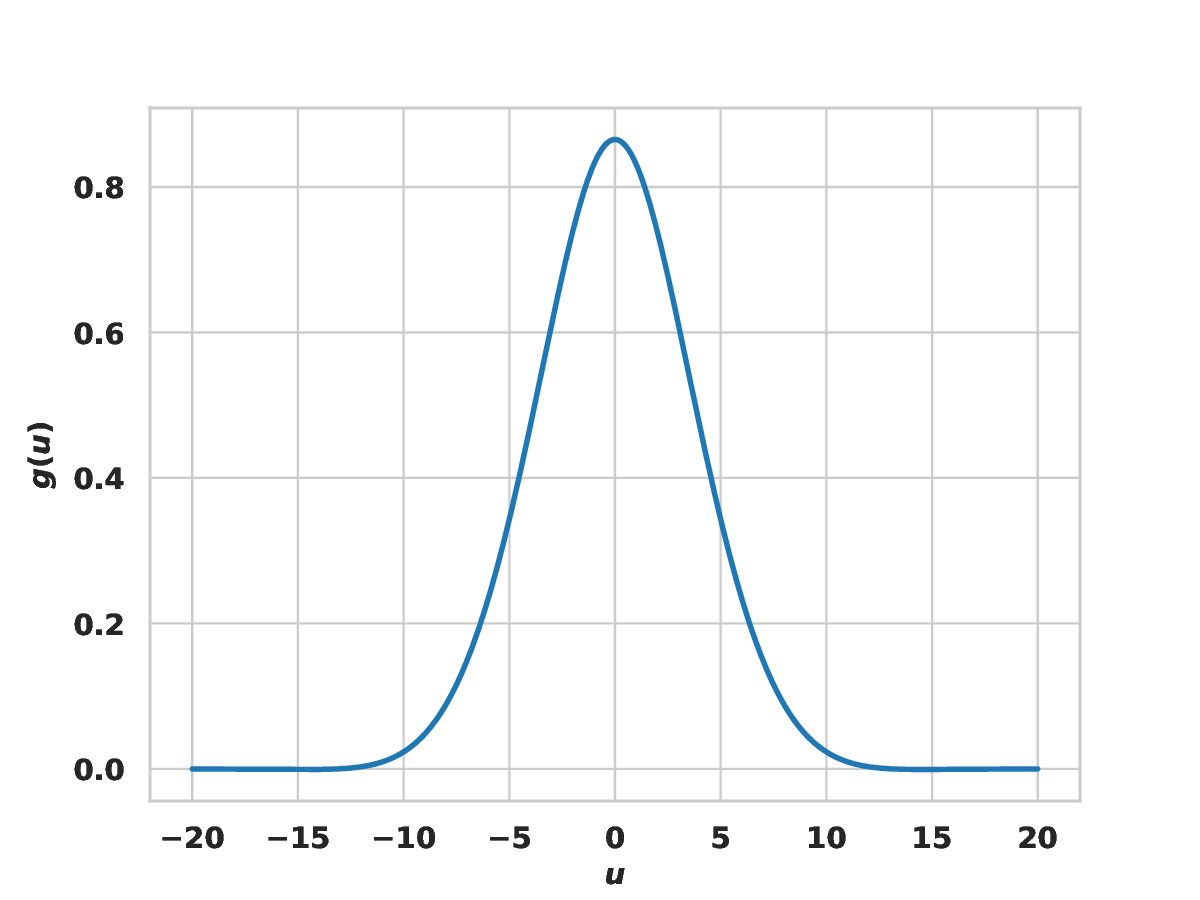}
		\caption{Original integrand in \eqref{g_integrand}}
		\label{original_integrand}
	\end{subfigure}
	\begin{subfigure}{0.4\textwidth}
		\includegraphics[width=\linewidth]{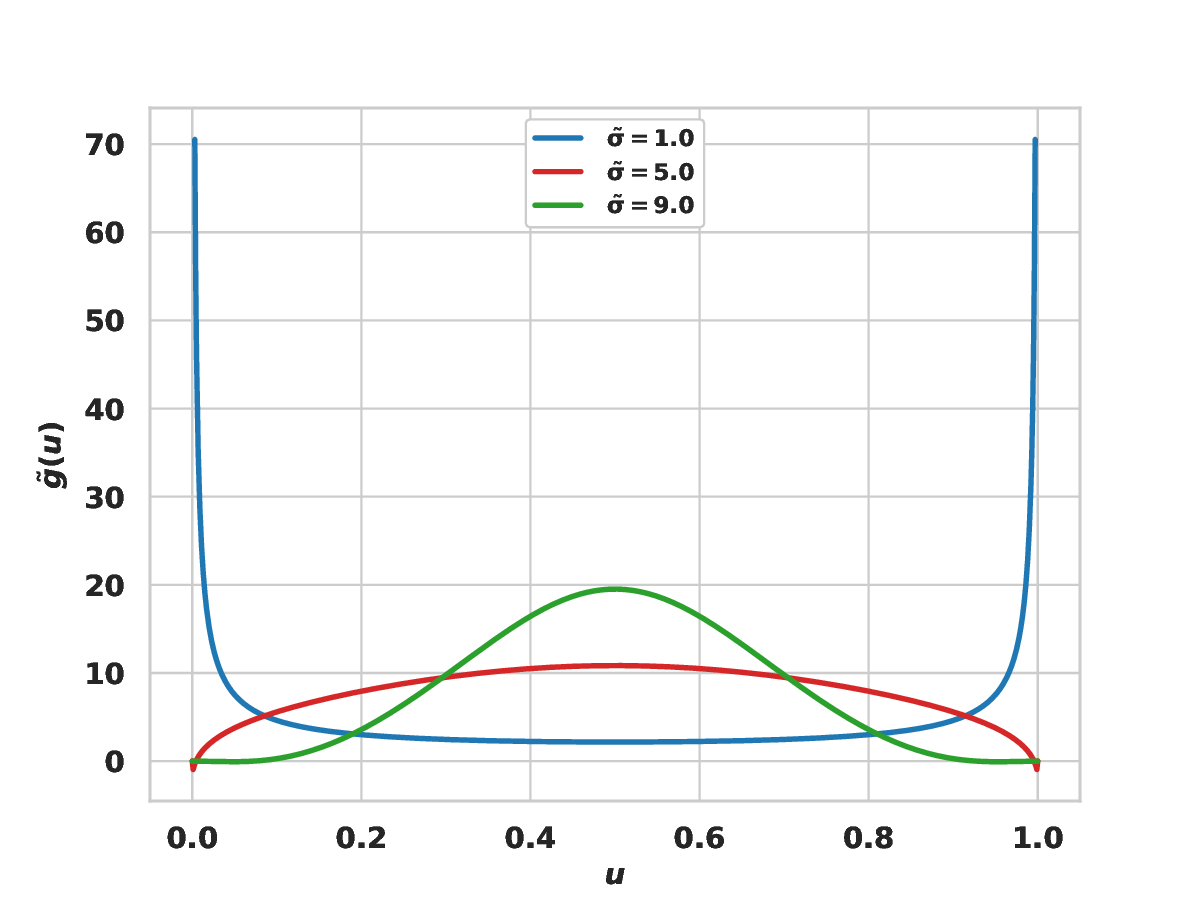}
		\caption{Transformed integrand in \eqref{eq:transformed_fourier_pricing_integral}}
		\label{singular_integrand}
	\end{subfigure}
	\caption[]{Effect of the domain transformation on the smoothness of the transformed integrand \eqref{eq:transformed_fourier_pricing_integral} for a 1D put option under GBM with volatility $\sigma = 0.2$. Gaussian density: $\psi(\cdot;\tilde{\mu}, \tilde{\sigma})$, mean: $\tilde{\mu} = 0$, scale: $\tilde{\sigma}$. The used parameters are: $K = S_0 = 100, r = 0, T = 1, R = 6.58$, }  
	\label{fig:domain_transformation_2D_illustration}
\end{figure}
\FloatBarrier

\subsection{Model-dependent Domain Transformation}
\label{sec:model_spec_dom_transf}
In this section, we develop a general domain transformation strategy to apply the RQMC method in the Fourier domain to a wide range of pricing models. We derive the boundary growth conditions on the transformed integrand for models with different classes of decay of the characteristic functions, namely, light-tailed, semi-heavy-tailed, and heavy-tailed, for the case of independent and dependent RVs, respectively.

\FloatBarrier
\subsubsection[Light-tailed characteristic functions: illustration for the GBM model]{Domain transformation for light-tailed characteristic functions: illustration for the GBM model} 
\label{sec:gbm_dom_transf}

\paragraph{Product-form domain transformation}
To simplify the first analysis, we consider the case in which the asset price processes are independent. Consequently, the characteristic function of the pricing model can be written as the product of univariate characteristic functions; hence, 
$$
	\phi^{GBM}_{\boldsymbol{X}_T}(\boldsymbol{z}) = \prod_{j = 1}^d \phi^{GBM}_{X^j_T}(z_j), \boldsymbol{z}\in \mathbb{C}^d, \Im[\boldsymbol{z}] \in \delta^{GBM}_X,
$$
where
$$
	\phi^{GBM}_{X^j_T}(z_j) :=  \operatorname{exp}\left(- \frac{\sigma_j^2T}{2}z_j^2\right), z_j \in \mathbb{C}, \Im[z_j] \in \delta^{GBM}_X.
$$
Because $\phi^{GBM}_{X^j_T}(z_j)$ is a Gaussian function, it is natural to consider a Gaussian domain transformation density in the form of $\psi^{nor}(\boldsymbol{y}) = \prod_{j = 1}^d \psi^{nor}_j(y_j)$, where
$$
\psi^{nor}_j(y_j) := \frac{ \exp(- \frac{y_j^2}{2\tilde{\sigma}^2_j})}{ \sqrt {2\pi \tilde{\sigma}_j^2 }}, \quad  y_j \in \mathbb{R}, \; \tilde{\sigma}_j > 0.
$$
After specifying the functional form of $\psi^{nor}_j(\cdot)$, the aim is to determine an appropriate choice of the parameters $\{\tilde{\sigma}_j\}_{j = 1}^d$. The function $r^{GBM}_{nor,j}(\cdot)$ is defined as the ratio of the characteristic function of the variable $X^j_T$ and the proposed density $\psi^{nor}_j(\cdot)$:
$$
	r_{nor,j}^{GBM}(\Psi_{nor}^{-1}(u_j)) :=  \frac{ \phi^{GBM}_{X^j_T}\left(\Psi_{nor}^{-1}(u_j)+  \mathrm{i} R_j\right)}{	\psi^{nor}_j\left(\Psi_{nor}^{-1}(u_j)\right)}, \quad u_j \in [0,1], \boldsymbol{R} \in \delta^{GBM}_V.
$$
The parameters $\{\tilde{\sigma}_j\}_{j = 1}^d$ should be set to control the growth of the function $	r_{nor,j}^{GBM}(\cdot)$ near the boundary of $[0,1]$ as follows:
$$
	\lim_{u_j \to \{0,1\} } \mid	r_{nor,j}^{GBM}(\Psi_{nor}^{-1}(u_j)) \mid < \infty , \; \forall \; j \in 	\mathbb{I}_d. 
$$

We recall that by symmetry of the integrand $g(\cdot)$ and the proposal density $\psi^{nor}(\cdot)$ around the origin, it is sufficient to study the behavior of the transformed integrand as $u_j \to 0$ to ensure that it is also controlled for   $u_j \to 1$ for all $j \in \mathbb{I}_d$, and hence it is well-behaved on all the $2d$ faces of $[0,1]^d$.  To determine the suitable range of parameters $\{\tilde{\sigma}_j\}_{j = 1}^d$, we replace the characteristic function and the proposed density with their explicit expressions. Thus, $	r_{nor,j}^{GBM}(\cdot)$ can be written as follows:
$$
		\small
	\hspace{-1.5cm}
	\begin{aligned}
		\label{univariate_gbm:_ratio}
	r_{nor,j}^{GBM}(\Psi_{nor}^{-1}(u_j)) &=  \operatorname{exp}\left( - \frac{T\sigma_j^2}{2}\left(\Psi_{nor}^{-1}(u_j) +  \mathrm{i}R_j  \right)^2    \right)  \times  \sqrt{2\pi \tilde{\sigma}_j^2 }\operatorname{exp}\left(\frac{(\Psi_{nor}^{-1}(u_j))^2}{2 \tilde{\sigma}_j^2}  \right)  \\
		& =  \newline \underbrace{ \sqrt{2\pi } \operatorname{exp}\left(- \mathrm{i} \Psi_{nor}^{-1}(u_j) T \sigma_j^2 R_j + \frac{T  \sigma_j^2 R_j^2}{2}   \right) }_{:= h^{GBM}_{nor,1}(\Psi_{nor}^{-1}(u_j))}  \\ &
		 \times  	 \underbrace{\tilde{\sigma}_j\operatorname{exp}\left( -(\Psi_{nor}^{-1}(u_j))^2 \left(\frac{T\sigma_j^2}{2} - \frac{1}{2 \tilde{\sigma}_j^2}\right) \right)}_{:= h^{GBM}_{nor,2}(\Psi_{nor}^{-1}(u_j))},
	\end{aligned}
$$
where the function $h^{GBM}_{nor,1}(\Psi_{nor}^{-1}(u_j))$ is bounded  for all $ u_j \in [0,1]$, and the function $h^{GBM}_{nor,2}(\Psi_{nor}^{-1}(u_j))$ determines the growth of the integrand at the boundary of $[0,1]$. Depending on the values of $\{\tilde{\sigma}_j\}_{j = 1}^d$, we enumerate three possible cases 
\begin{equation}
	\label{eq:gbm_conditions}
	\lim_{u_j \to 0} h^{GBM}_{nor,2}(\Psi_{nor}^{-1}(u_j)) =
	\left\{
	\begin{array}{lll}
		+\infty & \text{if } \tilde{\sigma}_j < \frac{1}{\sqrt{T} \sigma_j} & (i), \\[0.5pt]
		\tilde{\sigma}_j & \text{if } \tilde{\sigma}_j = \frac{1}{\sqrt{T} \sigma_j} & (ii), \\[0.5pt]
		0 & \text{if } \tilde{\sigma}_j > \frac{1}{\sqrt{T} \sigma_j} & (iii).
	\end{array}
	\right.
\end{equation}

To summarize, a suitable choice for $\tilde{\sigma}_j$ is $\tilde{\sigma}_j = \overline{\sigma}_j + \epsilon_j$, where $\overline{\sigma}_j = \frac{1}{ \sqrt{T} \sigma_j}$ defines the critical value, and $\epsilon_j \geq 0$. However, different values of $\tilde{\sigma}_j$ satisfying (ii) and (iii) may lead to differing error rates of the RQMC method, as demonstrated in Section~\ref{sec: num_exp_results}. Although a higher value for $\epsilon_j$ accelerates the integrand decay, selecting an arbitrarily large $\epsilon_j$ is not advisable because it amplifies the integrand peak around the origin and hence augments the magnitude of the mixed first partial derivatives of the integrand. These factors substantially influence the performance of RQMC, as illustrated in Section~\ref{sec: num_exp_results}. Moreover, in Case~(ii) in \eqref{eq:gbm_conditions}, the dominant term in the characteristic function vanishes, and the integrand decays at the rate of the payoff transform.

The previous derivation of the rule for the domain transformation relies on the assumption of the independence of assets. In this simplified framework, the performance of RQMC is classically studied; however, in practical applications, the variables may be correlated. Figure~\ref{fig:gbm_correlation_effect} demonstrates that, when the assets are positively correlated, the proposed transformation  must be generalized to account for the correlation parameters; otherwise, the boundary growth conditions are violated, and the performance of RQMC significantly deteriorates. In the uncorrelated case, a rate of convergence of $\mathcal{O}(N^{-1.48})$ was numerically estimated. In contrast, in the positively correlated case ($\rho = 0.7$), the rate of RQMC is substantially worse, $\mathcal{O}(N^{-0.69})$, and the size of the implied constant in the error estimate is also significantly larger than in the uncorrelated case.
In the second part of this section, we explain the shortcoming of the proposed rule for the domain transformation when the assets are positively correlated. Moreover, we propose a more general  domain transformation rule which accounts for the dependence between the underlying assets using nonlinear matrix inequalities.  Figure~\ref{fig:gbm_multivariate_vs_univariate_rule} illustrates that by generalizing the domain transformation rule to the case of dependent assets, the convergence rate of RQMC is significantly improved in the correlated setting.
\vspace{-0.2cm}
\FloatBarrier
\begin{figure}[h!]
	\centering	
	\begin{subfigure}{0.5\textwidth}
		\includegraphics[width=\linewidth]{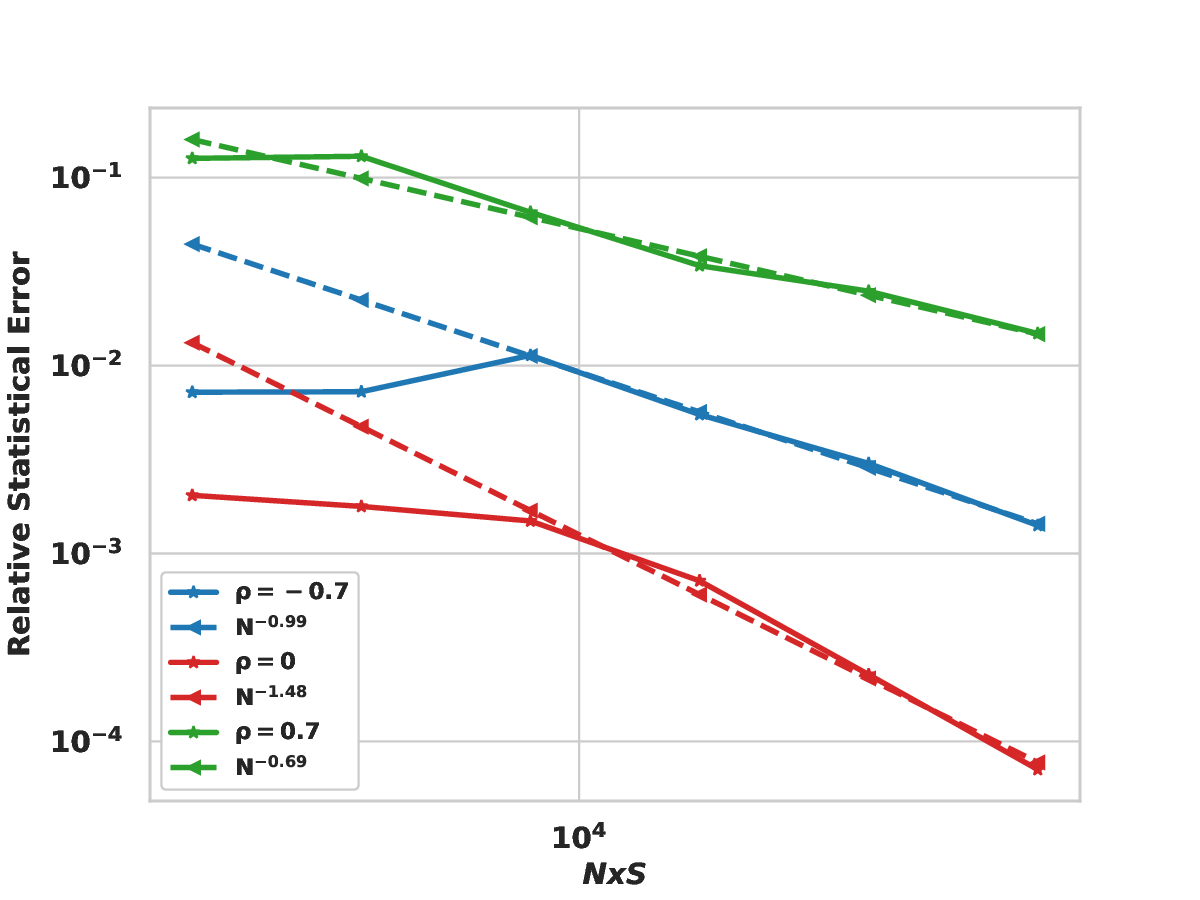}
	\end{subfigure}
	\caption[]{Effect of the correlation parameter, $\rho$, on the convergence of RQMC for a two-dimensional call on the minimum option under the GBM model with $S_0^j = 100$, $K = 100$, $r = 0$, $T= 1$, $\sigma_j = 0.2$, with $\boldsymbol{\Sigma}_{ij} =   \rho\sigma_i \sigma_j $ for $i,j= 1,2$, $i \neq j$,  $\boldsymbol{\Sigma}_{ii}  = \sigma_i^2$. For the domain transformation, $\tilde{\sigma}_j = \frac{1}{\sqrt{T} \sigma_j} = 5$, where $j =1,2$. $N$: number of QMC points; $S = 30$: number of digital shifts. 
	} 
	\label{fig:gbm_correlation_effect}
\end{figure}
\FloatBarrier
\begin{remark}
	The presented convergence rates of the RQMC method are obtained numerically and may not correspond to the theoretically expected rates which are asymptotic \cite{liu2023nonasymptotic}. In general, the number of QMC points needed to achieve the asymptotic regime is problem-dependent. In our framework, the domain transformation  can have a significant impact on the regularity of the transformed integrand, and hence the number of  points to reach the asymptotic regime depends on the used transformation and its parameters. 
\end{remark}

\paragraph{Non-factorizable domain transformation}
For dependent assets, the joint characteristic function cannot be factorized into the product of univariate characteristic functions and  we have that
\begin{equation}
\phi^{GBM}_{\boldsymbol{X}_T} (\boldsymbol{z})	:=  \operatorname{exp}\left( - \frac{T}{2} \boldsymbol{z}^{\top} \boldsymbol{\Sigma} \boldsymbol{z}   \right), \boldsymbol{z} \in \mathbb{C}^d, \Im[\boldsymbol{z}] \in \delta^{GBM}_X.
\end{equation}
Consequently, we select the proposal density $\psi^{nor}(\cdot)$ corresponding to the multivariate normal PDF, given by
$$
	\label{eq:gbm_multivariate_density}
	\psi^{nor}(\boldsymbol{y}) = (2\pi)^{-\frac{d}{2}}(\operatorname{det}(\boldsymbol{ \tilde{\Sigma}}))^{-\frac{1}{2}} \exp\left(- \frac{1}{2}(\mathbf{y}^{\top} \boldsymbol{ \tilde{\Sigma}}^{-1} \mathbf{y} ) \right), \;\boldsymbol{y} \in \mathbb{R}^d, \: \boldsymbol{ \tilde{\Sigma}} \succeq 0
$$
 In this case, the singularity of the integrand is controlled by the function, $r_{nor}^{GBM}(\boldsymbol{y})$, defined as
\begin{equation}
	\label{eq:r_gbm_multivar}
	r_{nor}^{GBM}(\boldsymbol{y}) := \frac{ \phi_{\boldsymbol{X}_T} (\boldsymbol{y}+ \mathrm{i} \boldsymbol{R}) }{\psi^{nor}( \boldsymbol{y}) }, \quad  \boldsymbol{y}\in \mathbb{R}^d,  \boldsymbol{R} \in \delta^{GBM}_V.
\end{equation}
\begin{remark}
		Analyzing the boundary growth of the transformed integrand as $u_j \to 0$ or $u_j \to 1$ is equivalent to analyzing \eqref{eq:r_gbm_multivar} as $|y_j| \to \infty$, due to the positive definiteness of $\boldsymbol{\Sigma}$ and $\boldsymbol{\tilde{\Sigma}}^{-1}$.
	\end{remark}

By substituting in the explicit expressions of the characteristic function and the proposal density, we obtain the following expression for $r_{nor}^{GBM}(\boldsymbol{y})$:

\begin{equation}
	\label{eq:ratio_mgbm}
	\small
	\begin{aligned}
	r_{nor}^{GBM}(\boldsymbol{y}) &= \underbrace{ (2\pi)^{\frac{d}{2}}  \operatorname{exp}\left( - \mathrm{i} T \boldsymbol{R}^{\top} \boldsymbol{\Sigma}\boldsymbol{y}+ \frac{T}{2} \boldsymbol{R}^{\top} \boldsymbol{\Sigma} \boldsymbol{R} \right)}_{:=h^{GBM}_{nor,1}(\boldsymbol{y})} \\
		&\times 	 \underbrace{(\operatorname{det}(\boldsymbol{\tilde{\Sigma}}))^{\frac{1}{2}}  \exp\left(-\frac{1}{2}(\boldsymbol{y}^{\top} \left( T \boldsymbol{\Sigma} - \boldsymbol{\tilde{\Sigma}}^{-1}  \right) \boldsymbol{y}  \right )}_{:=h_{nor,2}^{GBM}(\boldsymbol{y})} .
	\end{aligned}
\end{equation}
From \eqref{eq:ratio_mgbm}, the function $h^{GBM}_{nor,1}(\boldsymbol{y})$ is bounded for all $\boldsymbol{y} \in \mathbb{R}^d$; hence, the part controlling the boundary growth of the integrand is given by $h_{nor,2}^{GBM}(\boldsymbol{y})$. Similarly to \eqref{eq:gbm_conditions}, we enumerate three possible limits, depending on the choice of $\boldsymbol{\tilde{\Sigma}}$:
\begin{equation}
	\label{eq:mgbm_conditions}
	\lim_{|y_j| \to \infty} h_{nor,2}^{GBM}(\boldsymbol{y}) =
	\left\{
	\begin{array}{lll}
		+\infty & \text{if } \boldsymbol{\Sigma} - \frac{1}{T} \boldsymbol{\tilde{\Sigma}}^{-1} \prec 0 & (i), \\[0.5pt]
		(\operatorname{det}(\boldsymbol{\tilde{\Sigma}}))^{\frac{1}{2}} 
		& \text{if } \boldsymbol{\tilde{\Sigma}} = \frac{1}{T} \boldsymbol{\Sigma}^{-1} & (ii), \\[0.5pt]
		0 & \text{if } \boldsymbol{\Sigma} - \frac{1}{T} \boldsymbol{\tilde{\Sigma}}^{-1} \succ 0 & (iii).
	\end{array}
	\right.
\end{equation}

From \eqref{eq:mgbm_conditions}, a sufficient choice of the matrix $\boldsymbol{\tilde{\Sigma}} $ satisfies either Condition (ii) or (iii).  Furthermore, the transformed integrand is multiplied by the factor $(\operatorname{det}(\boldsymbol{\tilde{\Sigma}}))^{\frac{1}{2}}$; thus, the aim is to select a matrix $\boldsymbol{\tilde{\Sigma}}$ to satisfy Condition (iii), with the minimum possible determinant to avoid high peaks of the integrand around the origin, which was motivated by previous findings (see \cite{bayer2023optimal}). Optimally, the choice of $\boldsymbol{\tilde{\Sigma}}$ is given by the following constrained optimization problem:
\begin{equation}
	\label{eq:param_opt_prob}
	\begin{aligned}
		&\boldsymbol{\tilde{\Sigma}}^{*} =\argmin_{	\boldsymbol{\tilde{\Sigma}}} \det( \mathbf{\tilde{\Sigma}}) \\
		&	\text{s.t} \;  \boldsymbol{\Sigma}-\frac{1}{T} \boldsymbol{\tilde{\Sigma}}^{-1} \succeq 0
	\end{aligned}
\end{equation}
Instead, we propose a simpler construction of the matrix $\boldsymbol{\tilde{\Sigma}}$. We have that the matrix $\boldsymbol{\Sigma}$ is real symmetric; thus, by the spectral theorem, it has an eigenvalue decomposition (EVD) (i.e., $\boldsymbol{\Sigma} = \boldsymbol{P} \boldsymbol{D} \boldsymbol{P}^{-1} $, with $\boldsymbol{D} = \diag(\lambda_1,\ldots, \lambda_d)$ and $\lambda_j >0$ for all $ j \in \mathbb{I}_d$ because $\boldsymbol{\tilde{\Sigma}} \succ 0$. To simplify the problem in \eqref{eq:param_opt_prob}, we choose the matrix $\boldsymbol{\tilde{\Sigma}}$ to be in the form of $\boldsymbol{\tilde{\Sigma}} : = \boldsymbol{P} \boldsymbol{\tilde{D}} \boldsymbol{P}^{-1} $, where $\boldsymbol{\tilde{D}} = \diag(\tilde{\lambda}_1,\ldots,\tilde{\lambda}_d)$. In this case, we can express the constraint in \eqref{eq:param_opt_prob} as follows:
$$
\boldsymbol{P} (\boldsymbol{D} -  \frac{1}{T} \boldsymbol{\tilde{D}}^{-1} )  \boldsymbol{P}^{-1} \succeq 0 \iff  \tilde{\lambda}_j  \geq \frac{1}{ \lambda_j T}, \; j \in \mathbb{I}_d
$$
Hence, a suboptimal choice for $\boldsymbol{\tilde{\Sigma}} $ is $\boldsymbol{\tilde{\Sigma}} = \boldsymbol{P} \diag(\frac{1}{\lambda_1 T}, \ldots, \frac{1}{\lambda_d T}) \boldsymbol{P}^{-1} = \frac{1}{T} \boldsymbol{\Sigma}^{-1}$, which is the choice adopted in this work.
\begin{remark}
	The conditions we provided represent  sufficient boundary conditions on $\boldsymbol{\tilde{\Sigma}}$ to ensure the boundedness of the transformed integrand. In other words, enforcing the semi-positive definiteness in the matrix inequality is sufficient, but not necessary. In fact, the boundedness can be achieved  even if the difference between the matrices is indefinite as long as they satisfy the following
	\begin{equation}
		\label{eq:necessary_condition}
	\boldsymbol{y}^{\top}\left( \boldsymbol{\Sigma}-\frac{1}{T} \boldsymbol{\tilde{\Sigma}}^{-1}\right)  \boldsymbol{y} = \sum_{i=1}^d \sum_{j=1}^d y_i y_j\left(\boldsymbol{\Sigma}_{i j}-\frac{1}{T} \boldsymbol{\tilde{\Sigma}}_{i j}^{-1}\right) \geq  0,  \quad |y_j| \to \infty.
	\end{equation}
			\eqref{eq:necessary_condition} can be rewritten in a matrix form in terms of the EVD of $\boldsymbol{\tilde{\Sigma}}$, $\boldsymbol{\tilde{\Sigma}} = \boldsymbol{\tilde{P}} \boldsymbol{\tilde{D}} \boldsymbol{\tilde{P}}^{-1}$, by
			$$\sum_{j=1}^d \lambda_j\left(\boldsymbol{P}^{\top} \boldsymbol{y}\right)_j^2-\frac{1}{T} \sum_{j=1}^d \frac{1}{\tilde{\lambda}_j}\left(\boldsymbol{\tilde{P}}^{\top} \boldsymbol{y}\right)_j^2	\geq 0, \quad  \quad |y_j| \to \infty	$$		
\end{remark}

In order to apply the ICDF mapping to $[0,1]^d$, we first need to decouple the dependencies between the components of the multivariate normal distribution $\boldsymbol{Y} \sim \mathcal{N}_d(\mathbf{0}, \boldsymbol{\tilde{\Sigma}})$ with covariance matrix $\boldsymbol{\tilde{\Sigma}}$. In fact,  $\boldsymbol{Y}$ can be represented as $\boldsymbol{Y} = \boldsymbol{\tilde{L}} \boldsymbol{Z}$, where $\boldsymbol{\tilde{L}} \boldsymbol{\tilde{L}}^{\top} = \boldsymbol{\tilde{\Sigma}}$ is the Cholesky decomposition of $\boldsymbol{\tilde{\Sigma}}$, and $\boldsymbol{Z} \sim \mathcal{N}_d(\mathbf{0}, \boldsymbol{I_d})$ follows a multivariate standard  normal distribution. Alternatively, one may employ the eigenvalue decomposition (EVD) of the covariance matrix, expressed as $\boldsymbol{\tilde{\Sigma}} = \boldsymbol{P} \boldsymbol{D} \boldsymbol{P}^{-1}$, and define $\boldsymbol{\tilde{L}} = \boldsymbol{P} \boldsymbol{D}^{1/2}$. The procedure consists of two steps. First, a change of variables is applied to transform the original random vector as $\boldsymbol{y}' =  \boldsymbol{\tilde{L}}^{-1} \boldsymbol{y}$, thereby eliminating the dependence between the RVs. Second, a domain transformation is performed using the ICDF of the standard normal distribution: $\boldsymbol{y}' = \Psi_{nor}^{-1}(\boldsymbol{u};\boldsymbol{I_d})$. This results in the following expression:
	\begin{equation}
		\int_{\mathbb{R}^d} g(\boldsymbol{y}) \, \mathrm{d}\boldsymbol{y} = \int_{[0,1]^d} \frac{g\left( \boldsymbol{\tilde{L}}  \Psi_{nor}^{-1}(\boldsymbol{u}; \boldsymbol{I_d})  \right)} {\psi^{nor}\left( \boldsymbol{\tilde{L}}  \Psi_{nor}^{-1}(\boldsymbol{u}; \boldsymbol{I_d})  \right)}\, \mathrm{d}\boldsymbol{u},
	\end{equation}
	where $g(\cdot)$ is defined in Equation  \eqref{g_integrand}.

Figure~\ref{fig:gbm_multivariate_vs_univariate_rule} reveals the importance of using the multivariate rule for the domain transformation (i.e., $\boldsymbol{\tilde{\Sigma}} = \frac{1}{T} \boldsymbol{\Sigma}^{-1}$) compared to the univariate transformation rule (i.e, $\tilde{\sigma}_j = \frac{1}{\sqrt{T} \sigma_j}, j = 1,\ldots d$), when assets are correlated. The multivariate rule results in significantly superior convergence  behavior of RQMC.
\FloatBarrier
\begin{figure}[h!]
	\centering	
	\begin{subfigure}{0.5\textwidth}		\includegraphics[width=\linewidth]{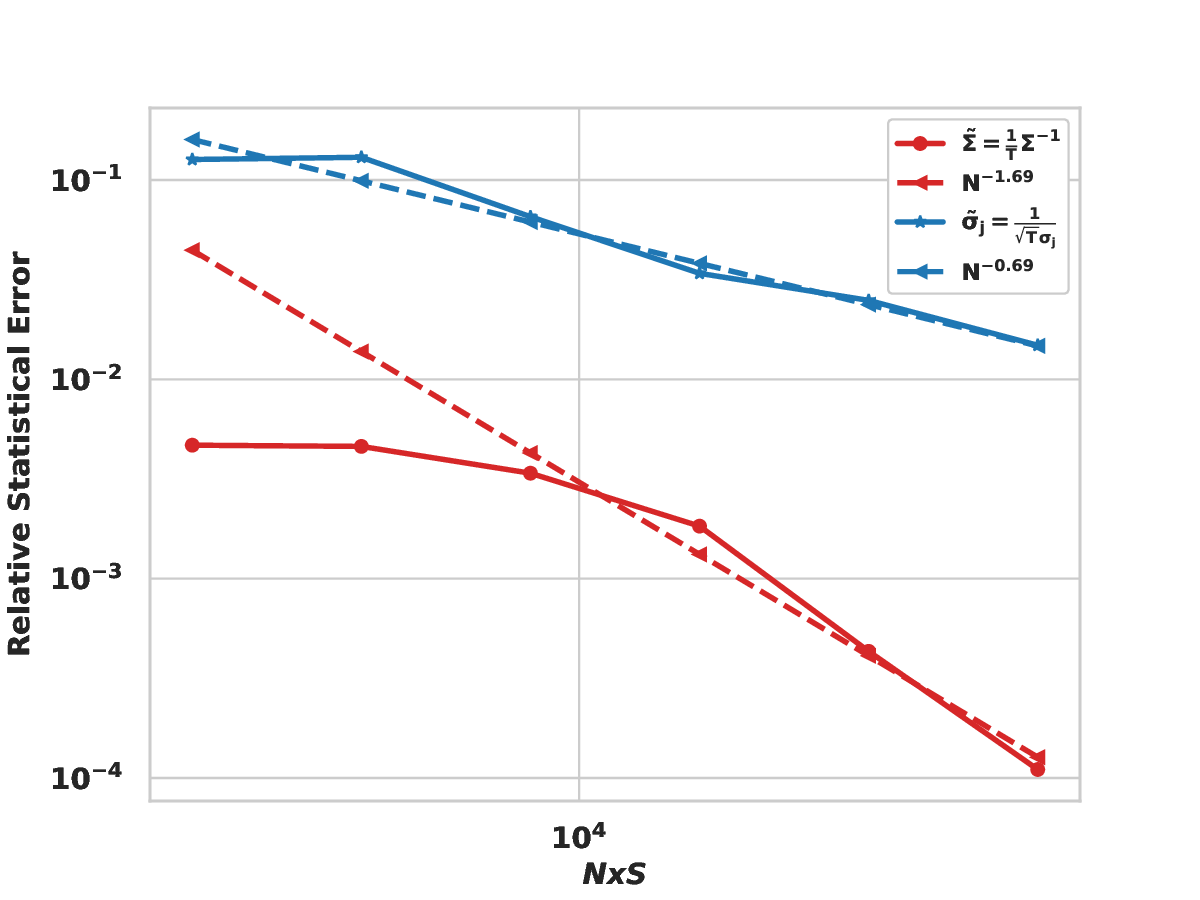}		\caption{}	\end{subfigure}
	\caption[NIG: Effect of the correlation $\rho$ $\tilde{\sigma}$ on (a) the shape of the transformed integrand, $\tilde{g}(u)$,  and (b) convergence of the relative statistical error of QMC]{Convergence of the RQMC error for multivariate and univariate domain transormations in the case of a 2D call on the minimum option under the GBM model with $S_0^j = 100$, $K = 100$, $r = 0$, $T= 1$, $\sigma_j = 0.2, \rho = 0.7$, with $\boldsymbol{\Sigma}_{ij} =   \rho\sigma_i \sigma_j $ for $i,j= 1,2$, $i \neq j$,  $\boldsymbol{\Sigma}_{ii}  = \sigma_i^2$. $N$: number of QMC points; $S = 30$: number of digital shifts. 
	 } 
	\label{fig:gbm_multivariate_vs_univariate_rule}
\end{figure}
\FloatBarrier
\begin{remark}
	Alternative approaches to deal with the multivariate ICDF can rely on the Rosenblatt transformation, copula theory or normalizing flow mappings \cite{liu2024transport}. Investigating the efficiency of these alternatives is left for future work. 
\end{remark}

\subsubsection[Semi-heavy-tailed characteristic functions: illustration for the GH model]{Domain transformation for  semi-heavy-tailed characteristic functions: illustration for the GH model}
\label{sec:nig_dom_transf}
For the product-form domain transformation (independent assets), we refer  to Appendix \ref{sec:domain_transf_gh_independent}.
\paragraph{Non-factorizable domain transformation} 
The multivariate characteristic function of the GH model is defined for $\boldsymbol{z} \in \mathbb{C}^d, \Im[\boldsymbol{z}] \in \delta_X^{GH}$, by \cite{prause1999generalized}
\begin{equation}
	\label{eq:gh_cf}
	\phi^{GH}_{ \boldsymbol{X}_T}(\boldsymbol{z}) = \left(\frac{\alpha^2-\boldsymbol{\beta}^{\top} \boldsymbol{\Delta} \boldsymbol{\beta}}{\alpha^2  -(\boldsymbol{\beta}+ \mathrm{i} \boldsymbol{z})^{\top}\boldsymbol{\Delta}(\boldsymbol{\beta}+ \mathrm{i} \boldsymbol{z})}\right)^{\lambda / 2} \frac{\mathrm{~K}_\lambda\left(\delta T \sqrt{\alpha^2-(\boldsymbol{\beta}+ \mathrm{i} \boldsymbol{z})^{\top}\boldsymbol{\Delta}(\boldsymbol{\beta}+ \mathrm{i} \boldsymbol{z})}\right)}{\mathrm{K}_\lambda\left(\delta T \sqrt{\alpha^2-\boldsymbol{\beta}^{\top} \boldsymbol{\Delta} \boldsymbol{\beta}}\right)}
\end{equation}
where  $\mathrm{K}_\lambda(\cdot)$ is a modified Bessel function of the second kind and $\lambda \in \mathbb{R}$. We note that $\mathrm{K}_{- 1/2}(x) = \sqrt{\frac{\pi}{2x}}e^{-x}$, hence for $\lambda = - \frac{1}{2}$, we recover the characteristic function of the NIG model given in Table \ref{table:chf_table}, and the analysis provided in this section remains valid for any $\lambda \in \mathbb{R}$.

We choose a density that matches the functional form of the characteristic function \eqref{eq:gh_cf}, corresponding to the multivariate Laplace distribution, which  is given by
\begin{equation}
	\label{eq:mlap_pdf}
	\psi_{lap}(\boldsymbol{y})=\frac{2}{(2 \pi)^{\frac{d}{2}}  \sqrt{\det( \mathbf{\tilde{\Sigma}}) }}\left(\frac{\boldsymbol{y}^{\top} \boldsymbol{\tilde{\Sigma}}^{-1} \boldsymbol{y}}{2}\right)^{ \frac{v}{2}} K_{v}\left(\sqrt{2 \boldsymbol{y}^{\top} \boldsymbol{\tilde{\Sigma}}^{-1} \boldsymbol{y}}\right), \quad \boldsymbol{y} \in \mathbb{R}^d, \boldsymbol{\tilde{\Sigma}} \succeq 0, 
\end{equation}
with $v = \frac{2  -  d}{2}$.
First, we note that the Bessel function satisfies the following property \cite{prause1999generalized}:
\begin{equation}
	\label{eq:bessel_asymtotic}
	K_{\lambda}(x) \stackrel{x\to +\infty}{\sim} \sqrt{\frac{\pi}{2x}} e^{-x}.
\end{equation}
Also, we have that for $\boldsymbol{z} = \boldsymbol{y} + \mathrm{i } \boldsymbol{R}$, with $\boldsymbol{y}, \boldsymbol{R} \in \mathbb{R}^d$,
\begin{equation}
	\label{eq:gh_polynomial_asymptotic}
	\alpha^2 -  (\boldsymbol{\beta}  + \mathrm{i}\boldsymbol{y}  - \boldsymbol{R})^{\top}
	\boldsymbol{\Delta}  (\boldsymbol{\beta}  + \mathrm{i}\boldsymbol{y}  - \boldsymbol{R}) \stackrel{|y_j| \to \infty}{\sim}  \boldsymbol{y}^{\top}  	\boldsymbol{\Delta}   \boldsymbol{y}.
\end{equation}
Using both approximations in \eqref{eq:bessel_asymtotic} and \eqref{eq:gh_polynomial_asymptotic}, we can approximate the characteristic function of the GH model as follows:
\begin{equation}
	\label{eq:gh_asymp_approx}
	\begin{aligned}
		\phi^{GH}_{ \boldsymbol{X}_T}(\boldsymbol{y} + \mathrm{i}  \boldsymbol{R}) \stackrel{|y_j| \to \infty}{\sim}  \left(\frac{\alpha^2 - \boldsymbol{\beta}^{\top} \boldsymbol{\Delta } \boldsymbol{\beta}}{\boldsymbol{y}^{\top}  	\boldsymbol{\Delta}   \boldsymbol{y}} \right)^{\lambda /2 } \sqrt{\frac{\pi}{2 \delta T \sqrt{\boldsymbol{y}^{\top}  	\boldsymbol{\Delta}   \boldsymbol{y}}}} \frac{\exp\left(-  \delta T \sqrt{\boldsymbol{y}^{\top}  	\boldsymbol{\Delta}   \boldsymbol{y}}\right)}{K_{\lambda}\left(\delta T \sqrt{\alpha^2 - \boldsymbol{\beta}^{\top} \boldsymbol{\Delta } \boldsymbol{\beta}}\right) }.
	\end{aligned}
\end{equation}
Similarly, the PDF of the multivariate Laplace distribution can be approximated  by
\begin{equation}
	\label{eq:mlap_asymp_approx}
	\psi_{lap}(\boldsymbol{y}) \stackrel{|y_j| \to \infty}{\sim} \frac{2 \left(\boldsymbol{y}^{\top} \boldsymbol{\tilde{\Sigma}}^{-1} \boldsymbol{y} \right)^{v/2} }{(2\pi)^{d/2} \sqrt{\det(\boldsymbol{\tilde{\Sigma}})}}   \sqrt{\frac{\pi}{2 \sqrt{2 \boldsymbol{y}^{\top}  	\boldsymbol{\tilde{\Sigma}}^{-1}  \boldsymbol{y}}}} \exp\left(- \sqrt{2\boldsymbol{y}^{\top}  	\boldsymbol{\tilde{\Sigma}}^{-1} \boldsymbol{y}} \right).
\end{equation}
Focusing on the leading asymptotic terms, we encapsulate the polynomial prefactors in the following notation
\begin{equation}
	Q_{GH}(\boldsymbol{y}) :=	 \left(\frac{\alpha^2 - \boldsymbol{\beta}^{\top} \boldsymbol{\Delta } \boldsymbol{\beta}}{\boldsymbol{y}^{\top}  	\boldsymbol{\Delta}   \boldsymbol{y}} \right)^{\lambda /2 } \sqrt{\frac{\pi}{2 \delta T \sqrt{\boldsymbol{y}^{\top}  	\boldsymbol{\Delta}   \boldsymbol{y}}}} \frac{1}{K_{\lambda}\left(\delta T \sqrt{\alpha^2 - \boldsymbol{\beta}^{\top} \boldsymbol{\Delta } \boldsymbol{\beta}}\right) }.
\end{equation}
\begin{equation}
	Q_{lap}(\boldsymbol{y}) := \frac{2 \left(\boldsymbol{y}^{\top} \boldsymbol{\tilde{\Sigma}}^{-1} \boldsymbol{y} \right)^{v/2} }{(2\pi)^{d/2}}   \sqrt{\frac{\pi}{2 \sqrt{2 \boldsymbol{y}^{\top}  	\boldsymbol{\tilde{\Sigma}}^{-1}  \boldsymbol{y}}}}.
\end{equation}
To determine a rule for the choice of $\boldsymbol{\tilde{\Sigma}}$, we concentrate on the tail behavior of the function $r_{lap}^{GH}(\boldsymbol{y})$:
\begin{equation}
	\label{eq:ratio_nigdep}
	r_{lap}^{GH}(\boldsymbol{y}) := \frac{	\phi^{GH}_{ \boldsymbol{X}_T}( \boldsymbol{y} + \mathrm{i} \boldsymbol{R})}{	\psi_{lap}( \boldsymbol{y} )}, \quad \boldsymbol{y} \in \mathbb{R}^d, \boldsymbol{R}  \in \delta_V^{GH}.
\end{equation}
With both the asymptotic approximations presented in \eqref{eq:gh_asymp_approx} and \eqref{eq:mlap_asymp_approx}, the boundary growth of $r_{lap}^{GH}(\boldsymbol{y})$ is controlled by the following term 
$$
\small
r^{GH}_{lap}(\boldsymbol{y}) :=  \frac{Q_{GH}( \boldsymbol{y} )}{Q_{lap}(\boldsymbol{y})} 
\underbrace{\sqrt{\operatorname{det}(\boldsymbol{\tilde{\Sigma}})} \exp\left(   - \left( \delta T \sqrt{   \boldsymbol{y}^{\top} \boldsymbol{\Delta}   \boldsymbol{y}} -  \sqrt{2\boldsymbol{y}^{\top} \boldsymbol{\tilde{\Sigma}}^{-1}\boldsymbol{y}  } \right)  \right)}_{h_{lap}^{GH}(\boldsymbol{y})}.
$$
The function $x \mapsto \sqrt{x}$ is monotonic; thus, 
$$
\sqrt{\delta^2 T^2 \boldsymbol{y}^{\top} \boldsymbol{\Delta} \boldsymbol{y}} - \sqrt{2 \boldsymbol{y}^{\top} \tilde{\boldsymbol{\Sigma}}^{-1} \boldsymbol{y}} \geq 0 
\Longleftrightarrow 
\boldsymbol{y}^{\top} \left( \delta^2 T^2 \boldsymbol{\Delta} - 2 \tilde{\boldsymbol{\Sigma}}^{-1} \right) \boldsymbol{y} \geq 0, \quad \boldsymbol{y} \in \mathbb{R}^d.
$$
Therefore, we enumerate three possible limits:
\begin{equation}
	\label{eq:GH_conditions}
	\lim_{|y_j| \to +\infty} h_{lap}^{GH}(\boldsymbol{y}) =
	\left\{
	\begin{array}{lll}
		+\infty & \text{if } \delta^2 T^2 \boldsymbol{\Delta} - 2 \boldsymbol{\tilde{\Sigma}}^{-1} \prec 0 & (i), \\[0.5pt]
		\sqrt{\operatorname{det}(\boldsymbol{\tilde{\Sigma}})} & \text{if } \boldsymbol{\tilde{\Sigma}} = \frac{2}{\delta^2 T^2} \boldsymbol{\Delta}^{-1} & (ii), \\[0.5pt]
		0 & \text{if } \delta^2 T^2 \boldsymbol{\Delta} - 2 \boldsymbol{\tilde{\Sigma}}^{-1} \succ 0 & (iii).
	\end{array}
	\right.
\end{equation}

From \eqref{eq:GH_conditions}, a sufficient condition to eliminate the singularity at the boundary is to set $\boldsymbol{\tilde{\Sigma}}$ such that the matrix $\delta^2 T^2 \boldsymbol{\Delta} - 2 \boldsymbol{\tilde{\Sigma}}^{-1} \succeq 0$. The problem of finding such a matrix does not have a unique solution; hence, we propose a candidate construction. The matrix $\boldsymbol{\Delta}$ is a real symmetric matrix; thus, via the spectral theorem, it has a principal value factorization (i.e., $\boldsymbol{\Delta} = \boldsymbol{P} \boldsymbol{D} \boldsymbol{P}^{-1} $), where $\boldsymbol{D} = \diag\left(\lambda_1,\ldots, \lambda_d \right)$ with $\lambda_j > 0, j =1,\ldots d$, because $\boldsymbol{\Delta}$ is positive-definite. Therefore, we propose the construction $\boldsymbol{\tilde{\Sigma}} := \boldsymbol{P} \boldsymbol{\tilde{D}} \boldsymbol{P}^{-1} $, where $\boldsymbol{\tilde{D}} = \diag(\tilde{\lambda}_1, \ldots ,\tilde{\lambda}_d)$ such that $\delta^2 T^2 \boldsymbol{\Delta} - 2 \boldsymbol{\tilde{\Sigma}}^{-1} \succeq 0$. This condition can be rewritten as follows:
\begin{equation}
	\label{eq:eigval_ineq_nig}
	\boldsymbol{P} ( \delta^2 T^2\boldsymbol{D}  - 2 \boldsymbol{\tilde{D}}^{-1})\boldsymbol{P}^{-1} \succeq 0 \iff \tilde{\lambda}_j  \geq \frac{2}{ \lambda_j \delta^2 T^2}, \quad j  \in \mathbb{I}_d.
\end{equation}
Furthermore, the integrand is proportional to $\det(\boldsymbol{\tilde{\Sigma}} ) = \prod_{j = 1}^d \tilde{\lambda}_j$; therefore, the aim is to select the matrix with a minimal determinant that satisfies the inequality in \eqref{eq:eigval_ineq_nig}. Consequently, we propose the matrix $\boldsymbol{\tilde{\Sigma}}$ by setting $\boldsymbol{\tilde{\Sigma}} = \boldsymbol{P} \diag(\frac{2}{\lambda_1 \delta^2 T^2}, \ldots, \frac{2}{\lambda_d \delta^2 T^2} ) \boldsymbol{P}^{-1} = \frac{2}{ \delta^2 T^2 } \boldsymbol{\Delta}^{-1}$.

In order to perform the domain transformation via the ICDF mapping, we represent the multivariate Laplace distribution in the variance-mean mixture form as in Theorem~6.3.1 in \cite{kotz2001laplace}, and derive an alternative integral representation of \eqref{QOI}.

\begin{proposition}
	\label{prop:nig_proposition}
	We let $g(\cdot)$ denote the integrand defined in \eqref{g_integrand}. Then its integral over $\mathbb{R}^d$ can be expressed using Fubini's theorem, as follows:
	\begin{equation}
		\label{eq:nig_transformed_integral_rep}
		\int_{\mathbb{R}^d} g(\boldsymbol{y}) d\boldsymbol{y} = \int_{[0,1]^{d+1}} 
		\frac{
			g\left( 
			\sqrt{\Psi_W^{-1}(u_{d+1})}
			\boldsymbol{\tilde{L}}	\Psi_{\boldsymbol{Z}}^{-1}
			(\boldsymbol{u}_{1:d})
			\right)
		}{
			\psi_{\boldsymbol{Y}}\left( 
			\sqrt{\Psi_W^{-1}(u_{d+1})}
			\boldsymbol{\tilde{L}}		\Psi_{\boldsymbol{Z}}^{-1}
			(\boldsymbol{u}_{1:d})
			\right)
		} 
		\mathrm{d}\boldsymbol{u}
	\end{equation}
	where $\psi_{\boldsymbol{Y}}(\cdot)$ is the PDF of the multivariate Laplace distribution with zero mean and covariance matrix $\boldsymbol{\tilde{\Sigma}}$, as given in \eqref{eq:mlap_pdf}. $\Psi^{-1}_W(\cdot)$ is the ICDF of the exponential distribution with rate equal to 1. $\Psi^{-1}_{\boldsymbol{Z}}(\cdot)$ is the ICDF of the multivariate standard normal distribution, and $\boldsymbol{\tilde{L}}$ corresponds to the square root of the matrix $\boldsymbol{\tilde{\Sigma}}$.
\end{proposition}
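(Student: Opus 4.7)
The plan is to combine the normal variance-mean mixture representation of the multivariate Laplace density with two successive changes of variables, and then apply component-wise ICDF mappings to transport everything to the unit cube.

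First, I would rewrite the integrand as $g(\boldsymbol{y}) = \frac{g(\boldsymbol{y})}{\psi_{\boldsymbol{Y}}(\boldsymbol{y})}\psi_{\boldsymbol{Y}}(\boldsymbol{y})$, which is valid pointwise because the multivariate Laplace density in \eqref{eq:mlap_pdf} is strictly positive on $\mathbb{R}^d$. Next, by Theorem~6.3.1 in \cite{kotz2001laplace}, if $W\sim\mathrm{Exp}(1)$ is independent of $\boldsymbol{Z}\sim\mathcal{N}_d(\boldsymbol{0},\boldsymbol{I}_d)$, then $\sqrt{W}\,\boldsymbol{\tilde{L}}\boldsymbol{Z}\stackrel{d}{=}\boldsymbol{Y}$, and hence
\begin{equation*}
\psi_{\boldsymbol{Y}}(\boldsymbol{y}) \;=\; \int_{\mathbb{R}^+} \psi_{\boldsymbol{Y}\mid W=w}(\boldsymbol{y})\, e^{-w}\, \mathrm{d}w,
\end{equation*}
where $\psi_{\boldsymbol{Y}\mid W=w}$ is the density of $\mathcal{N}_d(\boldsymbol{0},w\boldsymbol{\tilde{\Sigma}})$. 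Substituting this into the integral and invoking Fubini's theorem, I obtain
\begin{equation*}
\int_{\mathbb{R}^d} g(\boldsymbol{y})\,\mathrm{d}\boldsymbol{y}
\;=\; \int_{\mathbb{R}^+}\!\int_{\mathbb{R}^d} \frac{g(\boldsymbol{y})}{\psi_{\boldsymbol{Y}}(\boldsymbol{y})}\, \psi_{\boldsymbol{Y}\mid W=w}(\boldsymbol{y})\, e^{-w}\,\mathrm{d}\boldsymbol{y}\,\mathrm{d}w.
\end{equation*}

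Next, for each fixed $w$, I would perform the linear change of variable $\boldsymbol{y}=\sqrt{w}\,\boldsymbol{\tilde{L}}\boldsymbol{z}$, whose Jacobian is $w^{d/2}\sqrt{\det(\boldsymbol{\tilde{\Sigma}})}$. Using the identity $\boldsymbol{\tilde{L}}^{\top}\boldsymbol{\tilde{\Sigma}}^{-1}\boldsymbol{\tilde{L}}=\boldsymbol{I}_d$ (which follows from $\boldsymbol{\tilde{\Sigma}}=\boldsymbol{\tilde{L}}\boldsymbol{\tilde{L}}^{\top}$), a direct computation gives
\begin{equation*}
\psi_{\boldsymbol{Y}\mid W=w}\!\bigl(\sqrt{w}\,\boldsymbol{\tilde{L}}\boldsymbol{z}\bigr)\, w^{d/2}\sqrt{\det(\boldsymbol{\tilde{\Sigma}})} \;=\; \psi_{\boldsymbol{Z}}(\boldsymbol{z}),
\end{equation*}
the density of the standard $d$-variate normal. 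This collapses the inner Gaussian factor and leaves the joint integral in the form $\int_{\mathbb{R}^+}\!\int_{\mathbb{R}^d}\frac{g(\sqrt{w}\,\boldsymbol{\tilde{L}}\boldsymbol{z})}{\psi_{\boldsymbol{Y}}(\sqrt{w}\,\boldsymbol{\tilde{L}}\boldsymbol{z})}\,\psi_{\boldsymbol{Z}}(\boldsymbol{z})\,e^{-w}\,\mathrm{d}\boldsymbol{z}\,\mathrm{d}w$. Note that the denominator remains the full multivariate Laplace density $\psi_{\boldsymbol{Y}}$, which is the very feature exploited by the boundary growth analysis in Section~\ref{sec:nig_dom_transf}.

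Finally, I would apply two independent ICDF mappings: the componentwise standard-normal ICDF $\boldsymbol{z}=\Psi_{\boldsymbol{Z}}^{-1}(\boldsymbol{u}_{1:d})$ absorbs $\psi_{\boldsymbol{Z}}(\boldsymbol{z})\,\mathrm{d}\boldsymbol{z}$ into $\mathrm{d}\boldsymbol{u}_{1:d}$ on $[0,1]^d$, and the exponential ICDF $w=\Psi_W^{-1}(u_{d+1})$ absorbs $e^{-w}\,\mathrm{d}w$ into $\mathrm{d}u_{d+1}$ on $[0,1]$; assembling the two yields \eqref{eq:nig_transformed_integral_rep}. The main obstacle is the rigorous justification of Fubini's theorem: this requires establishing absolute integrability of $\boldsymbol{y}\mapsto g(\boldsymbol{y})$ together with the positivity of the mixing kernel, for which Assumption~\ref{ass:Assumptions on the distribution} on $\delta_X$ combined with the boundedness of $\hat{P}(\boldsymbol{y}+\mathrm{i}\boldsymbol{R})$ for $\boldsymbol{R}\in\delta_V$ suffices. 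The rest of the argument is a careful bookkeeping of Jacobians and exploitation of the matrix identity $\boldsymbol{\tilde{L}}^{\top}\boldsymbol{\tilde{\Sigma}}^{-1}\boldsymbol{\tilde{L}}=\boldsymbol{I}_d$.
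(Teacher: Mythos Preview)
Your proposal is correct and follows essentially the same approach as the paper's proof: express $\psi_{\boldsymbol{Y}}$ via the normal variance-mean mixture, apply Fubini, change variables to reduce to a standard normal times an exponential density, and then apply the componentwise ICDF mappings. The only cosmetic difference is that the paper performs the substitution in two stages ($\boldsymbol{y}'=\boldsymbol{y}/\sqrt{w}$ followed by $\boldsymbol{y}'=\boldsymbol{\tilde{L}}^{-1}\boldsymbol{y}$) whereas you combine them into the single map $\boldsymbol{y}=\sqrt{w}\,\boldsymbol{\tilde{L}}\boldsymbol{z}$; your added remarks on the Fubini justification are in fact slightly more explicit than what the paper provides.
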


\begin{proof}
	Appendix~\ref{sec:nig_dom_transf_proof} presents the proof.
\end{proof}

\subsubsection[Heavy-tailed characteristic functions: illustration for the VG model]{Domain transformation for heavy-tailed characteristic functions: illustration for the VG model}

\label{sec:vg_dom_transf}

In this section we follow the same steps as in Sections \ref{sec:gbm_dom_transf} and \ref{sec:nig_dom_transf}    to obtain an appropriate domain transformation for models with heavy-tailed characteristic functions, illustrating with an example of the VG model. For the treatment of the product-form domain transformation we refer the reader to Appendix \ref{sec:domain_transf_vg_independent}.

\paragraph{Non-factorizable domain transformation}
In general, we cannot factor the joint characteristic function into the product of the marginal characteristic functions and we have that
\begin{equation}
	\phi^{VG}_{\boldsymbol{X}_T} :=   \left( 1 - \mathrm{i} \nu \boldsymbol{z}^{\top} \boldsymbol{\theta} + \frac{\nu}{2} \boldsymbol{z}^{\top} \boldsymbol{\Sigma} \boldsymbol{z} \right)^{- \frac{T}{\nu}},  \quad \boldsymbol{z} \in \mathbb{C}^d, \Im[ \boldsymbol{z} ]\in \delta^{VG}_X.
\end{equation}	
Hence, we select the proposal density $\psi^{stu}(\cdot)$ corresponding to the PDF of the multivariate generalized Student's $t$-distribution, given by
\begin{equation}
	\label{eq:general_form_vg_density}
	\psi^{stu}(\boldsymbol{y})= \frac{C_{\tilde{\nu}} }{ \sqrt{\operatorname{det}(\boldsymbol{\tilde{\Sigma}})}}\left(1+\frac{1}{\tilde{\nu}}\left(\boldsymbol{y}^{\top} \boldsymbol{\tilde{\Sigma}}^{-1} \boldsymbol{y}\right)\right)^{-\frac{\tilde{\nu}+d}{2}}, \quad \boldsymbol{y} \in \mathbb{R}^d, \tilde{\nu} > 0, \boldsymbol{\tilde{\Sigma}}\succ  0,
\end{equation}
where
\begin{equation}
	C_{\tilde{\nu}} := \frac{\Gamma\left(\frac{\tilde{\nu}+d}{2}\right)}{\Gamma\left(\frac{\tilde{\nu}}{2}\right) \tilde{\nu}^{\frac{d}{2}} \pi^{\frac{d}{2}}} > 0.
\end{equation}
Then, the function $r_{stu}^{VG}(\boldsymbol{y})$ controls the growth of the integrand near the boundary:
\begin{equation}	
	r_{stu}^{VG}( \boldsymbol{y}) := \frac{ \phi^{VG}_{\boldsymbol{X}_T} (\boldsymbol{y} \\
		+ \mathrm{i} \boldsymbol{R}) }{\psi^{stu}(\boldsymbol{y}) }, \quad \boldsymbol{y} \in \mathbb{R}^d,  \boldsymbol{R} \in \delta^{VG}_V.
\end{equation}
The characteristic function can be approximated near $|y_j|\to \infty$ as follows:
\begin{equation}
	\label{eq:chf_approx}
	\small
	\begin{aligned}
		\phi_{\boldsymbol{X}_T}^{VG}(\boldsymbol{y} + \mathrm{i} \boldsymbol{R}) &=  \left(  \frac{\nu}{2} \boldsymbol{y}^\top \boldsymbol{\Sigma} \boldsymbol{y}  \left(1 + \mathrm{i}  \frac{2}{\nu}  \frac{\boldsymbol{R}^\top\boldsymbol{\Sigma} \boldsymbol{y} - \boldsymbol{\theta}^\top\boldsymbol{y}}{\boldsymbol{y}^\top \boldsymbol{\Sigma} \boldsymbol{y}}  +  \frac{2}{\nu} \frac{1 - \nu \boldsymbol{\theta}^\top \boldsymbol{R} - \frac{\nu}{2}  \boldsymbol{R}^\top \boldsymbol{\Sigma} \boldsymbol{R}}{\boldsymbol{y}^\top \boldsymbol{\Sigma} \boldsymbol{y}} 
		\right) 
		\right)^{-T / \nu}\\
		&\stackrel{|y_j|\to \infty}{\sim}    \left( \frac{\nu}{2}\, \boldsymbol{y}^\top \boldsymbol{\Sigma} \boldsymbol{y} 
		\right)^{-T / \nu}.
	\end{aligned}
\end{equation}

Furthermore, the multivariate generalized Student’s $t$ density can be approximated asymptotically as $|y_j| \to \infty$ by 
\begin{equation}
	\label{eq:density_approx}
	\psi^{stu}(\boldsymbol{y}) \stackrel{|y_j|\to \infty}{\sim}  
	\frac{C_{\tilde{\nu}} }{ \sqrt{\operatorname{det}(\boldsymbol{\tilde{\Sigma}})}} 
	\left( 
	\frac{1}{\tilde{\nu}} \boldsymbol{y}^\top \boldsymbol{\tilde{\Sigma}}^{-1} \boldsymbol{y}
	\right)^{-\frac{\tilde{\nu}+d}{2}}.
\end{equation}
By applying the asymptotic relations in \eqref{eq:chf_approx} and \eqref{eq:density_approx}, we approximate the function $r_{stu}^{VG}(\boldsymbol{y})$ as follows:
\begin{equation}
	\label{eq:r_VG_multivariate}
	\small
	\begin{aligned}
		r_{stu}^{VG}(\boldsymbol{y}) 
		&\stackrel{|y_j| \to \infty}{\sim}  
		\underbrace{
			\frac{\sqrt{\operatorname{det}(\boldsymbol{\tilde{\Sigma}})}}{C^{\prime}_{\tilde{\nu}}} 
			\exp\left( - \left( 
			\frac{T}{\nu} 
			\log\left(\boldsymbol{y}^{\top} \boldsymbol{\Sigma} \boldsymbol{y}\right) 
			-
			\frac{\tilde{\nu}+d}{2} 
			\log\left(
			\boldsymbol{y}^{\top}
			\boldsymbol{\tilde{\Sigma}}^{-1}
			\boldsymbol{y}
			\right)
			\right) \right)
		}_{h_{stu}^{VG}(\boldsymbol{y})}.
	\end{aligned}
\end{equation}
where $$
C_{\tilde{\nu}}^{\prime} := C_{\tilde{\nu}} \exp\left( \frac{T}{\nu} \log\left(\frac{\nu}{2}\right) + \frac{\tilde{\nu} + d}{2} \log\left(\tilde{\nu}\right) \right) .
$$
The function $h_{stu}^{VG}(\boldsymbol{y})$ controls the boundary growth of the integrand. In contrast to the case of the GBM and the GH models, the removal of the singularity at the boundary depends on the interplay of two parameters: $\tilde{\nu}$ and the covariance matrix $\boldsymbol{\tilde{\Sigma}}$. By the monotonicity of $x \mapsto \log(x)$, we have
\begin{equation}
	\small
	\label{eq:mvg_dom_trans_cond}
	\log\left( \boldsymbol{y}^\top \boldsymbol{\Sigma} \boldsymbol{y} \right) 
	- \log\left( \boldsymbol{y}^\top \boldsymbol{\tilde{\Sigma}}^{-1} \boldsymbol{y} \right) 
	\geq 0 
	\iff  
	\boldsymbol{y}^\top (\boldsymbol{\Sigma} - \boldsymbol{\tilde{\Sigma}}^{-1}) \boldsymbol{y} 
	\geq 0.
\end{equation}
Equation~\eqref{eq:mvg_dom_trans_cond} indicates that if $\tilde{\nu} = \frac{2T}{\nu} - d$, then we focus on  the  three following cases
\begin{equation}
	\label{eq:mvg_conditions_1}
	\lim_{|y_j| \to \infty} h_{stu}^{VG}(\boldsymbol{y}) =
	\left\{
	\begin{array}{lll}
		+\infty & \text{if } \boldsymbol{\Sigma} - \boldsymbol{\tilde{\Sigma}}^{-1} \prec 0 & (i), \\[0.5pt]
		\frac{\sqrt{\operatorname{det}(\boldsymbol{\tilde{\Sigma}})}}{C^{\prime}_{\tilde{\nu}}} 
		& \text{if } \boldsymbol{\tilde{\Sigma}} = \boldsymbol{\Sigma}^{-1} & (ii), \\[0.5pt]
		0 & \text{if } \boldsymbol{\Sigma} - \boldsymbol{\tilde{\Sigma}}^{-1} \succ 0 & (iii).
	\end{array}
	\right.
\end{equation}
From \eqref{eq:mvg_conditions_1}, if $\tilde{\nu} = \frac{2T}{\nu} - d$, an appropriate choice of $\boldsymbol{\tilde{\Sigma}}$ is one such that $ \boldsymbol{\Sigma} - \boldsymbol{\tilde{\Sigma}}^{-1} \succeq 0$, but with smallest possible eigenvalues to minimize the term $\sqrt{\operatorname{det}(\boldsymbol{\tilde{\Sigma}})}$.  We remark that the case when $\boldsymbol{\Sigma} - \boldsymbol{\tilde{\Sigma}}^{-1} $ is an indefinite matrix is inconclusive.
In contrast, for \eqref{eq:r_VG_multivariate}, setting $\boldsymbol{\tilde{\Sigma}} = \boldsymbol{\Sigma}^{-1}$, we focus on the following three possible conditions:

\begin{equation}
	\label{eq:mvg_conditions_2}
	\lim_{|y_j| \to \infty}h_{stu}^{VG}(\boldsymbol{y}) =
	\left\{\begin{array}{l}
		+ \infty \; \text{if} \; \tilde{\nu} > \frac{2T}{\nu}-d \; (i), \\
		\frac{ \sqrt{\operatorname{det}(\boldsymbol{\tilde{\Sigma}})}}{C_{\tilde{\nu}} }   \;   \text{if} \;\tilde{\nu} = \frac{2T}{\nu}-d \; (ii),
		\\
		0  \;   \text{if} \;\tilde{\nu} < \frac{2T}{\nu}-d  \; (iii).  \\
	\end{array} \right.
\end{equation}
From \eqref{eq:mvg_conditions_2}, if $\boldsymbol{\tilde{\Sigma}} = \boldsymbol{\Sigma}^{-1}$, an appropriate choice of $\tilde{\nu}$ is given by $\tilde{\nu} = \bar{\nu} - \epsilon$, where $\bar{\nu} = \frac{2T}{\nu} - d , \epsilon \geq 0$.
In this case, increasing the value of $\epsilon$ decreases the value of $\tilde{\nu}$, which increases the value of the constant factor $C^{\prime}_{\tilde{\nu} } \propto (\tilde{\nu})^{- \frac{d}{2} +1} \log{(\tilde{\nu})} $ and hence reduces the constant factor multiplying the integrand. This result indicates that, for a fixed $\boldsymbol{\tilde{\Sigma}} = \boldsymbol{\Sigma}^{-1}$, reducing the value of $\tilde{\nu}$, which makes the tails of $\psi_{stu}(\cdot)$ heavier, may improve the performance of RQMC. 



In order to perform the domain transformation via the ICDF mapping, we  adopt a similar approach to that employed for semi-heavy-tailed models in Section \ref{sec:nig_dom_transf}. We represent the generalized Student’s $t$-distribution in the normal variance-mean mixture form and use the Cholesky or principal component factorization to eliminate the dependence structure.  The resulting representation (see Equation \eqref{eq:vg_nested_integ}) is a critical tool that avoids the need for evaluation of the ICDF, as presented in Proposition~\ref{prop:vg_proposition}.

\begin{proposition}
	\label{prop:vg_proposition}
We let $g(\cdot)$ denote the integrand defined in \eqref{g_integrand}. Then, its integral over $\mathbb{R}^d$ can be expressed using  Fubini's theorem, as follows:
	\begin{equation}
		\label{eq:vg_nested_integ}
			\int_{\mathbb{R}^d} g(\boldsymbol{y}) d\boldsymbol{y} =  	\int_{[0,1]^{d+1}} 
			\frac{
				g\left( 
				\frac{\boldsymbol{\tilde{L}}
					\Psi_{\boldsymbol{Z}}^{-1}
					(\boldsymbol{u}_{1:d})}{\sqrt{\Psi_W^{-1}(u_{d+1})}}
				\right)
			}{
				\psi_{\boldsymbol{Y}}\left( 
				\frac{\boldsymbol{\tilde{L}}
					\Psi_{\boldsymbol{Z}}^{-1}
					(\boldsymbol{u}_{1:d})}{\sqrt{\Psi_W^{-1}(u_{d+1})}}
				\right)
			} 
			\mathrm{d}\boldsymbol{u}
	\end{equation}
where $\psi_{\boldsymbol{Y}}(\cdot)$ is the PDF of the multivariate Student-$t$ distribution with zero mean, covariance matrix $\boldsymbol{\tilde{\Sigma}}$ and degrees of freedom $\tilde{\nu}$, as given in \eqref{eq:general_form_vg_density}. $\Psi^{-1}_W(\cdot)$ is the ICDF of the chi-squared distribution with degrees of freedom $\tilde{\nu}$. $\Psi^{-1}_{\boldsymbol{Z}}(\cdot)$ is the ICDF of the multivariate standard normal distribution, and $\boldsymbol{\tilde{L}}$ corresponds to the square root of the matrix $\tilde{\nu} \boldsymbol{\tilde{\Sigma}}$.

\end{proposition}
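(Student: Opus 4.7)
The plan is to carry out the two-step domain transformation that was announced right before Proposition 3.2: first decompose $\psi_{\boldsymbol{Y}}$ via its normal variance–mean mixture representation, and then apply ICDF mappings to the mixing distribution and to the conditional multivariate standard normal separately. Concretely, recall that if $\boldsymbol{Z}\sim\mathcal{N}(\mathbf{0},\boldsymbol{I}_d)$ and $W\sim\chi^2_{\tilde{\nu}}$ are independent, then
\begin{equation*}
\boldsymbol{Y}\;\stackrel{d}{=}\;\frac{\boldsymbol{\tilde{L}}\,\boldsymbol{Z}}{\sqrt{W}},\qquad \boldsymbol{\tilde{L}}\boldsymbol{\tilde{L}}^{\top}=\tilde{\nu}\boldsymbol{\tilde{\Sigma}},
\end{equation*}
yields a multivariate Student's $t$ with $\tilde{\nu}$ degrees of freedom and scale matrix $\boldsymbol{\tilde{\Sigma}}$. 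Equivalently, the density admits
\begin{equation*}
\psi_{\boldsymbol{Y}}(\boldsymbol{y})=\int_{0}^{\infty}\psi_{\boldsymbol{Y}\mid W}(\boldsymbol{y}\mid w)\,\psi_{W}(w)\,\mathrm{d}w,
\end{equation*}
with $\psi_{\boldsymbol{Y}\mid W}(\,\cdot\,\mid w)$ the density of $\mathcal{N}(\mathbf{0},\tilde{\nu}\boldsymbol{\tilde{\Sigma}}/w)$ and $\psi_{W}$ the $\chi^2_{\tilde{\nu}}$ density.

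The first step is to multiply and divide by $\psi_{\boldsymbol{Y}}$, substitute the mixture representation, and invoke Fubini to swap the $\boldsymbol{y}$ and $w$ integrals:
\begin{equation*}
\int_{\mathbb{R}^d}g(\boldsymbol{y})\,\mathrm{d}\boldsymbol{y}=\int_{0}^{\infty}\!\int_{\mathbb{R}^d}\frac{g(\boldsymbol{y})}{\psi_{\boldsymbol{Y}}(\boldsymbol{y})}\,\psi_{\boldsymbol{Y}\mid W}(\boldsymbol{y}\mid w)\,\mathrm{d}\boldsymbol{y}\,\psi_{W}(w)\,\mathrm{d}w.
\end{equation*}
The second step is, for each fixed $w>0$, to perform the linear change of variables $\boldsymbol{y}=\boldsymbol{\tilde{L}}\boldsymbol{z}/\sqrt{w}$, whose Jacobian exactly cancels the normalizing constants of $\psi_{\boldsymbol{Y}\mid W}(\,\cdot\,\mid w)$, reducing the inner integral to one against the standard Gaussian density $\psi_{\boldsymbol{Z}}$. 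Finally, apply the componentwise ICDF $\boldsymbol{z}=\Psi_{\boldsymbol{Z}}^{-1}(\boldsymbol{u}_{1:d})$ to map to $[0,1]^d$, and the univariate ICDF $w=\Psi_{W}^{-1}(u_{d+1})$ of the $\chi^2_{\tilde{\nu}}$ distribution to map to $[0,1]$; composing these produces the $(d+1)$-dimensional integral over $[0,1]^{d+1}$ asserted in \eqref{eq:vg_nested_integ}.

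The main obstacle is the justification of Fubini's theorem, since $g$ need not be nonnegative and the ratio $g/\psi_{\boldsymbol{Y}}$ may be delicate near infinity. The resolution is to apply Fubini--Tonelli to the nonnegative integrand $|g|/\psi_{\boldsymbol{Y}}$ first: the boundary growth analysis leading to the conditions \eqref{eq:mvg_conditions_1}--\eqref{eq:mvg_conditions_2} on $\boldsymbol{\tilde{\Sigma}}$ and $\tilde{\nu}$ guarantees that the transformed integrand $\tilde{g}$ stays bounded on $[0,1]^{d+1}$ (or at worst satisfies an integrable boundary growth condition of the Owen type \eqref{eq:owen_bgc}), which yields the required absolute integrability and legitimizes the swap of the order of integration. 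Once Fubini is in force, the remaining algebra is a standard change-of-variables computation and matches the multivariate-Laplace argument of Proposition 3.1 almost verbatim, with the exponential mixing distribution replaced by $\chi^2_{\tilde{\nu}}$; we refer to the analogous detailed computation provided in Appendix \ref{sec:nig_dom_transf_proof}.
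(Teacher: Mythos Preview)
Your proposal is correct and follows essentially the same route as the paper's proof: the normal variance--mean mixture representation of the Student-$t$ distribution, multiplication and division by $\psi_{\boldsymbol{Y}}$, the change of variables $\boldsymbol{y}=\boldsymbol{\tilde{L}}\boldsymbol{z}/\sqrt{w}$, and the componentwise ICDF mappings match the paper's computation line by line. Your additional discussion of the Fubini justification via the boundary growth conditions is a welcome elaboration that the paper itself does not spell out.
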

\begin{proof}
Appendix~\ref{sec:vg_dom_transf_proof} presents the proof.
\end{proof}


\subsubsection{Boundary growth conditions}

To summarize, we provide in Tables~\ref{tab:univariate_dom_tranf} and~\ref{tab:multivariate_dom_tranf} the boundary growth conditions for the three pricing models, namely the GBM, VG, and GH models. {The derivations in Sections~\ref{sec:gbm_dom_transf}, \ref{sec:vg_dom_transf}, and~\ref{sec:nig_dom_transf} are carried out for these concrete representative models. However, the methodology is not restricted to these examples. It applies to models whose extended characteristic functions have the same asymptotic decay class. Table~\ref{tab:identification} makes this connection explicit by identifying the decay class, the general form, the associated transformation density, and the corresponding boundary growth condition. Thus, GBM, GH/NIG, and VG are used as representative examples of the Gaussian, root-exponential, and polynomial decay classes, respectively.}

\begin{table}[h]
	
	\setlength{\arrayrulewidth}{0.4pt}
	\renewcommand{\arraystretch}{2}
	\caption{Choice of $\psi_j(\cdot)$ and boundary growth conditions for (i) GBM, (ii) VG, and (iii) GH in the case of independent assets. Here $y_j\in\mathbb{R}$. The densities $\psi^{nor}_j$, $\psi^{stu}_j$, and $\psi^{lap}_j$ denote the normal, Student-$t$, and Laplace transformation densities, respectively. $C_{\tilde{\nu}} = \frac{  \sqrt{\tilde{\nu}} \sqrt{ \pi} \Gamma\left(\frac{\tilde{\nu}}{2}\right)}{\Gamma\left(\frac{\tilde{\nu}+1}{2}\right)}$.}
	\centering
	\begin{tabular}{|p{1.5cm}|p{7cm}|p{4.8cm}|}
		\hline
		\textbf{Model} & $\psi_{j}(y_j)$ & Boundary growth condition \\
		\hline
		\textbf{GBM} 
		& 
		$
		\psi^{nor}_j(y_j) := 
		\frac{\exp\left(- \frac{y_j^2}{2\tilde{\sigma}_j^2}\right)}
		{\sqrt{2\pi \tilde{\sigma}_j^2}}
		$
		& 
		$\tilde{\sigma}_j \geq \frac{1}{\sqrt{T} \sigma_j}$ \\
		\hline
		\textbf{VG} 
		& 
		$\psi^{stu}_j(y_j)
		=
		\frac{\Gamma\left(\frac{\tilde{\nu}+1}{2}\right)}
		{\sqrt{\tilde{\nu}\pi} \tilde{\sigma}_j \Gamma\left(\frac{\tilde{\nu}}{2}\right)}
		\left(1+\frac{y_j^2}{\tilde{\nu} \tilde{\sigma}_j^2}\right)^{-(\tilde{\nu}+1)/2}$ 
		& 
		$\tilde{\nu} \leq \frac{2T}{\nu}-1$, \newline 
		$\tilde{\sigma}_j = \left[ \frac{\nu \sigma_j^2 \tilde{\nu}}{2} \right]^{\frac{T}{\nu - 2T}} (C_{\tilde{\nu}})^{-\frac{\nu}{\nu - 2T}}$ \\
		\hline
		\textbf{GH} 
		& 
		$\psi^{lap}_j(y_j)
		=
		\frac{\exp\left(- \frac{|y_j|}{\tilde{\sigma}_j}\right)}{2 \tilde{\sigma}_j}$ 
		& 
		$\tilde{\sigma}_j \geq \frac{1}{\delta T}$ \\
		\hline
	\end{tabular}
	\label{tab:univariate_dom_tranf}
\end{table}
\FloatBarrier

\begin{table}[h]
	
	\setlength{\arrayrulewidth}{0.4pt}
	\setlength{\tabcolsep}{1pt}
	\renewcommand{\arraystretch}{2}
	\caption{Choice of $\psi(\cdot)$ and boundary growth conditions for (i) GBM, (ii) VG, and (iii) GH in the case of dependent assets. Here $\boldsymbol{y}\in\mathbb{R}^d$. The densities $\psi^{nor}$, $\psi^{stu}$, and $\psi^{lap}$ denote the multivariate normal, generalized Student-$t$, and multivariate Laplace transformation densities, respectively.}
	\centering
	\begin{tabular}{|P{1.25cm}|P{10.6cm}|P{\dimexpr\linewidth-1.25cm-10.6cm-6\tabcolsep-4\arrayrulewidth\relax}|}
		\hline
		\textbf{Model} & \centering Density & Boundary growth condition \\
		\hline
		\textbf{GBM} 
		& 
		$\psi^{nor}(\boldsymbol{y})=(2\pi)^{-d/2}(\det(\boldsymbol{\tilde{\Sigma}}))^{-1/2}\exp\left(-\frac{1}{2}\boldsymbol{y}^\top\boldsymbol{\tilde{\Sigma}}^{-1}\boldsymbol{y}\right)$
		& 
		$T\boldsymbol{\Sigma}-\boldsymbol{\tilde{\Sigma}}^{-1}\succeq0$ \\
		\hline
		\textbf{VG} 
		& 
		$\psi^{stu}(\boldsymbol{y})=\frac{\Gamma\left(\frac{\tilde{\nu}+d}{2}\right)(\det(\boldsymbol{\tilde{\Sigma}}))^{-1/2}}{\Gamma\left(\frac{\tilde{\nu}}{2}\right)\tilde{\nu}^{d/2}\pi^{d/2}}\left(1+\frac{1}{\tilde{\nu}}\boldsymbol{y}^\top\boldsymbol{\tilde{\Sigma}}^{-1}\boldsymbol{y}\right)^{-(\tilde{\nu}+d)/2}$
		& 
		$\tilde{\nu}=\frac{2T}{\nu}-d$, 
		$\boldsymbol{\Sigma}-\boldsymbol{\tilde{\Sigma}}^{-1}\succeq0$;
		\newline
		or $\tilde{\nu}\leq\frac{2T}{\nu}-d$, 
		$\boldsymbol{\tilde{\Sigma}}=\boldsymbol{\Sigma}^{-1}$ \\
		\hline
		\textbf{GH} 
		& 
		$\psi^{lap}(\boldsymbol{y})=\frac{2}{(2\pi)^{d/2}}(\det(\boldsymbol{\tilde{\Sigma}}))^{-1/2}\left(\frac{\boldsymbol{y}^\top\boldsymbol{\tilde{\Sigma}}^{-1}\boldsymbol{y}}{2}\right)^{v/2}K_v\left(\sqrt{2\boldsymbol{y}^\top\boldsymbol{\tilde{\Sigma}}^{-1}\boldsymbol{y}}\right)$
		& 
		$\delta^2T^2\boldsymbol{\Delta}-2\boldsymbol{\tilde{\Sigma}}^{-1}\succeq0$ \\
		\hline
	\end{tabular}
	\label{tab:multivariate_dom_tranf}
\end{table}

\FloatBarrier
\begin{table}[h!]
	
	\setlength{\arrayrulewidth}{0.4pt}
	\setlength{\tabcolsep}{3pt}
	\renewcommand{\arraystretch}{1.6}
	\caption{General forms, associated transformation densities, and boundary growth conditions. Here $\boldsymbol{y}\in\mathbb{R}^d$, $C>0$, $\boldsymbol{A}\succ0$, and $\gamma>0$. The densities $\psi^{nor}$, $\psi^{lap}$, and $\psi^{stu}$ are the transformation densities summarized in Tables~\ref{tab:univariate_dom_tranf} and~\ref{tab:multivariate_dom_tranf}. In the heavy-tailed case, $\gamma>d/2$ is imposed.}
	\centering
	\begin{tabular}{|P{2.6cm}|P{4.7cm}|P{2.4cm}|P{\dimexpr\linewidth-2.6cm-4.7cm-2.4cm-8\tabcolsep-5\arrayrulewidth\relax}|}
		\hline
		\textbf{Decay class} & \centering  General form & \centering  Density & Boundary growth condition \\
		\hline
		\textbf{Light-tailed}
		& 
		$C\exp(-\boldsymbol{y}^{\top}\boldsymbol{A}\boldsymbol{y})$ 
		& 
		$\psi^{nor}(\boldsymbol{y})$ 
		& 
		$2\boldsymbol{A}-\boldsymbol{\tilde{\Sigma}}^{-1}\succeq0$ \\
		\hline
		\textbf{Semi-heavy-tailed}
		& 
		$C\exp(-\gamma\sqrt{\boldsymbol{y}^{\top}\boldsymbol{A}\boldsymbol{y}})$ 
		& 
		$\psi^{lap}(\boldsymbol{y})$ 
		& 
		$\gamma^2\boldsymbol{A}-2\boldsymbol{\tilde{\Sigma}}^{-1}\succeq0$ \\
		\hline
		\textbf{Heavy-tailed}
		& 
		$C(\boldsymbol{y}^{\top}\boldsymbol{A}\boldsymbol{y})^{-\gamma}$ 
		& 
		$\psi^{stu}(\boldsymbol{y})$ 
		& 
		$\tilde{\nu}=2\gamma-d,\ \boldsymbol{A}-\boldsymbol{\tilde{\Sigma}}^{-1}\succeq0$;
		or
		$\tilde{\nu}\leq2\gamma-d,\ \boldsymbol{\tilde{\Sigma}}=\boldsymbol{A}^{-1}$ \\
		\hline
	\end{tabular}
	\label{tab:identification}
\end{table}
\FloatBarrier

{
	Table~\ref{tab:identification} identifies the general rule underlying the model-specific transformation choices. The GBM model belongs to the light-tailed class with $\boldsymbol{A}=\frac{T}{2}\boldsymbol{\Sigma}$. The GH and NIG models belong to the semi-heavy-tailed class with $\boldsymbol{A}=\boldsymbol{\Delta}$ and $\gamma=\delta T$. The VG model belongs to the heavy-tailed class with $\boldsymbol{A}=\boldsymbol{\Sigma}$ and $\gamma=T/\nu$. Once the asymptotic decay class of a computable extended characteristic function has one of the general forms listed in Table~\ref{tab:identification}, the corresponding transformation density and boundary growth condition are obtained from the same table. This is the sense in which Sections~\ref{sec:gbm_dom_transf}, \ref{sec:vg_dom_transf}, and~\ref{sec:nig_dom_transf} provide representative derivations rather than model-specific restrictions.
}

\FloatBarrier
We conclude this section on the domain transformation with a few final remarks.

{
	\begin{remark}[Extension to other models]
		The transformation rules in Table~\ref{tab:identification} are not restricted to the three representative models considered in Sections~\ref{sec:gbm_dom_transf}, \ref{sec:nig_dom_transf}, and~\ref{sec:vg_dom_transf}. Given a model with computable extended characteristic function, one first fixes an admissible damping vector $\boldsymbol{R}\in\delta_V$ and determines, or bounds, the asymptotic decay of $|\Phi_{\boldsymbol{X}_T}(\boldsymbol{y}+\mathrm{i}\boldsymbol{R})|$ as $|y_j|\to\infty$, $j=1,\ldots,d$. If this decay matches one of the general forms in Table~\ref{tab:identification}, the corresponding transformation density and boundary growth condition are obtained by identifying the associated matrix $\boldsymbol{A}$ and, when applicable, the exponent $\gamma$.
		
		For the models treated in this work, GBM corresponds to the light-tailed class with $\boldsymbol{A}=\frac{T}{2}\boldsymbol{\Sigma}$, GH/NIG corresponds to the semi-heavy-tailed class with $\boldsymbol{A}=\boldsymbol{\Delta}$ and $\gamma=\delta T$, and VG corresponds to the heavy-tailed class with $\boldsymbol{A}=\boldsymbol{\Sigma}$ and $\gamma=T/\nu$. Other models can be treated in the same way once their characteristic function asymptotics have been identified. For instance, the characteristic functions in the Merton jump-diffusion model and Kou's model contain a Gaussian-type term, see \cite{kirkby2015efficient}. Hence, when the Gaussian term determines the asymptotic decay, the normal transformation used for the GBM remains applicable. Similarly, estimates for the decay of the Heston characteristic function, such as those in \cite{lee2004option}, can be used to identify the appropriate decay class.

			 More generally, when the leading asymptotic decay of 
			 $|\Phi_{\boldsymbol{X}_T}(\boldsymbol{y}+\mathrm{i}\boldsymbol{R})|$ 
			 is unknown, but upper bounds or asymptotic estimates are available, the  procedure remains applicable. The resulting transformation becomes more conservative, and its numerical efficiency will depend on the sharpness of the bound. Nevertheless, the corresponding boundary growth conditions remain crucial to avoid boundary singularities introduced by the domain transformation.

	\end{remark}
}

\begin{remark}[Damping parameters]
	The value of the damping parameters is independent of the domain transformation; thus, the rule proposed in \cite{bayer2023optimal} remains the same in this work. The independence comes from using the damping parameters that minimize the peak of the integrand at the origin, corresponding to $\Psi^{-1}(\boldsymbol{u})= \mathbf{0}$ i.e., $\boldsymbol{u} = \left( \frac{1}{2}, \ldots, \frac{1}{2}\right)$ for the transformed integrand. Hence, the original integrand is divided by $\psi(\mathbf{0})$, a constant term independent of $\boldsymbol{R}$. We find that it is numerically more stable to minimize the peak of the log-transformed integrand i.e., $\log(\mid g(\boldsymbol{0}; \boldsymbol{R}) \mid)$ instead of minimizing the peak of the integrand  i.e., $g(\boldsymbol{0}; \boldsymbol{R})$. 	For high-dimensional problems $d > 10$, some optimizers, such as L-BFGS-B \cite{zhu1997algorithm}, may not converge to the optimal solution. However, the trust-region method (see \cite{conn2000trust}) was empirically observed to be robust in high dimensions.
\end{remark}

{
	\begin{remark}[Extension to path-dependent options]
		The valuation formula in Proposition \ref{prop:Multivariate Fourier pricing valuation formula} is applied in this paper to European-type options, where the payoff is written as a function of the terminal log-price vector $\boldsymbol{X}_T$. For path-dependent options, the RQMC quadrature and the domain transformation procedure developed in Section \ref{sec:RQMC_fourier_pricing} can be used only after a Fourier valuation formula of the same form has first been obtained.
		
		More precisely, suppose that the path-dependent payoff admits the representation $P(\boldsymbol{F}_T)$, where $\boldsymbol{F}_T=(F_T^{(1)},\ldots,F_T^{(m)})\in\mathbb{R}^m$ is a vector of path functionals of $(\boldsymbol{X}_t)_{0\le t\le T}$, or equivalently of $(\boldsymbol{S}_t)_{0\le t\le T}$, such that the payoff is fully determined by $\boldsymbol{F}_T$. Examples include monitored log-prices, averages of log-prices, averages of asset prices, or running extrema. The analogue of Proposition~\ref{prop:Multivariate Fourier pricing valuation formula} then requires the extended Fourier transform $\widehat{P}$ of the payoff and the extended characteristic function
		$$
		\Phi_{\boldsymbol{F}_T}(\boldsymbol{z})
		=
		\mathbb{E}
		\left[
		\exp\left(
		\mathrm{i}\boldsymbol{z}^{\top}\boldsymbol{F}_T
		\right)
		\right]
		$$
		of this path-functional vector, both defined on a common strip of analyticity. If these objects are available, then the option value can be written as
		$$
		V
		=
		(2\pi)^{-m}e^{-rT}
		\int_{\mathbb{R}^m}
		\operatorname{Re}
		\left[
		\Phi_{\boldsymbol{F}_T}(\boldsymbol{y}+\mathrm{i}\boldsymbol{R})
		\widehat{P}(\boldsymbol{y}+\mathrm{i}\boldsymbol{R})
		\right]
		\,\mathrm{d}\boldsymbol{y},
		$$
		and the RQMC/domain transformation methodology applies to this integration problem.
		
		For instance, for a continuously monitored fixed-strike geometric Asian call in one dimension, one may take
		$$
		F_T
		=
		\frac{1}{T}\int_0^T X_t\,\mathrm{d}t,
		\qquad
		P(x)
		=
		\left(e^x-K\right)^+.
		$$
		The option payoff is then $P(F_T)$, since $e^{F_T}$ is the geometric average of the asset price. Under GBM, $F_T$ is Gaussian, and therefore $\Phi_{F_T}$ is available explicitly. For more general L\'evy models, however, the required input is the extended characteristic function of the path functional $F_T$, not merely the terminal characteristic function $\Phi_{X_T}$. This characteristic function has to be derived separately, and is in most cases not given explicitly in a form that immediately allows one to analyze its asymptotic decay and determine the appropriate domain transformation.
		
		For some path-dependent options under L\'evy models, different transform tools may be available. For example, \cite{eberlein2011analyticity} use Wiener-Hopf factorization to obtain semi-analytical formulas for the extended characteristic functions of the supremum and infimum of a L\'evy process, leading to Fourier valuation formulas for options depending on running extrema, such as one-touch and lookback options. In this setting, the Wiener-Hopf factors consist of integrals that have to be computed numerically. An alternative route was proposed by \cite{hackmann2016approximating}, who approximate L\'evy processes with completely monotone jumps, including VG and NIG models, by hyperexponential processes. The advantage of the hyperexponential class is that its Wiener-Hopf factors are given explicitly, which leads to more efficient transform-based methods for barrier, lookback, and Asian options.
		
		Thus, the extension to path-dependent options is model- and payoff-specific. Once the corresponding Fourier representation has been derived, or obtained through a tractable approximation, the RQMC and domain transformation procedure can be applied.
	\end{remark}
}
\section{Numerical Experiments and Results}\label{sec: num_exp_results}
This section presents the results of numerical experiments conducted for pricing multi-asset European basket put options with equal weights (i.e., $w_i = \frac{1}{d}$), spread call, call on minimum (call on min), and cash-or-nothing (CON) call options. 	Table~\ref{table:payoffs} presents the scaled version of these payoffs and their Fourier transforms. These payoff functions adhere to Assumption \ref{ass:Assumptions on the payoff}. Further details on their derivation are provided in \cite{eberlein2010analysis, hubalek2005variance, hurd2010fourier}.
\FloatBarrier
\begin{table}[h!]
	\centering
	\begin{tabular}{| c| c| c| }
		\hline 
		\small Payoff &  \small $P(\boldsymbol{X}_T)$&  \small  \small $\hat{P}(\boldsymbol{z})$\\
		\hline
		\small	Basket put & \small	$ \max \left(1 -   \sum_{j=1}^d e^{X_T^j}, 0\right)$ & $ \frac{\prod_{j=1}^{d} \Gamma\left(- \mathrm{i} z_{j}\right)}{\Gamma\left(-\mathrm{i} \sum_{j=1}^{d} z_{j}+2\right)}$ \\ 
		\hline
		
		\small Spread call & \small $\max \left( e^{X_T^1} -   \sum_{j=2}^d e^{X_T^j} - 1 , 0\right)$ &  $\frac{\Gamma\left(\mathrm{i}\left(z_1+\sum_{j=2}^d z_j\right)-1\right) \prod_{j=2}^d \Gamma\left(-\mathrm{i} z_j\right)}{\Gamma\left(\mathrm{i} z_1+1\right)}$ \\ 
		\hline
		\small	Call on min & \small	$ \max \left( \min\left(e^{X_T^1},\ldots,e^{X_T^d}\right) - 1  , 0\right)$&  $ \frac{  1 }{\left(\mathrm{i}\left(\sum_{j=1}^{d} z_{j}\right)-1\right) \prod_{j=1}^{d}\left(\mathrm{i} z_{j}\right)}$ \\
		\hline
		\small CON call & \small $\prod_{ j = 1}^d \mathds{1}_{ \{ e^{X^j_T} > 1  \} }(X^j_T)$ &  $\prod_{j = 1}^d  
		\left( \frac{ 1}{ \mathrm{i} z_j} \right)$ \\
		\hline
	\end{tabular}
	\captionsetup{justification=centering,margin=0.3cm}
	\caption{Payoff functions (scaled), $P(\boldsymbol{X}_T)$, their extended Fourier transform, $\hat{P}(\boldsymbol{z})$. The corresponding domain of analyticity for each of the payoff functions, $\delta_P$, is provided in Table \ref{table:payoff_strip_table}. }
	\label{table:payoffs}
\end{table}
\FloatBarrier

We note that when working with scaled payoffs, it is necessary to define the variable $ X_t^j $ appropriately. For a basket put option, this is given by: $X_t^j := \log\left(\frac{S_t^j}{dK}\right),$ while for a call on min, CON call, and spread call options, it is defined as: $ X_t^j :=\log\left(\frac{S_t^j}{K}\right)$.
We tested the performance of RQMC with the appropriate domain transformation (see Section~\ref{sec:RQMC_fourier_pricing}) for the GBM, VG, and GH models with various parameter constellations and dimensions $d = 1, \ldots ,15$. The tested model parameters of marginal distributions are taken from the literature on model calibration \cite{kirkby2015efficient, bayer2018smoothing,aguilar2020some,healy2021pricing}. We considered relative errors normalized by the reference prices to compare the methods. The statistical error of RQMC is defined as in \eqref{rqmc_error}, and the relative statistical error is given by $$\text{Relative Statistical Error} = \frac{ \text{Statistical Error} }{\text{Reference Value}},$$
where the reference values are computed using the MC method with $M = 10^8$ samples, {unless stated otherwise}. The numerical results were obtained using Google Colab with the standard configuration. {The computations reported below follow the RQMC-Fourier pricing pipeline summarized in Algorithm~\ref{alg:rqmc_fourier} in Appendix~\ref{app:rqmc_fourier_pipeline}.} The code containing the implementation of our proposed approach is available on GitHub\footnote{\small
	\href{https://github.com/Michael-Samet/Quasi-Monte-Carlo-for-Efficient-Fourier-Pricing-of-Multi-Asset-Options/tree/main/Premia\%20Implementation}{https://github.com/Michael-Samet/Quasi-Monte-Carlo-for-Efficient-Fourier-Pricing}
}.
\FloatBarrier
\subsection{Effect of Domain Transformation on RQMC Convergence}
\label{sec:dom_transf_effect}
This section illustrates the effect of the parameters of the distribution proposed for the domain transformation in Section~\ref{sec:RQMC_fourier_pricing} on the convergence of the RQMC method for put options in 1D under the GBM, VG, and GH models. Figures~\ref{fig:gbm_dom_trans_effect}, \ref{fig:nig_dom_trans_effect}, \ref{fig:vg1_dom_trans_effect}, and \ref{fig:vg2_dom_trans_effect} demonstrate that the values of the parameters that do not satisfy the boundary growth conditions presented in Table~\ref{tab:univariate_dom_tranf} lead to integrands that are unbounded near the boundary of  $[0,1]$. Moreover, Figures~\ref{fig:gbm_dom_trans_effect_conv}, \ref{fig:nig_dom_trans_effect_conv}, \ref{fig:vg1_dom_trans_effect_conv}, and \ref{fig:vg2_dom_trans_effect_conv} demonstrate that these singular integrands exhibit much slower convergence rates of the QMC method. For instance, Figure~\ref{fig:gbm_dom_trans_effect} indicates that, for the GBM model, when the boundary growth condition is violated (i.e., $\tilde{\sigma} = 1 < \frac{1}{\sqrt{T} \sigma} = 5$), the integrand $\tilde{g}(u)$ increases considerably near the boundary, and the associated convergence of RQMC deteriorates. This case is interesting because the choice of $\tilde{\sigma} =1$ is typical in the literature (e.g., in \cite{ballotta2022powering}). The parameter $\tilde{\sigma}$ does not carry a physical significance; thus, the transformation is usually performed using the standard normal distribution, which adversely affects the convergence in our setting. The corresponding error is two orders of magnitude larger than the error obtained by RQMC when the parameter $\tilde{\sigma}$ satisfies the boundary growth condition. In addition, Figures \ref{fig:vg1_dom_trans_effect_conv}, \ref{fig:vg2_dom_trans_effect_conv} visualize the considerable influence of the choice of both parameters, $\tilde{\nu}$ and $\tilde{\sigma}$, on the convergence of RQMC in the VG model. These results motivate the use of the generalized Student’s $t$-distribution instead of its standard counterpart in which the scaling parameter is fixed to $\tilde{\sigma} = 1$, as in \cite{ouyang2024achieving}. Finally, Figure~\ref{fig:nig_dom_trans_effect_conv} illustrates that, for the GH model, the error of RQMC is about three orders of magnitude lower than the case of $\tilde{\sigma} = 1$,  if the domain transformation parameters are chosen appropriately, according to the procedure proposed in Section~\ref{sec:nig_dom_transf}.
\FloatBarrier
\begin{figure}[h!]
		\vspace{-0.1cm}
 \centering	
 	\begin{subfigure}{0.4\textwidth}
\includegraphics[width=\linewidth]{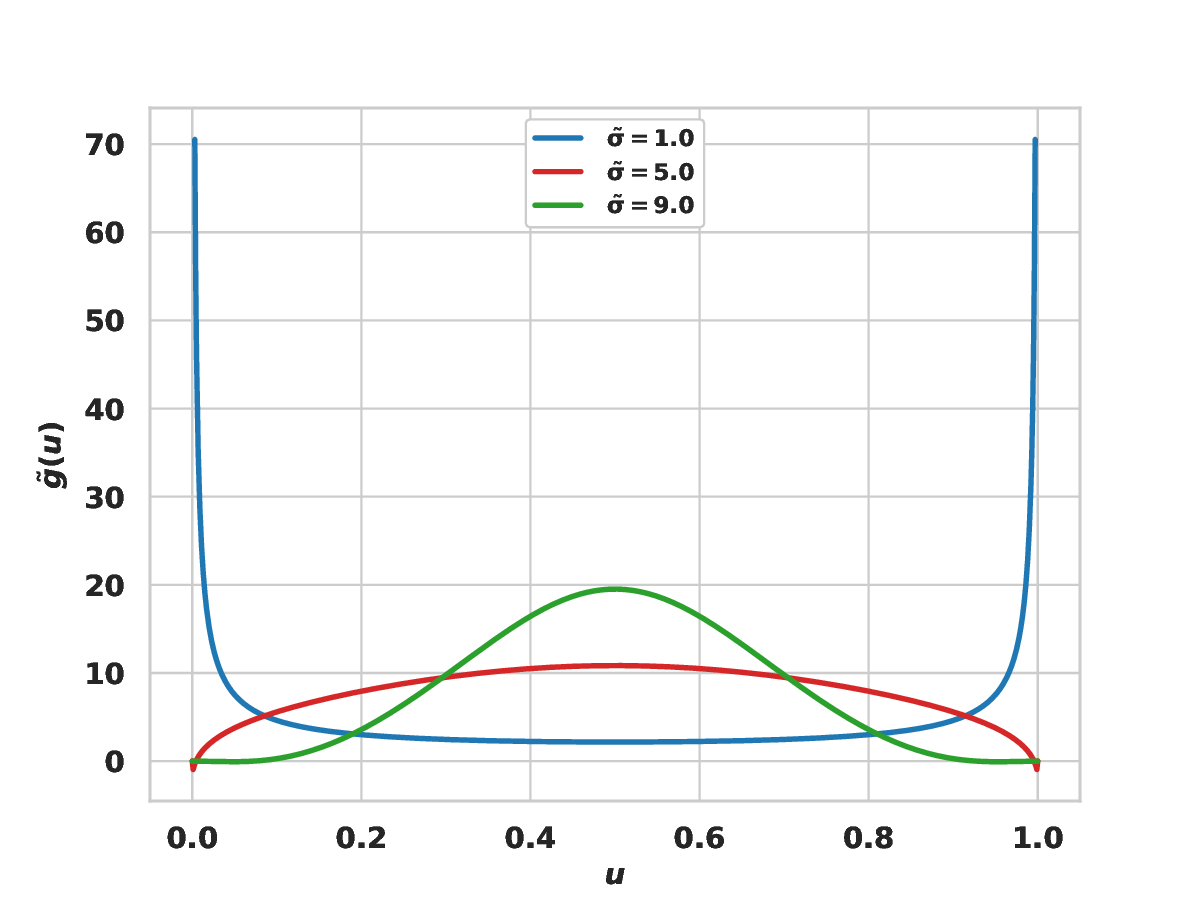}
\caption{}
\label{fig:gbm_dom_trans_effect}
 	\end{subfigure}
  	\begin{subfigure}{0.4\textwidth}
\includegraphics[width=\linewidth]{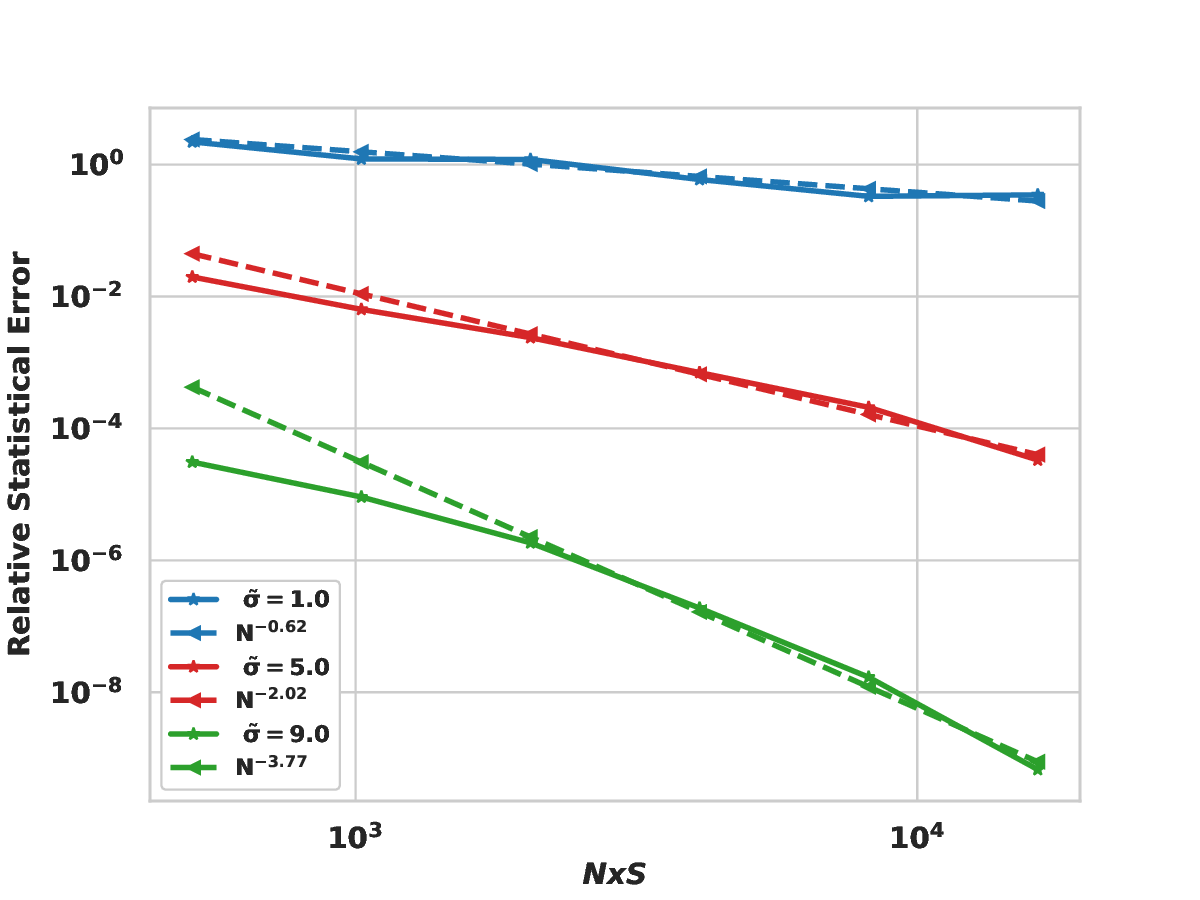}
\caption{}
\label{fig:gbm_dom_trans_effect_conv}
 	\end{subfigure}
	\caption[GBM: Effect of the parameter $\tilde{\sigma}$ on (a) the shape of the transformed integrand, $\tilde{g}(u)$,  and (b) convergence of the relative statistical error of QMC]{Effect of the parameter $\tilde{\sigma}$ on (a) the shape of the transformed integrand $\tilde{g}(u)$ and (b) convergence of the relative statistical error of RQMC with $S_0 = 100$, $K = 100$, $r = 0$, $T= 1$, and $\sigma = 0.2$ for a one-dimensional call option under the GBM model. $N$: number of QMC points; $S = 32$: number of digital shifts. Boundary growth condition limit: $\overline{\sigma} = \frac{1}{ \sqrt{T}\sigma} = 5$.} 
\end{figure}
\FloatBarrier

\FloatBarrier
\begin{figure}[h!]
	\centering	
	\begin{subfigure}{0.4\textwidth}
		\includegraphics[width=\linewidth]{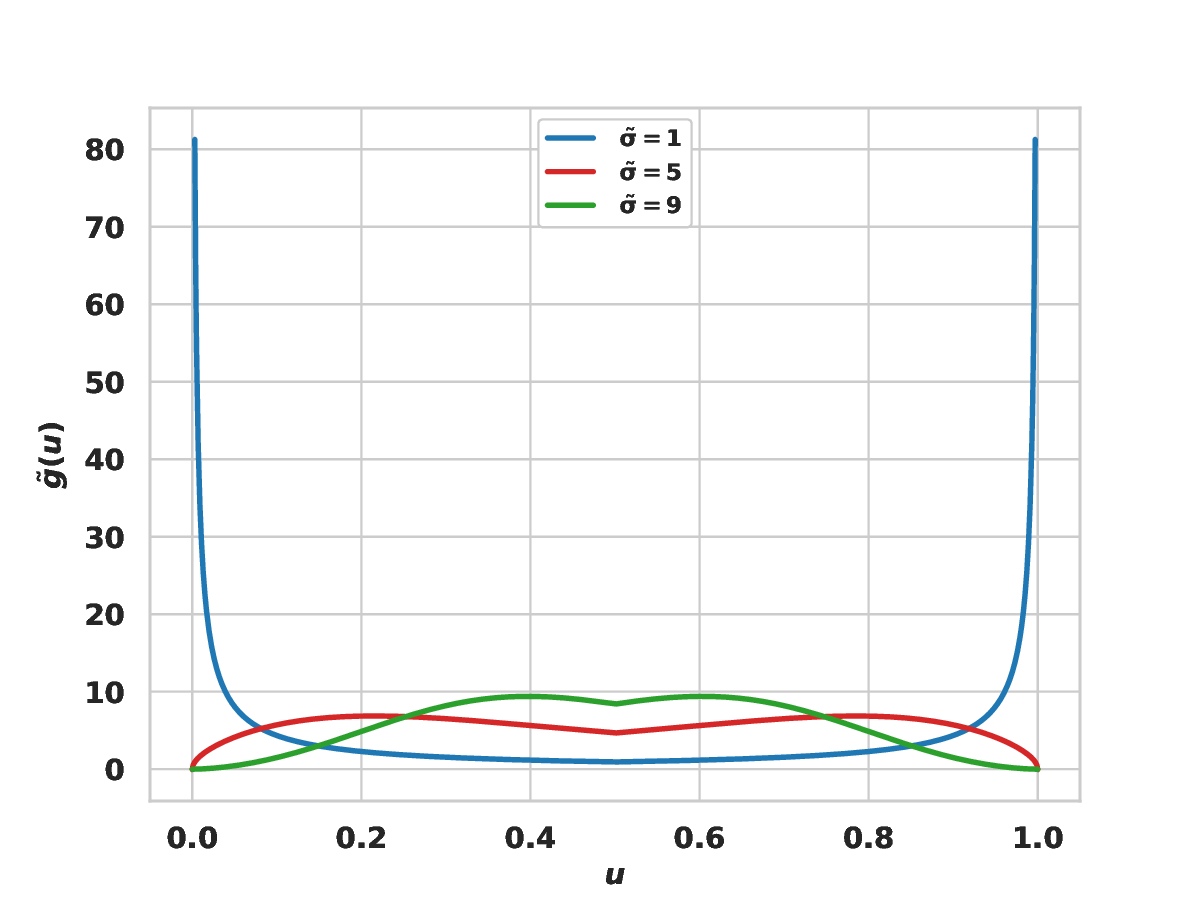}
		\caption{}
		\label{fig:nig_dom_trans_effect}
	\end{subfigure}
	\begin{subfigure}{0.4\textwidth}
		\includegraphics[width=\linewidth]{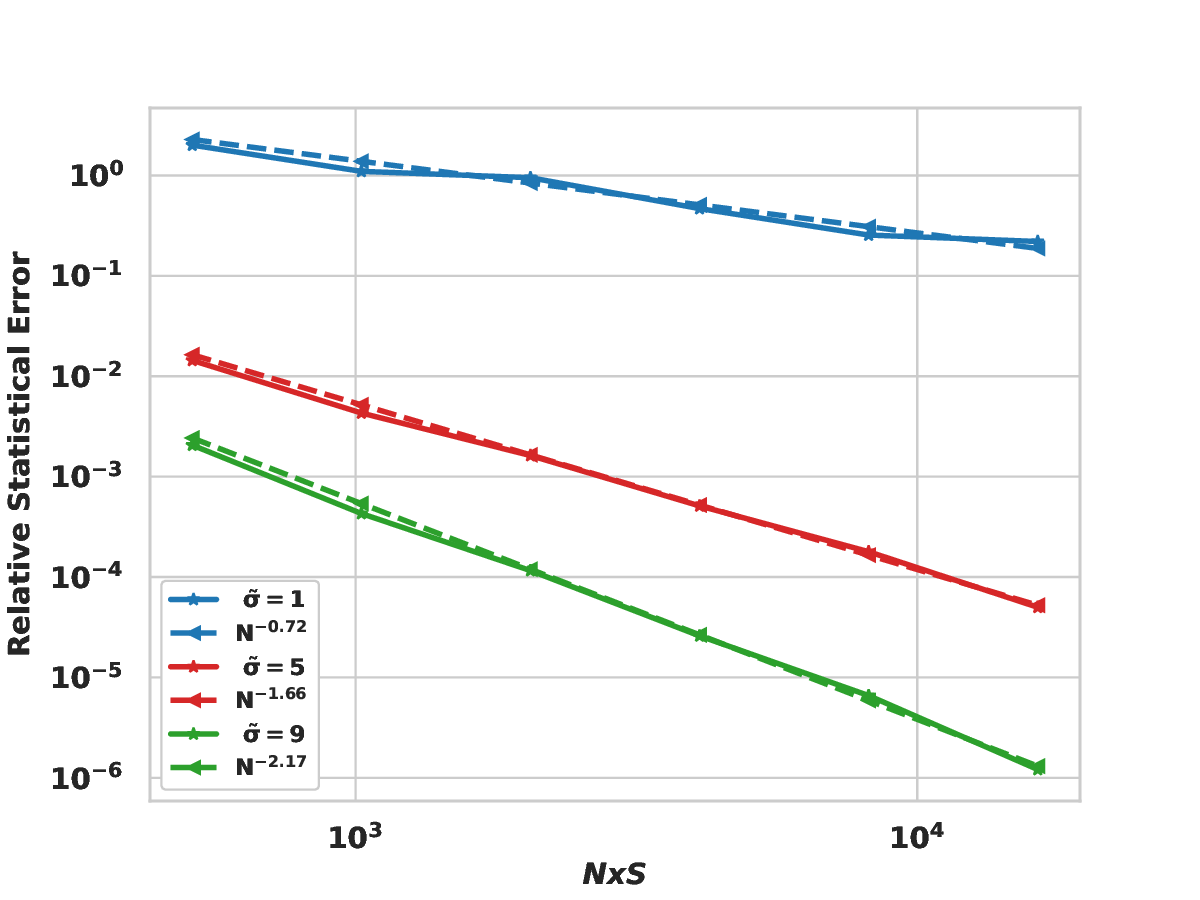}
		\caption{}
		\label{fig:nig_dom_trans_effect_conv}
	\end{subfigure}
	\caption[GH: Effect of the parameter $\tilde{\sigma}$ on (a) the shape of the transformed integrand, $\tilde{g}(u)$,  and (b) convergence of the relative statistical error of QMC]{Effect of the parameter $\tilde{\sigma}$ on (a) the shape of the transformed integrand $\tilde{g}(u)$ and (b) convergence of the relative statistical error of RQMC with $S_0 = 100$, $K = 100$, $r = 0$, $T= 1$, $\alpha = 20$, $\beta = -3, \delta = 0.2$ and $\lambda = 1$ for a one-dimensional call option under the GH  model. $N$: number of QMC points; $S = 32$: number of digital shifts. Boundary growth condition limit: $\overline{\sigma} = \frac{1}{ T \delta } = 5$.} 

\end{figure}
\FloatBarrier

\begin{figure}[H] 
	\centering
		\begin{subfigure}{0.4\textwidth}
		\includegraphics[width=\linewidth]{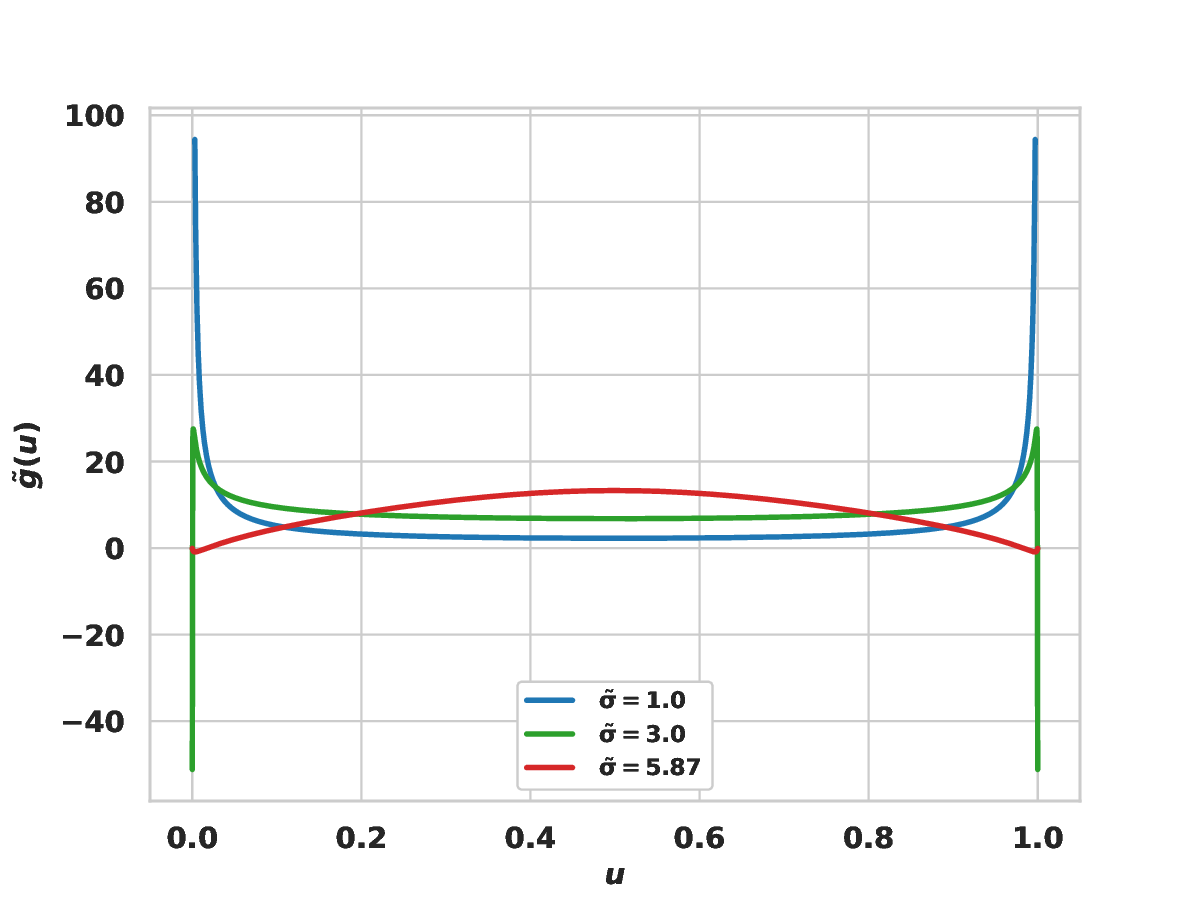} 
		\caption{}
		\label{fig:vg1_dom_trans_effect}
	\end{subfigure}
	\begin{subfigure}{0.4\textwidth}
		\includegraphics[width=\linewidth]{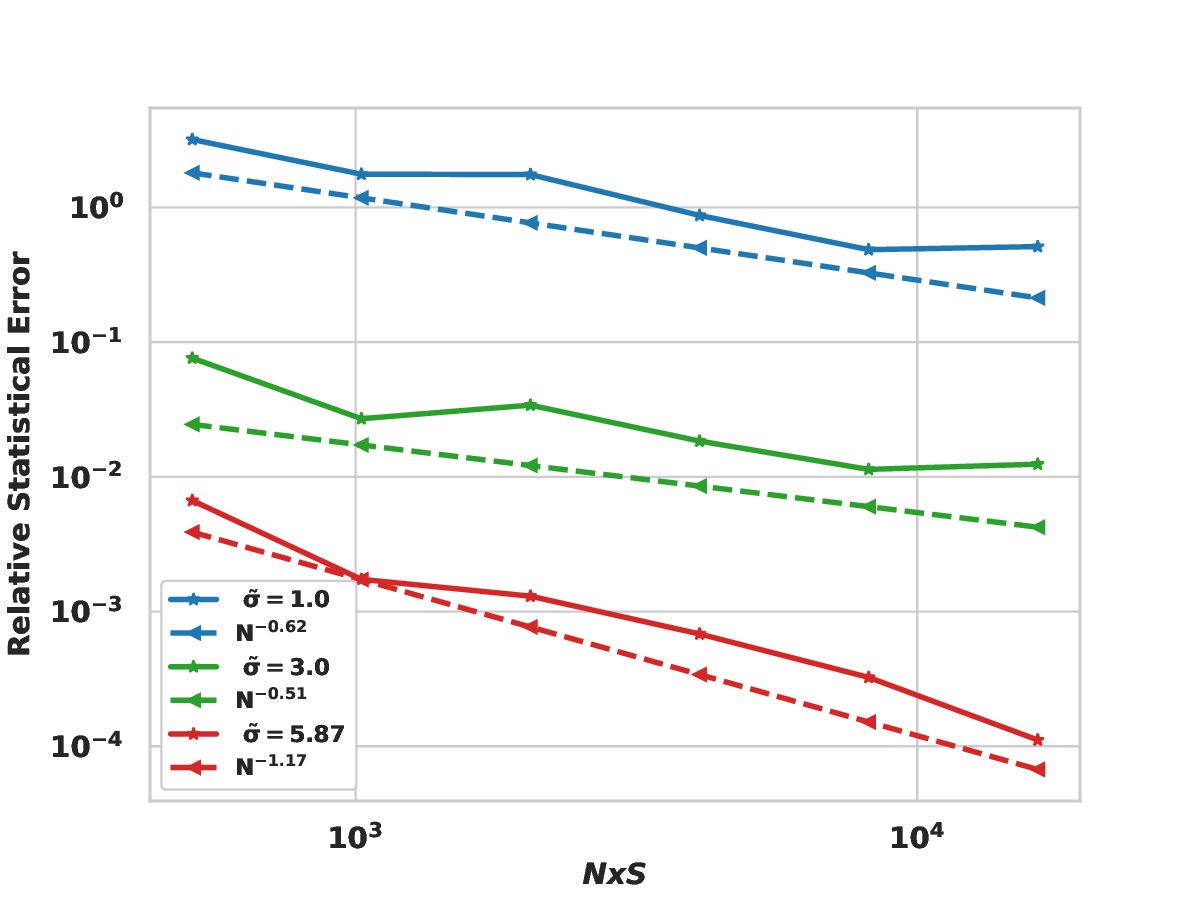}
		\caption{}
		\label{fig:vg1_dom_trans_effect_conv}
	\end{subfigure}
	\caption[VG: Effect of the parameter $\tilde{\sigma}$ on (left) the shape of the transformed integrand, $\tilde{g}(u)$,  and (right) convergence of statistical error of QMC]{Effect of the parameter $\tilde{\sigma}$ on (a) the shape of the transformed integrand $\tilde{g}(u)$ and (b) convergence of the RQMC error for a one-dimensional call option under the VG model with $S_0 = 100$, $K = 100$, $r = 0$, $T= 1$, $\sigma = 0.2$, $\theta = -0.3$, and $\nu = 0.1$. $N$: number of QMC points; $S = 32$: number of digital shifts. For the domain transformation, $\tilde{\nu} = \frac{2T}{ \nu } -1 = 19$. The critical value for the domain transformation is $\tilde{\sigma} = \left[ \frac{\nu \sigma^2 \tilde{\nu}}{2} \right]^{ \frac{T}{\nu - 2T}} (C_{\tilde{\nu}})^{-\frac{\nu}{ \nu - 2T}} = 5.87$.  } 
\end{figure}

\begin{figure}[H] 
	\centering
	\begin{subfigure}{0.4\textwidth}
		\includegraphics[width=\linewidth]{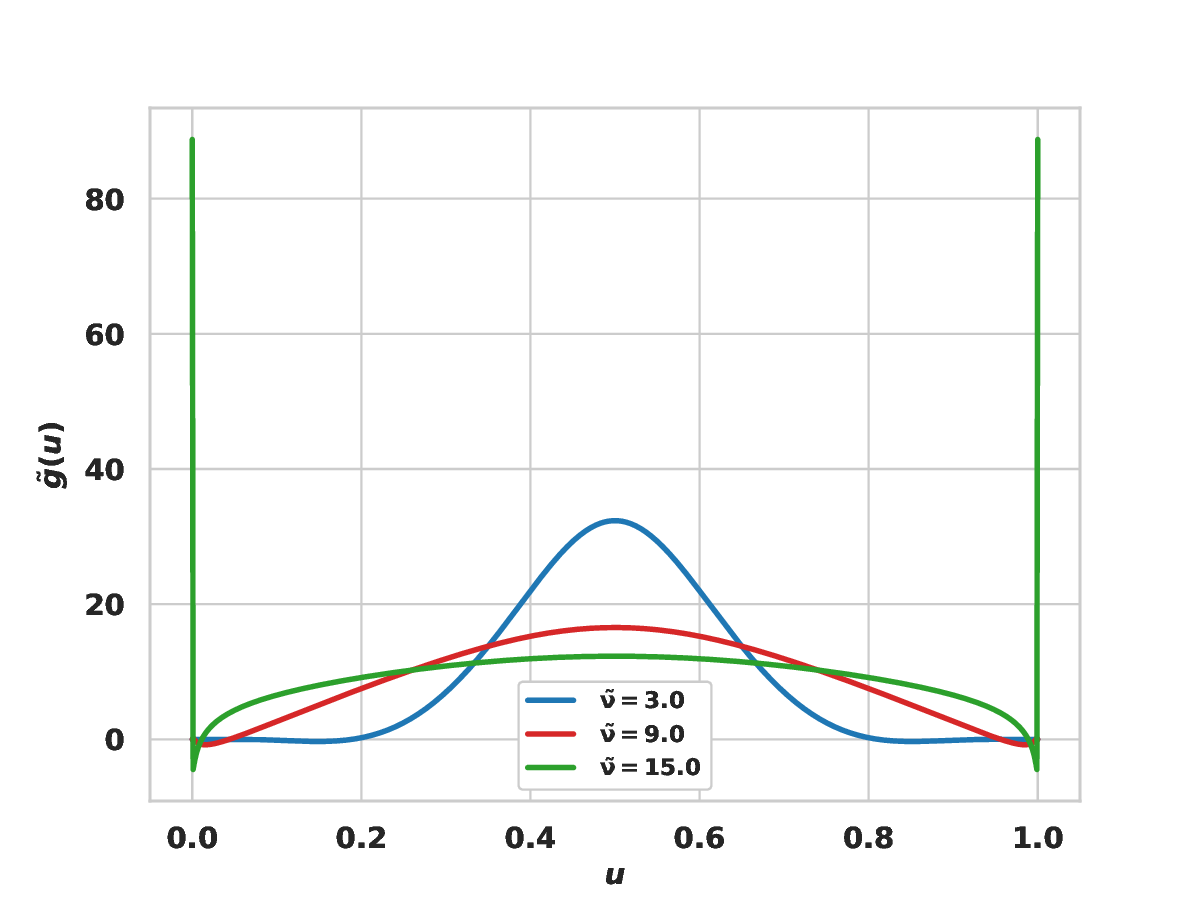}
		\caption{}
		\label{fig:vg2_dom_trans_effect}
	\end{subfigure}
	\begin{subfigure}{0.4\textwidth}
		\includegraphics[width=\linewidth]{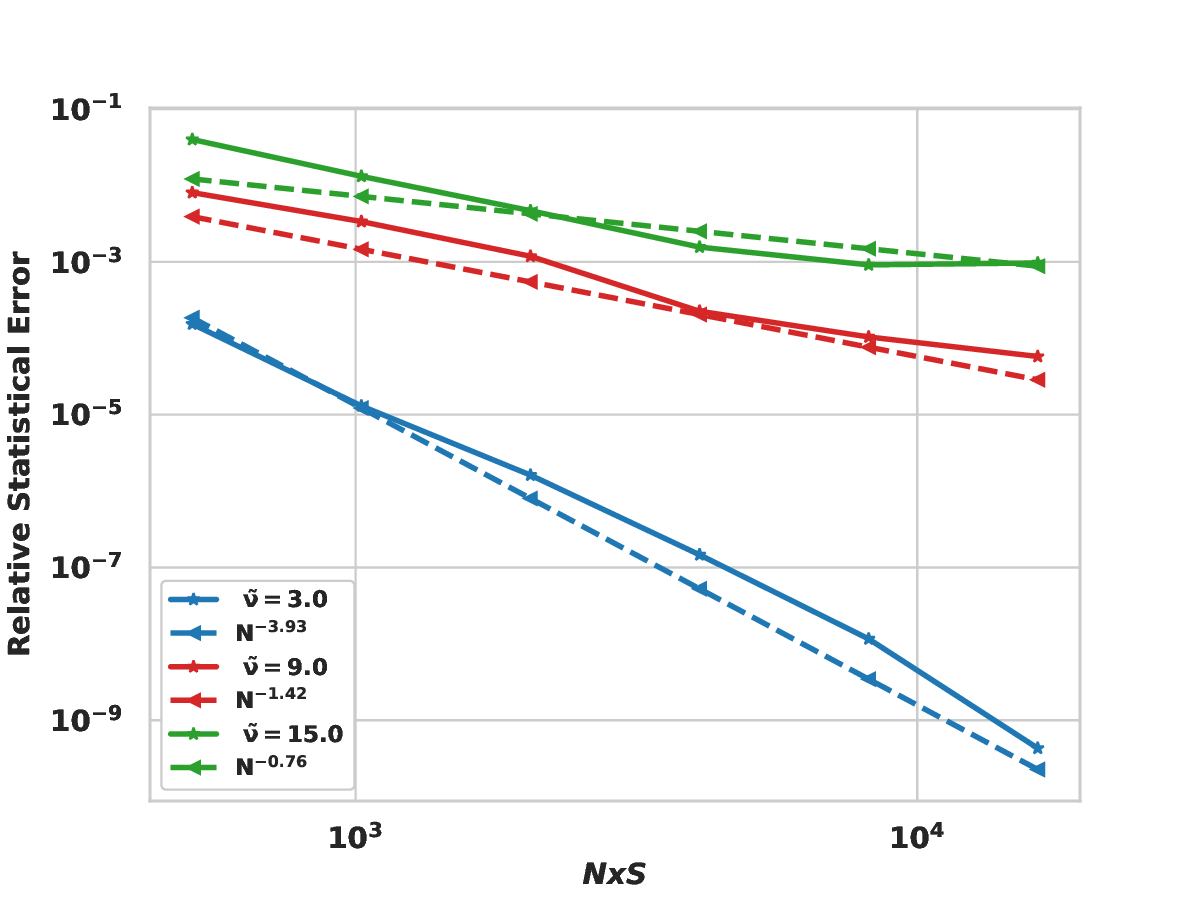} 
		\caption{}
		\label{fig:vg2_dom_trans_effect_conv}
	\end{subfigure}
	\caption[VG: Effect of the parameter $\tilde{\nu}$ on (left) the shape of the transformed integrand, $\tilde{g}(u)$,  and (right) convergence of statistical error of QMC]{Effect of the parameter $\tilde{\nu}$ on (a) the shape of the transformed integrand $\tilde{g}(u)$ and (b) convergence of the RQMC error for a one-dimensional call option under the VG model with $S_0 = 100$, $K = 100$, $r = 0$, $T= 1$, $\sigma = 0.2$, $\theta = -0.3$, and $\nu = 0.2$. $N$: number of QMC points; $S = 32$: number of digital shifts. For the domain transformation, $\tilde{\sigma} = \left[ \frac{\nu \sigma^2 \tilde{\nu}}{2} \right]^{ \frac{T}{\nu - 2T}} (C_{\tilde{\nu}})^{-\frac{\nu}{ \nu - 2T}}$ for each value of $\tilde{\nu}$. The critical value for the domain transformation is $\bar{\nu} = \frac{2T}{\nu}-1= 9$.   } 
\end{figure}

\subsection[Computational Comparison with the MC method in the Physical Space]{Computational Comparison of the Proposed Approach with the MC method in the Physical Space}

\label{sec:qmc_vs_mc_phys_sec}
This section demonstrates the advantage of the RQMC method in the Fourier space compared to the MC method in the physical space when the domain transformation from $\mathbb{R}^d$ to $[0,1]^d$ is appropriately performed as proposed in Section~\ref{sec:model_spec_dom_transf}. For illustration, we cover some examples of call on min, CON call, basket put and spread call options under the VG and the GH models.  Figures \ref{fig:phymc_vs_fourqmc_vg_CON_call_on_min} and \ref{fig:phymc_vs_fourqmc_vg_CON_call} reveal that the proposed approach significantly outperforms the MC method for options with up to six assets, particularly for small relative tolerances because the convergence rate of RQMC can be up to twice as fast as the convergence rate of the MC method. On the other hand, the advantage of using the RQMC over the MC method is less pronounced for the basket put and spread call options, which we illustrate for the example of the GH model. Figures \ref{fig:phymc_vs_fourqmc_nig_basket_put} and \ref{fig:phymc_vs_fourqmc_nig_spread_call} show that for three-dimensional basket put and spread call options, the advantage of using RQMC in the Fourier space over the MC method depends on the target relative tolerance level, with clear advantage observed for tolerances lower than $\text{TOL} = 10^{-2}$. A similar conclusion was drawn in the work of \cite{junike2023multidimensional}. We note that in comparison to call on min and CON call options in Figures  \ref{fig:cpu_vg_call_on_min} and \ref{fig:cpu_vg_CON_call}, the runtime of the RQMC method to achieve the $\text{TOL} = 10^{-1}$ is slower, although the convergence rates are  superior to those of the MC method. Nevertheless, Figures \ref{fig:phymc_vs_fourqmc_nig_basket_put_otm} and \ref{fig:phymc_vs_fourqmc_nig_spread_call_otm}  demonstrate that the RQMC method  handles basket put and spread call options which are deep out-of-the-money significantly better than the MC method.



\FloatBarrier
\begin{figure}[h!]
	\centering	
	\begin{subfigure}{0.4\textwidth}
		\includegraphics[width=\linewidth]{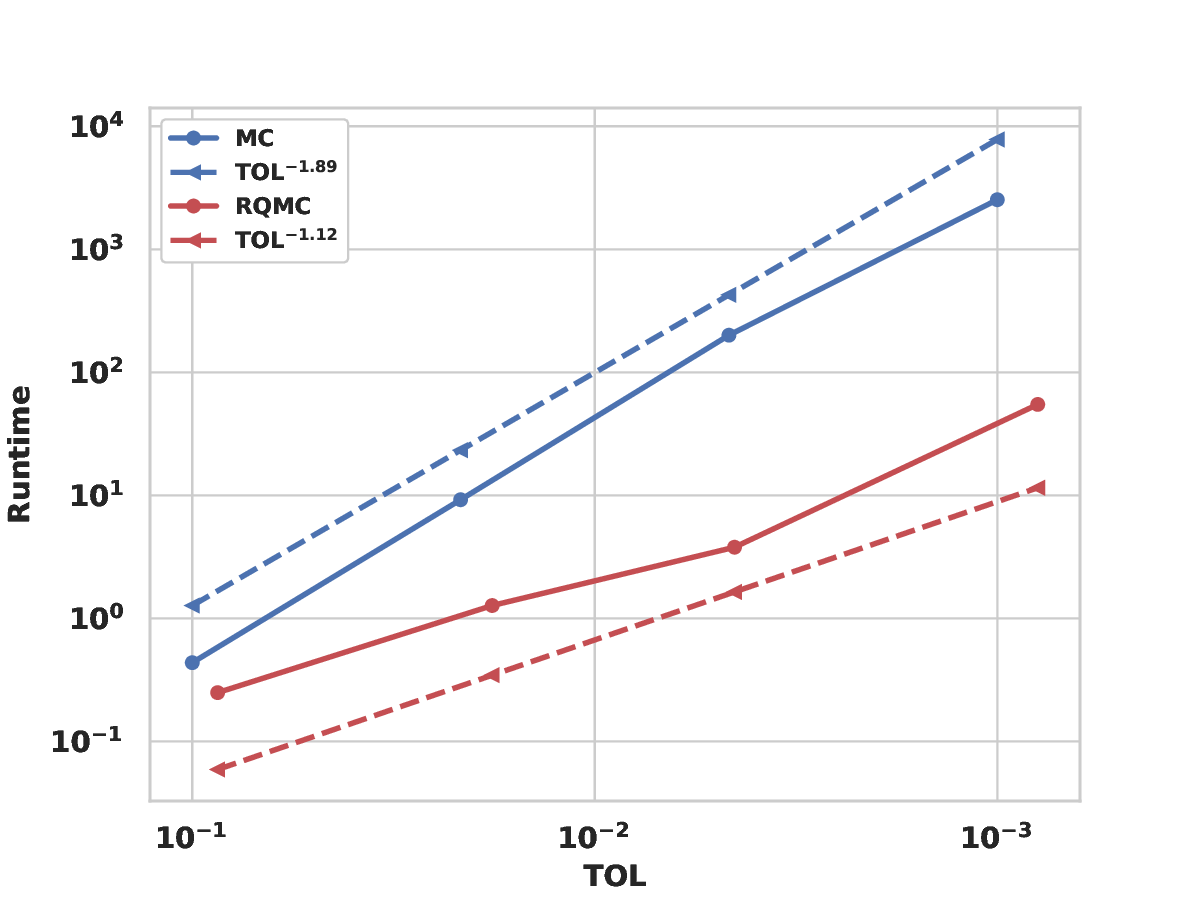}
		\caption{6D-CON call}
		\label{fig:phymc_vs_fourqmc_vg_CON_call}
	\end{subfigure}
		\hfill
	\begin{subfigure}{0.4\textwidth}
		\includegraphics[width=\linewidth]{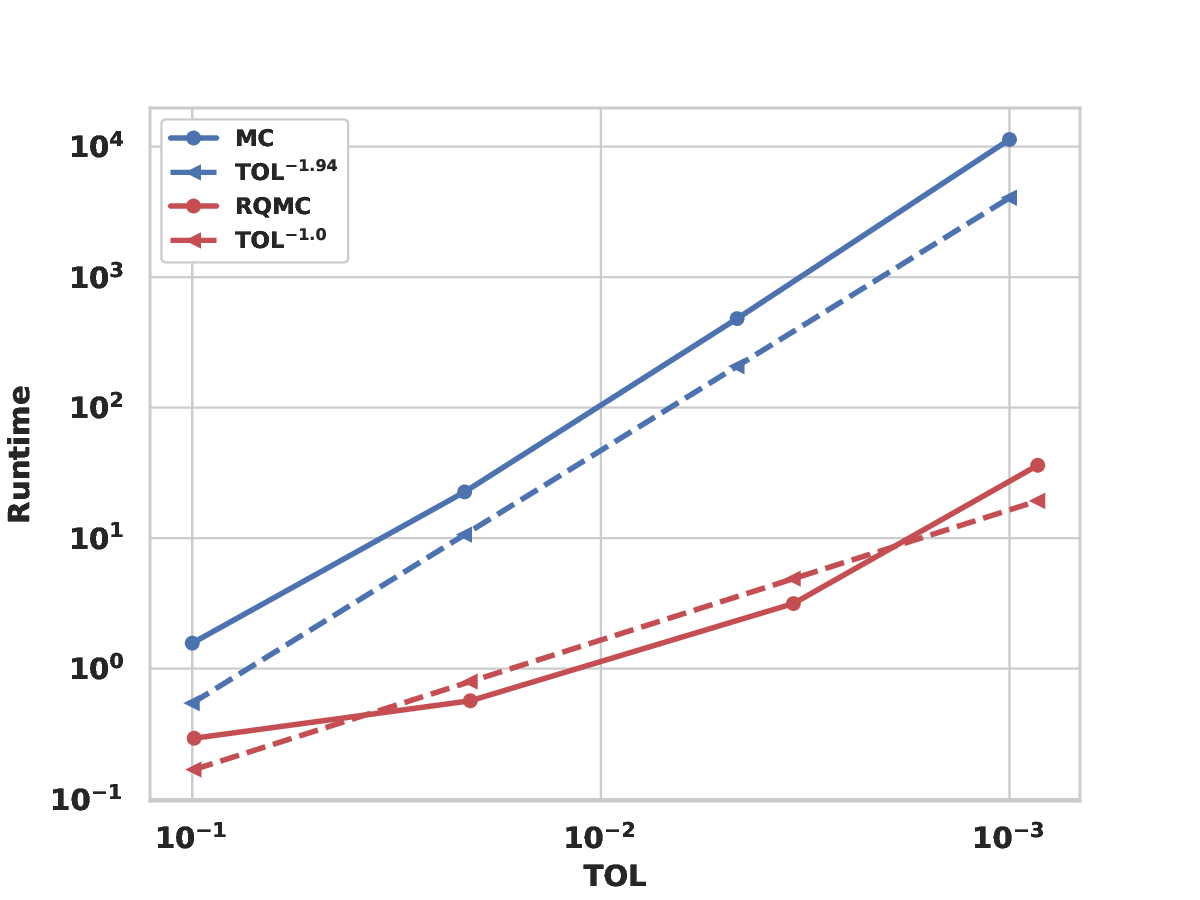}
		\caption{6D-call on min}
		\label{fig:phymc_vs_fourqmc_vg_CON_call_on_min}
	\end{subfigure}
\caption[]{VG: Average runtime in seconds with respect to relative tolerance levels TOL for (a) the six-dimensional CON call and (b) six-dimensional call on min option with for $S_0^j = 100$, $K = 100, r = 0, T = 1$, $\sigma_j =   0.4$, $\theta_j = -0.3$, $\nu = 0.1$ for all $j = 1,\ldots, 6$, and $\boldsymbol{\Sigma}_{ij} =   \rho_{ij}\sigma_i \sigma_j $ with  $\rho_{ij} = \frac{0.2}{1 + 0.1 | i - j|}$. The used domain transformation parameters are as in Table \ref{tab:multivariate_dom_tranf}. }  	\hspace{-3cm}
\label{fig:vg_mcphys_vs_qmcfour}
\end{figure}
\FloatBarrier

\FloatBarrier
\begin{figure}[htbp!] 
	\centering
	\begin{subfigure}{0.4\textwidth}
		\includegraphics[width=\linewidth]{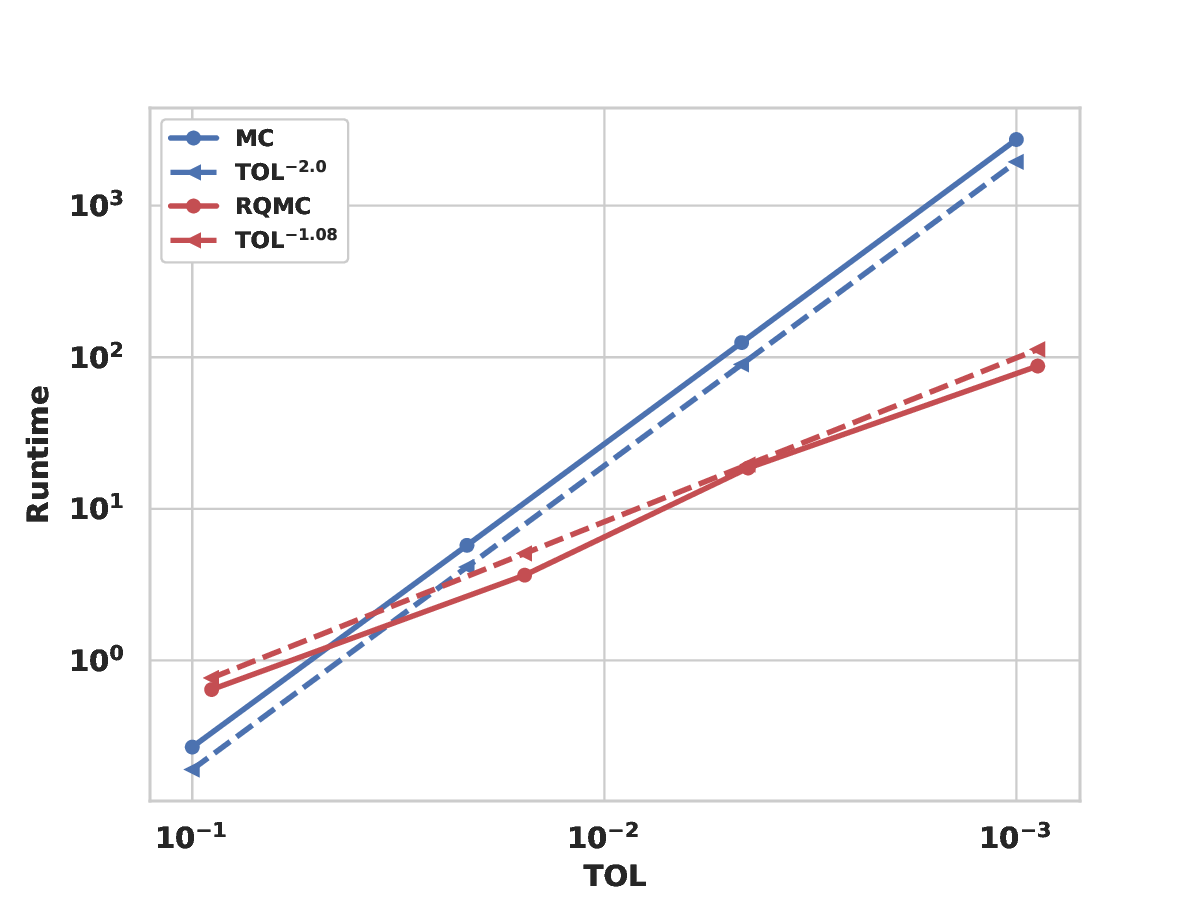}
		\caption{3D-basket put}
		\label{fig:phymc_vs_fourqmc_nig_basket_put}
	\end{subfigure}
	\hfill
	\begin{subfigure}{0.4\textwidth}
		\includegraphics[width=\linewidth]{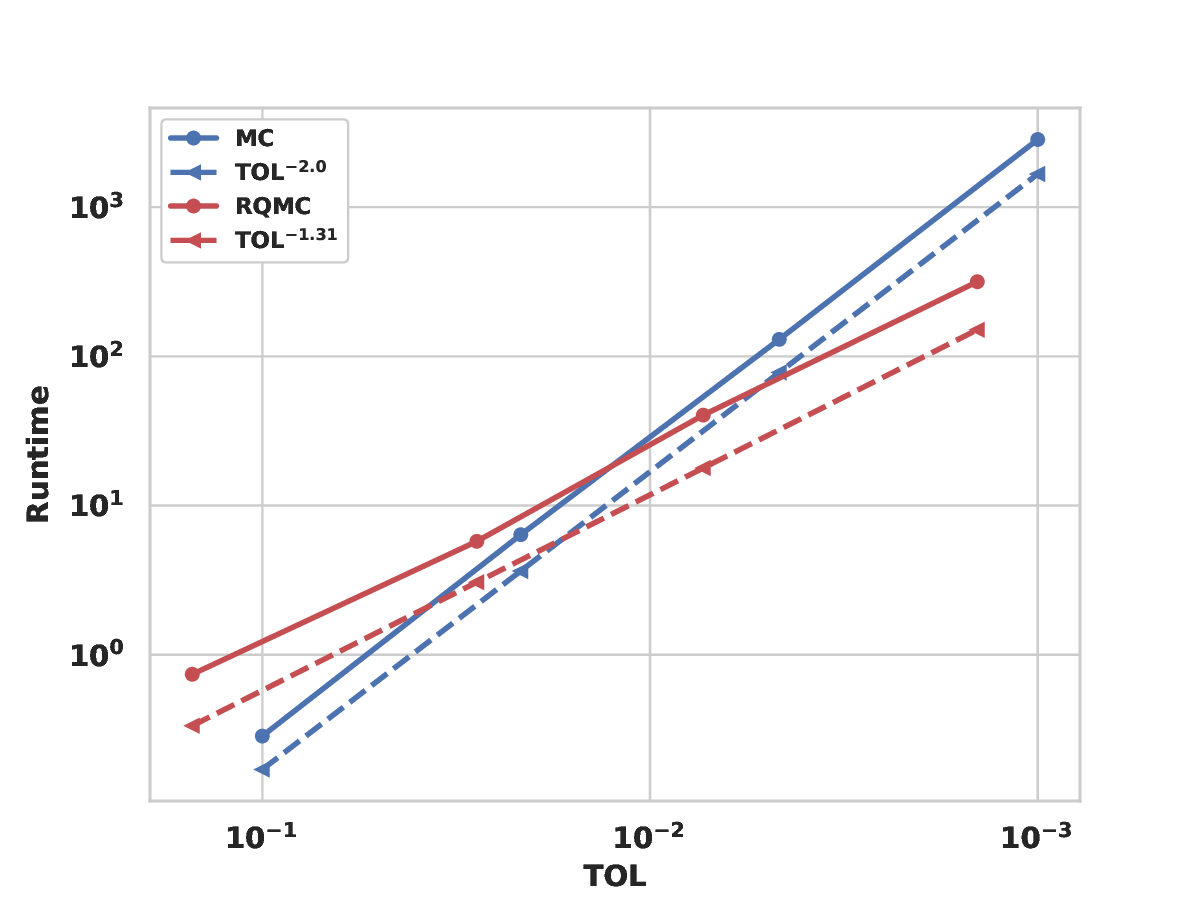}
		\caption{3D-spread call}
		\label{fig:phymc_vs_fourqmc_nig_spread_call}
	\end{subfigure}
	\caption[]{GH: Average runtime in seconds with respect to relative tolerance levels TOL for (a) the three-dimensional basket put with $S_0^j = 100$ and $K = 100$ for $j = 1,\ldots, 3$ and (b) three-dimensional spread call with $S_0^1= 100$, $S_0^j= \frac{100}{3}$ for $j = 1,2$, and $K = \frac{100}{3}$. Both experiments are done with parameters $r = 0, T = 1$, $\alpha = 10$, $\beta_j = -3$, $\delta = 0.2, \lambda = -\frac{1}{2}$ for all $j = 1,\ldots, 3$, and $\boldsymbol{\Delta} = \boldsymbol{I_3}$. The used domain transformation parameters are as in Table~\ref{tab:multivariate_dom_tranf}.} 
	\label{fig:gbm_mcphys_vs_qmcfour}
\end{figure}

\FloatBarrier

\begin{figure}[htbp!] 
	\centering
	\begin{subfigure}{0.4\textwidth}
		\includegraphics[width=\linewidth]{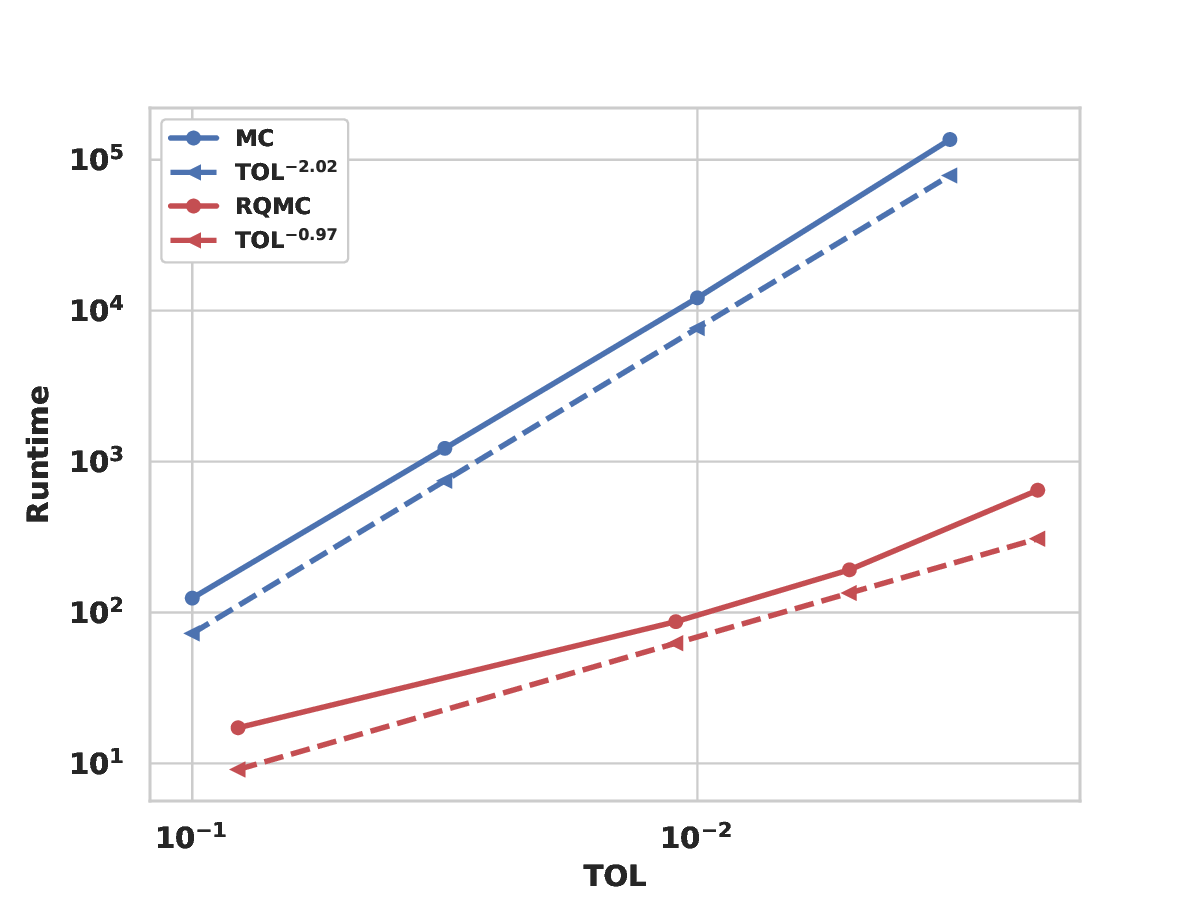}
		\caption{3D-basket put}
		\label{fig:phymc_vs_fourqmc_nig_basket_put_otm}
	\end{subfigure}
	\hfill
	\begin{subfigure}{0.4\textwidth}
		\includegraphics[width=\linewidth]{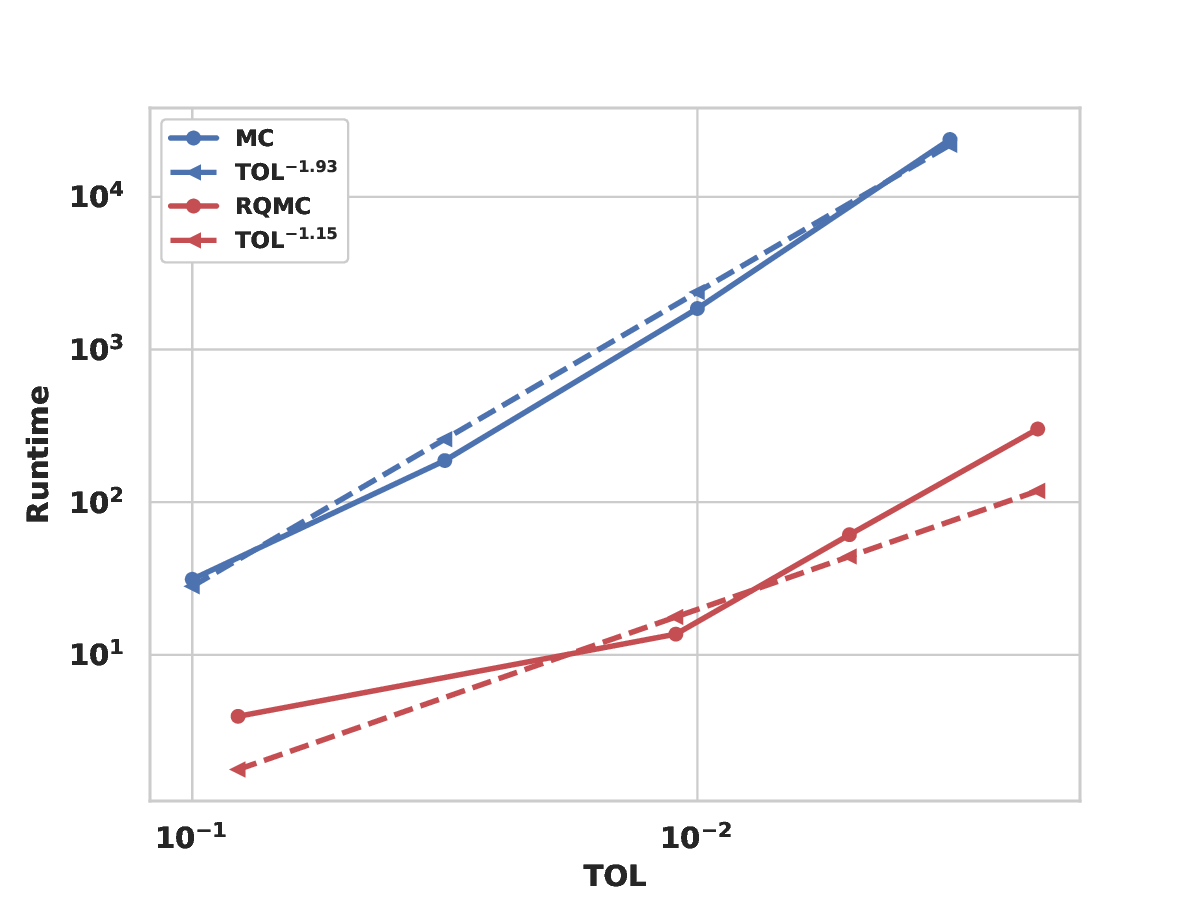}
		\caption{3D-spread call}
		\label{fig:phymc_vs_fourqmc_nig_spread_call_otm}
	\end{subfigure}
	\caption[]{GH: Average runtime in seconds with respect to relative tolerance levels TOL for (a) the three-dimensional basket put with $S_0^j = 100$ and $K = 60$ for $j = 1,\ldots, 3$ and (b) three-dimensional spread call with $S_0^1= 100$, $S_0^j= 50$ for $j = 1,2$, and $K = 50$. Both experiments are done with parameters $r = 0, T = 1$, $\alpha = 10$, $\beta_j = -3$, $\delta = 0.2, \lambda = -\frac{1}{2}$ for all $j = 1,\ldots, 3$, and $\boldsymbol{\Delta} = \boldsymbol{I_3}$. The used domain transformation parameters are as in Table~\ref{tab:multivariate_dom_tranf}.} 
	\label{fig:gbm_mcphys_vs_qmcfour_otm}
\end{figure}

\FloatBarrier
\vspace{-1cm}
\subsection{Runtime Comparison of RQMC with the MC and TP Quadratures }

\label{sec:qmc_vs_quad_run}
This section aims to compare the computational efficiency of the RQMC method with the commonly employed MC method in the physical space and the TP Gauss--Laguerre quadrature in the Fourier domain \cite{wiktorsson2015notes}. The runtimes in Figure \ref{fig:cpu_qmc_mc_quad}  are the average times in seconds of seven runs for each of the methods to achieve a relative tolerance $\text{TOL} = 10^{-2}$. 
 For the TP approach, only CPU times of up to five dimensions are measured in Python, and the values for the higher dimensions are numerically extrapolated due to the very slow convergence. For the MC and RQMC methods, the criterion for error convergence is the relative statistical error being less than the relative tolerance of $\text{TOL} = 10^{-2}$ . In contrast, for the TP quadrature, the stopping criterion is based on the exact relative error. The exact relative error is defined as the normalized absolute difference between the TP quadrature estimate and reference value computed using the MC method with $M = 10^9$ samples. Consequently, the statistical error of the MC and RQMC methods is an upper bound; thus, the CPU times for the MC and RQMC methods are conservative because, in practice, they converge faster with respect to the exact relative error. Figures~~\ref{fig:cpu_nig_call_on_min}, \ref{fig:cpu_nig_CON_call}, \ref{fig:cpu_vg_call_on_min}, \ref{fig:cpu_vg_CON_call} illustrate that the RQMC method applied in the Fourier space alleviates the curse of dimensionality, in contrast to the TP quadrature rule for which the cost grows exponentially with the dimensions. If the contour of integration is appropriately chosen and the domain transformation is handled carefully based on the proposed approach, the RQMC method significantly outperforms the MC method and TP quadrature for options with up to 15 underlying assets for the call on min and CON call options under GH and VG models.  In addition, although the convergence rate of the MC method is dimension-independent, the implied error constant increases with the dimensions. As a result, the RQMC approach reaches the target relative tolerance about 100 times faster than the MC method in the case of call on min options, and 1000 times faster in the case of CON call options. 
 
 {
 The largest tested dimension $d=15$ does not represent a  limitation of the proposed Fourier-RQMC methodology. It was chosen as an example that is already beyond the typical practical range of Fourier methods, while still allowing us to compute reference values and measure runtimes across several models and payoffs. At this dimension, numerical experiments already require substantial compute, and obtaining sufficiently accurate reference values becomes challenging, in particular when the reference is computed by physical-space Monte Carlo. Moreover, the results across increasing dimensions and across the three model classes show a consistent qualitative behaviour that the MC error decreases at the standard statistical rate, whereas the proposed Fourier-RQMC method benefits from the integration of the smoother Fourier integrand  with the appropriate domain transformation. These observed trends provide a basis for extrapolating the relative behaviour of the two methods beyond the tested dimensions, although not as a rigorous substitute for a separate high-dimensional benchmarking study. Extending the work to substantially larger dimensions, e.g., $d=50$, would require an optimized implementation, relying for instance on vectorization techniques or parallelization, and a more optimized methodology to obtain accurate reference values. }

\FloatBarrier
\begin{figure}[h!]
	\centering
	\begin{subfigure}[t]{0.45\textwidth}
		\centering
		\includegraphics[width=\linewidth]{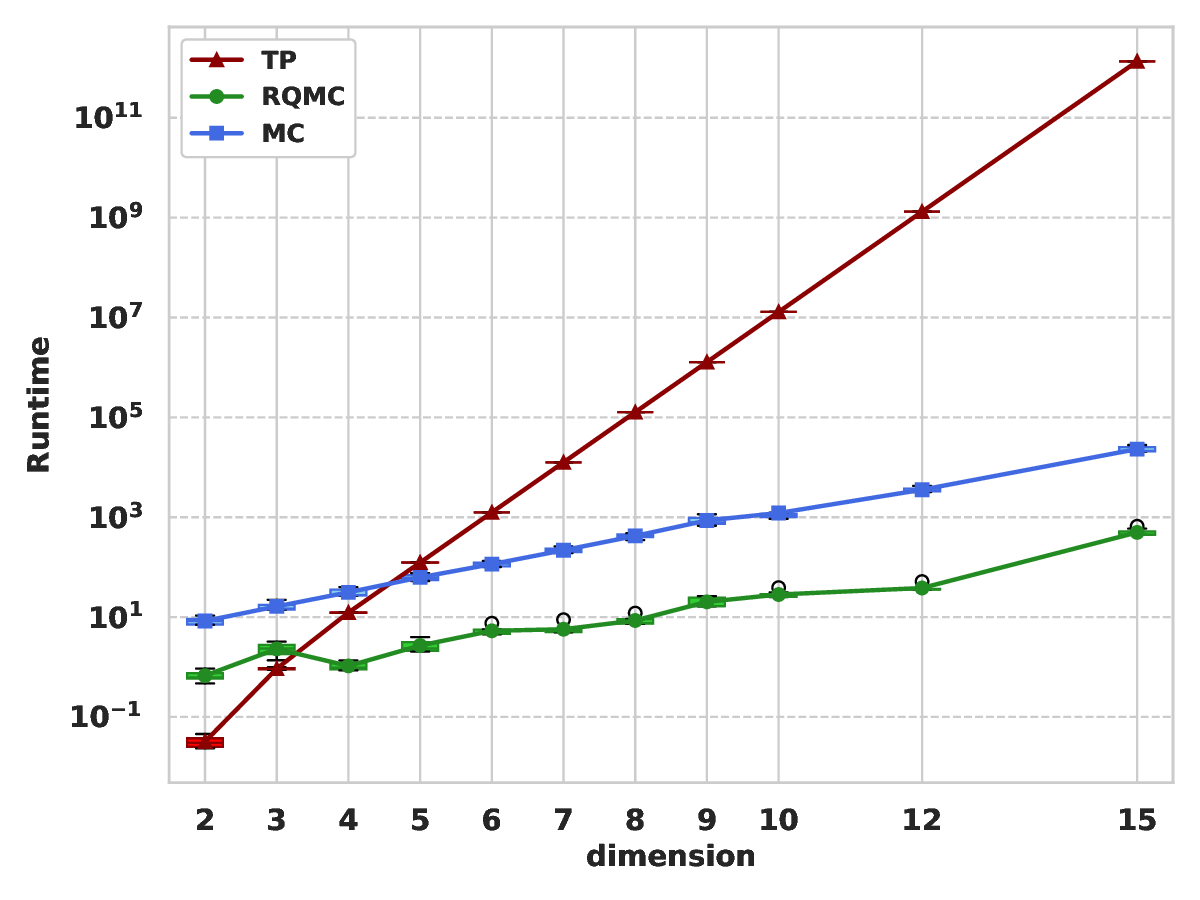}
		\caption{GH: call on min}
		\label{fig:cpu_nig_call_on_min} 
	\end{subfigure}
	\hfill
	\begin{subfigure}[t]{0.45\textwidth}
		\centering
		\includegraphics[width=\linewidth]{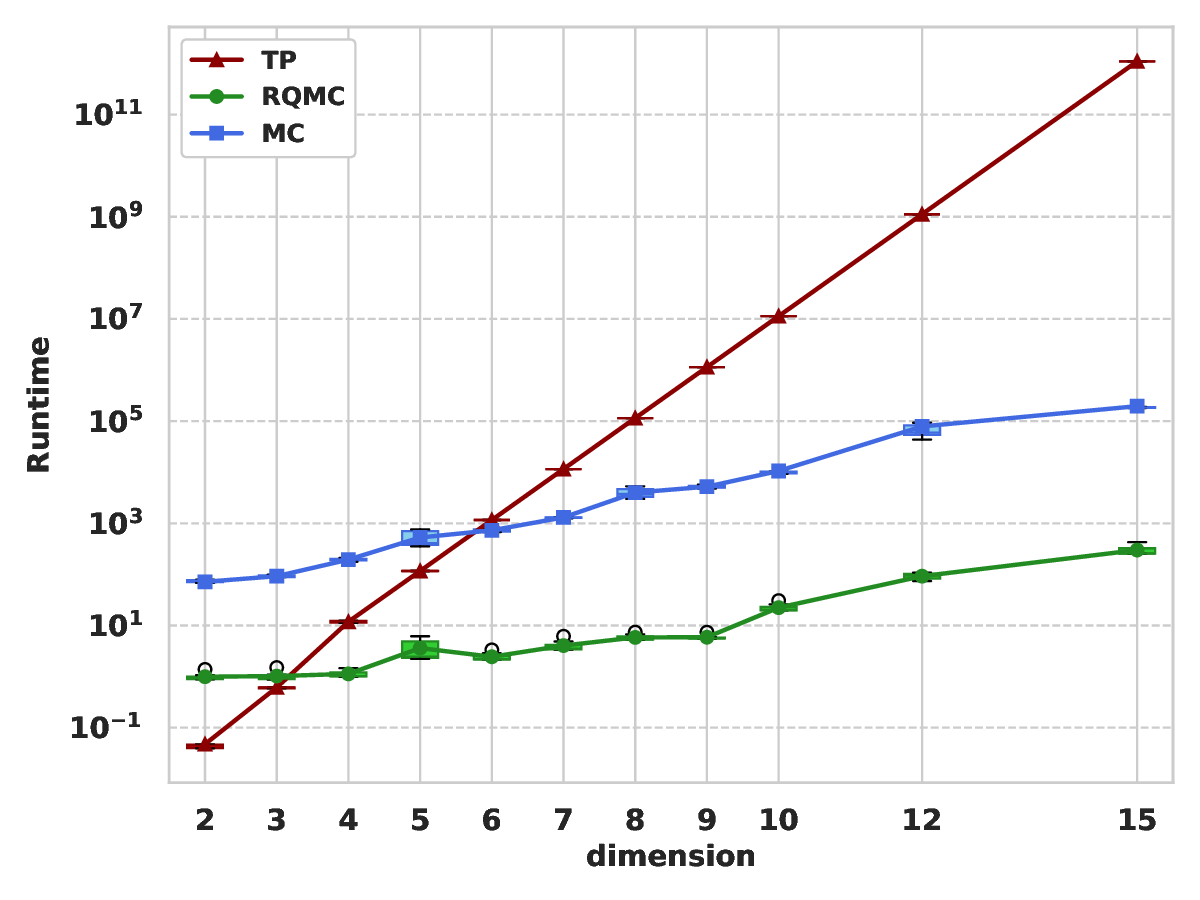}
		\caption{GH: CON call}
		\label{fig:cpu_nig_CON_call}
	\end{subfigure}
	
	\vspace{0.5cm} 
	\begin{subfigure}[t]{0.45\textwidth}
		\centering
		\includegraphics[width=\linewidth]{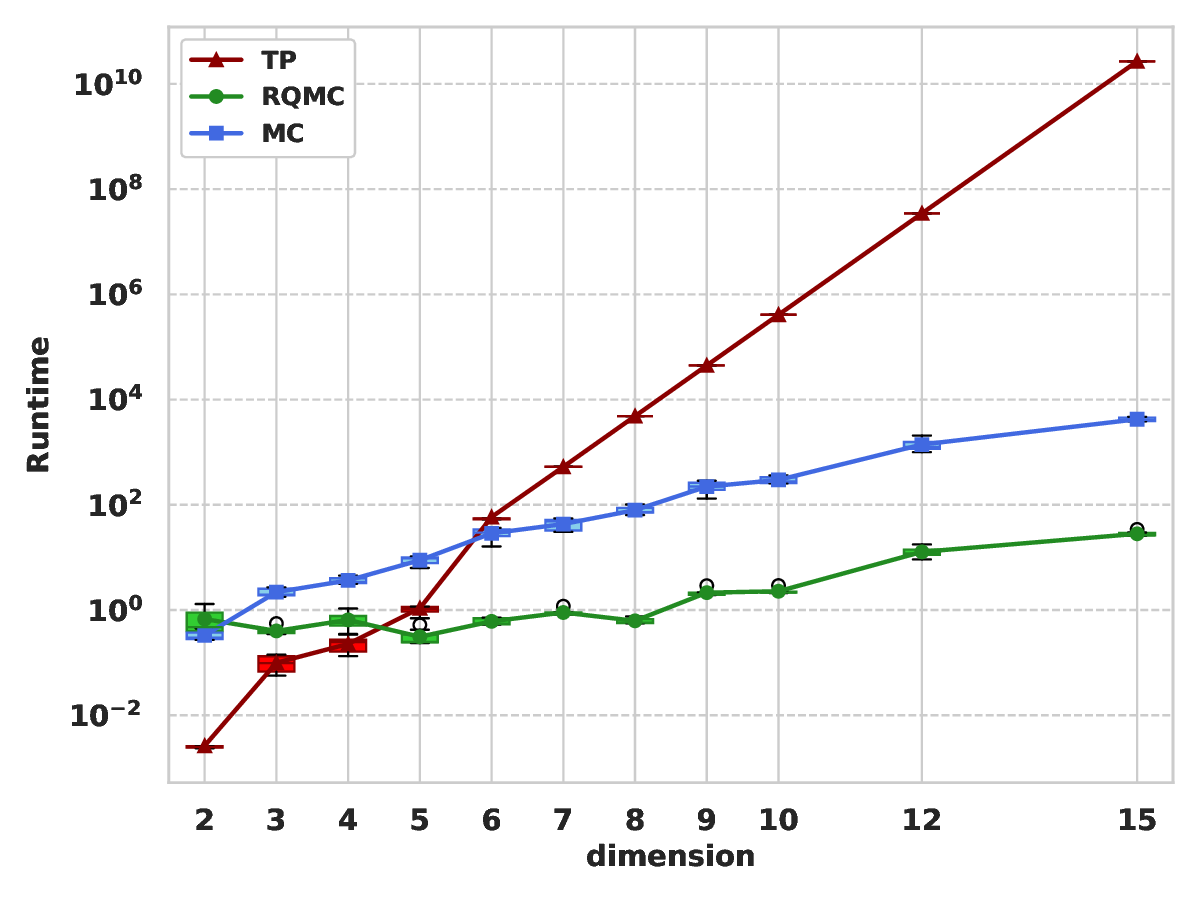}
		\caption{VG: call on min}
		\label{fig:cpu_vg_call_on_min}
	\end{subfigure}
	\hfill
	\begin{subfigure}[t]{0.45\textwidth}
		\centering
		\includegraphics[width=\linewidth]{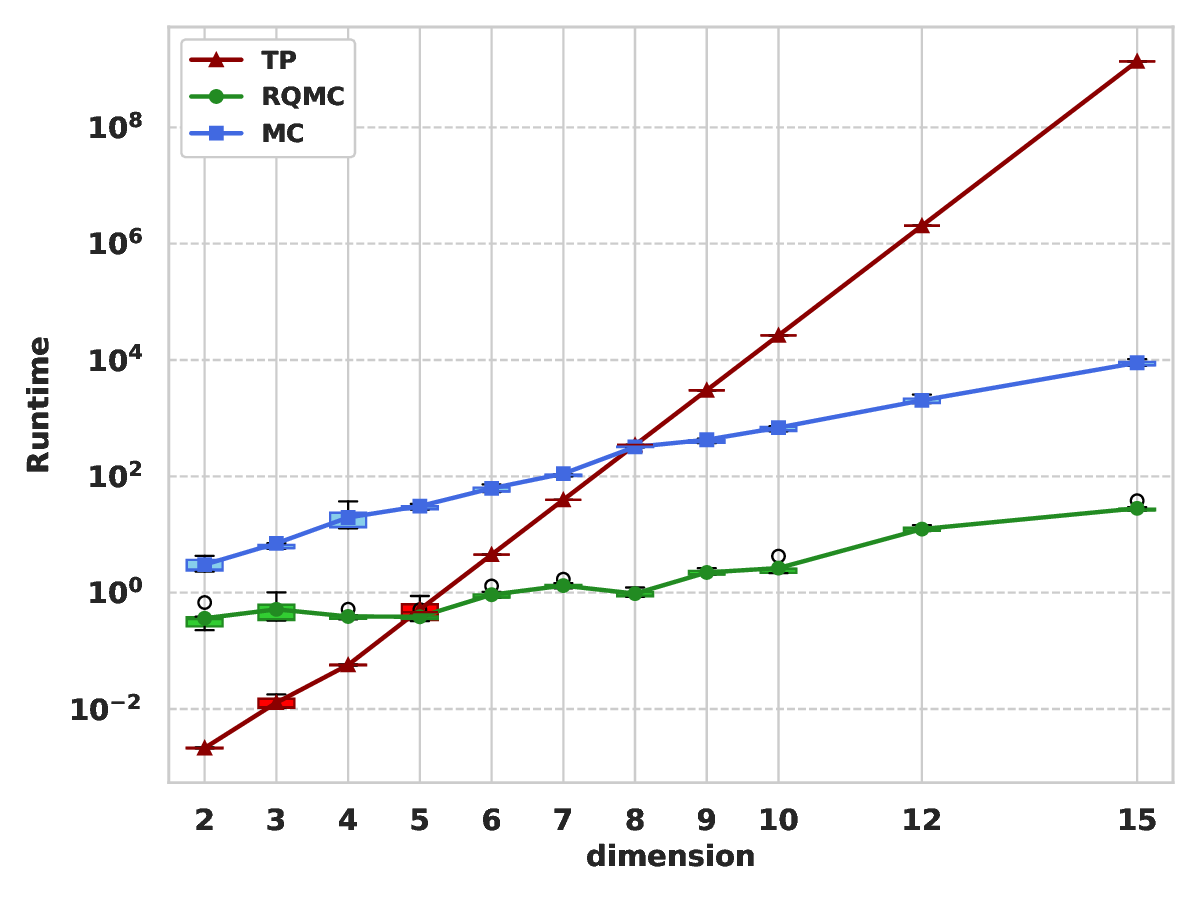}
		\caption{VG: CON call}
		\label{fig:cpu_vg_CON_call}
	\end{subfigure}
	
\caption[]{Comparison of runtime (in seconds) for the RQMC and TP quadrature methods in Fourier space, and the MC method in physical space, to achieve a relative error of $\text{TOL} = 10^{-2}$ across varying dimensions. Results are presented for two models: (a,b) the GH model with parameters $\alpha = 12$, $\beta_j = -3$, $\delta = 0.2, \lambda  = -\frac{1}{2}$, and $\boldsymbol{\Delta} = \boldsymbol{I}_d$ for (a) call on min and (b) CON call payoffs; (c,d) the VG model with parameters $\sigma_j = 0.4$, $\theta_j = -0.3$, $\nu = 0.1$, and $\boldsymbol{\Sigma} = \boldsymbol{I}_d$ for (c) call on min and (d) CON call payoffs. All experiments used $S_0^j = 100$, $K = 100$, $r = 0$, and $T = 1$ for all $j = 1, \ldots, d$. Domain transformations were applied as specified in Table~\ref{tab:multivariate_dom_tranf}, and optimal damping parameters were selected following the guidelines in \cite{bayer2023optimal}. The RQMC method employed $S = 30$ digital shifts.}

	\label{fig:cpu_qmc_mc_quad}
\end{figure}
\FloatBarrier


\textbf{Acknowledgments} C. Bayer gratefully acknowledges support from the German Research Foundation (DFG) via the Cluster of Excellence MATH+ (Project AA4-2). This publication is based on work supported by the King Abdullah University of Science and Technology (KAUST) Office of Sponsored Research (OSR) under Award No. OSR-2019-CRG8-4033 and the Alexander von Humboldt Foundation. Antonis Papapantoleon gratefully acknowledges the financial support from the Hellenic Foundation for Research and Innovation (Grant No. HFRI-FM17-2152). M. Samet acknolwedges the support from the Helmholtz School for Data Science in Life, Earth and Energy (HDS-LEE).

 \textbf{Declarations of Interest} The authors report no conflicts of interest. The authors alone are responsible for the content and writing of the paper.


\bibliographystyle{plain}
\bibliography{bibliography} 
\appendix

\section{Proof of Proposition \ref{prop:Multivariate Fourier pricing valuation formula}}
\label{sec:proof_of_fourier_valuation_formula}
Suppose Assumption \ref{ass:Assumptions on  the distribution} holds, then there exists $\boldsymbol{R} \in \delta_X \subseteq \mathbb{R}^d$, such that the Fourier transform of the exponentially dampened conditional transition probability density function of $\boldsymbol{X}_T$  is given by:

\begin{equation}
	\left(	\widehat{ e^{ - \boldsymbol{R}^\top \boldsymbol{x}} \rho_{\boldsymbol{X}_T} } \right)(\boldsymbol{y}) = \int_{\mathbb{R}^d} e^{- \mathrm{i}  \boldsymbol{y}^\top \boldsymbol{x}} e^{ - \boldsymbol{R}^\top \boldsymbol{x}} \rho_{\boldsymbol{X}_T} (\boldsymbol{x}) \mathrm{d} \boldsymbol{x} = \Phi_{\boldsymbol{X}_T}(\mathrm{i} \boldsymbol{R}-\boldsymbol{y}), \quad \boldsymbol{y} \in \mathbb{R}^d.
\end{equation}

In addition, under Assumption \ref{ass:Assumptions on  the distribution}, the characteristic function $ \boldsymbol{y} \mapsto \Phi_{\boldsymbol{X}_T}(\mathrm{i} \boldsymbol{R} - \boldsymbol{y}) \in L^1(\mathbb{R}^d)$. Consequently, the inverse Fourier transform theorem enables us to express the density function as:

\begin{equation}
	\label{eq:density_inverse_fourier}
	\rho_{\boldsymbol{X}_T} (\boldsymbol{x})  = e^{ \boldsymbol{R}^\top \boldsymbol{x}} \left( 2 \pi \right)^{-d} \Re \left[ \int_{\mathbb{R}^d} e^{\mathrm{i}  \boldsymbol{y}^\top \boldsymbol{x}} \Phi_{\boldsymbol{X}_T}(\mathrm{i} \boldsymbol{R}-\boldsymbol{y}) \mathrm{d} \boldsymbol{y} \right].
\end{equation}

The European option price is given as follows:
\begin{equation}
	\begin{aligned}
		V(\boldsymbol{\Theta}_X, \boldsymbol{\Theta}_P) &:= e^{-rT}\mathbb{E}[P(\boldsymbol{X}_T)] = e^{-rT}\int _{\mathbb{R}^d} P(\boldsymbol{x})  	\rho_{\boldsymbol{X}_T} (\boldsymbol{x}) \mathrm{d}  \boldsymbol{x} \\
		&= e^{-rT}\int_{\mathbb{R}^d}  P(\boldsymbol{x}) \left( e^{ \boldsymbol{R}^\top \boldsymbol{x}} \left( 2 \pi \right)^{-d}  \Re \left[\int_{\mathbb{R}^d} e^{\mathrm{i}  \boldsymbol{y}^\top \boldsymbol{x}} \Phi_{\boldsymbol{X}_T}(\mathrm{i} \boldsymbol{R}-\boldsymbol{y}) \mathrm{d} \boldsymbol{y} \right] \right)  \mathrm{d}  \boldsymbol{x} \\
		& =  e^{-rT}\left( 2 \pi \right)^{-d}   \Re \left[\int_{\mathbb{R}^d}  \Phi_{\boldsymbol{X}_T}(\mathrm{i} \boldsymbol{R}-\boldsymbol{y})  \left( \int_{\mathbb{R}^d}  e^{\mathrm{i}  \boldsymbol{y}^\top \boldsymbol{x}} e^{ \boldsymbol{R}^\top \boldsymbol{x}}  P(\boldsymbol{x}) \mathrm{d}\boldsymbol{x}   \right)  \mathrm{d} \boldsymbol{y} \right]  \\
		&=   e^{-rT}\left( 2 \pi \right)^{-d}  \Re \left[\int_{\mathbb{R}^d}  \Phi_{\boldsymbol{X}_T}(\mathrm{i} \boldsymbol{R}-\boldsymbol{y})  \hat{P}(\mathrm{i} \boldsymbol{R}-\boldsymbol{y}) \mathrm{d} \boldsymbol{y} \right] \\
		& =  e^{-rT}\left( 2 \pi \right)^{-d}  \Re \left[\int_{\mathbb{R}^d}  \Phi_{\boldsymbol{X}_T}(\boldsymbol{y} + \mathrm{i} \boldsymbol{R})  \hat{P}(\boldsymbol{y} + \mathrm{i} \boldsymbol{R}) \mathrm{d} \boldsymbol{y} \right]
	\end{aligned}
\end{equation}

The derivation proceeds by substituting Equation \eqref{eq:density_inverse_fourier} in the second line, and by further restricting the contour of integration to $\boldsymbol{R} \in \delta_V := \delta_P \cap \delta_X \neq \emptyset$. Application of Fubini's theorem in the third line permits the interchange of integration order. The final line follows from the change of variables $\boldsymbol{y}\mapsto-\boldsymbol{y}$ in the integral over $\mathbb{R}^d$.

The application of Fubini's theorem is validated by demonstrating the absolute integrability of:

\begin{equation}
	\begin{aligned}
		\int_{\mathbb{R}^d} \int_{\mathbb{R}^d } | \Phi_{\boldsymbol{X}_T}(\mathrm{i} \boldsymbol{R}-\boldsymbol{y}) | | e^{\mathrm{i}  \boldsymbol{y}^\top \boldsymbol{x}} | e^{ \boldsymbol{R}^\top \boldsymbol{x}}  P(\boldsymbol{x}) \mathrm{d}\boldsymbol{y}  \mathrm{d}\boldsymbol{x} & \leq \int_{\mathbb{R}^d} e^{ \boldsymbol{R}^\top \boldsymbol{x}}  P(\boldsymbol{x})  \left( \int_{\mathbb{R}^d}  | \Phi_{\boldsymbol{X}_T}(\mathrm{i} \boldsymbol{R}-\boldsymbol{y}) |  \mathrm{d}\boldsymbol{y}\right)\mathrm{d}\boldsymbol{x} \\
		& \leq C  \int_{\mathbb{R}^d} e^{ \boldsymbol{R}^\top \boldsymbol{x}}  P(\boldsymbol{x}) \mathrm{d}\boldsymbol{x} < + \infty
	\end{aligned}
\end{equation}

Here, \(C > 0\) exists by virtue of Assumption \ref{ass:Assumptions on  the distribution}, while the finiteness of the final bound is guaranteed by Assumption \ref{ass:Assumptions on  the payoff}.

\section{Pricing Models}
\label{sec:pricing_models}
For the asset dynamics in this work, we studied three models given by Examples~\ref{ex:GBM_model}, \ref{ex:NIG_model}, \ref{ex:GH_model}, and \ref{ex:VG_model}.

\begin{example}[Geometric Brownian Motion (GBM)]
	\label{ex:GBM_model}
	The discounted characteristic function of the GBM model under the risk-neutral pricing measure is given in the form of $ \Phi^{\text{GBM}}_{\boldsymbol{X}_T}(\boldsymbol{z}) = e^{\mathrm{i} \boldsymbol{z}^{\top} ( \boldsymbol{X}_0  + (r + \boldsymbol{\mu}_{\text{GBM}}) T ) } \phi^{\text{GBM}}_{\boldsymbol{X}_T}(\boldsymbol{z})$ for $\boldsymbol{z} = \boldsymbol{y} + \mathrm{i} \boldsymbol{R} \in \mathbb{C}^d $ with $\boldsymbol{R}  \in \delta_X^{\text{GBM}} = \mathbb{R}^d$, by
	\begin{equation}
		\phi^{\text{GBM}}_{\boldsymbol{X}_T}(\boldsymbol{z}) = \exp \left(-\frac{T}{2} \boldsymbol{z}^{\top} \boldsymbol{\Sigma} \boldsymbol{z}\right), \quad \Im[\boldsymbol{z}] \in \delta_X^{\text{GBM}}.
	\end{equation}
where $\boldsymbol{\sigma} = (\sigma_1,\ldots, \sigma_d) \in \mathbb{R}^d_+$ is the vector of volatilities, and we denote by $\boldsymbol{\Sigma} \in \mathbb{R}^{d \times d}$ the covariance matrix of the log returns i.e. $\boldsymbol{\Sigma}_{ij}=\rho_{i,j} \sigma_i \sigma_j$, with $\rho_{i,j}$ denoting the correlation between the Brownian motions of the $i^{th}$ and $j^{th}$ asset price processes. Moreover, $\boldsymbol{\mu}_{\text{GBM}}$ is a vector of drift correction terms that ensure that $\left\{e^{-r t} S^j_t\right\}_{ t \geq 0}$ is a martingale for all $j = 1,\ldots d$, and is given by
	$$
	\mu_{\text{GBM}}^j  = - \frac{\sigma_j^2}{2}, \quad  j =1, \ldots, d.
	$$

\end{example}

\begin{example}[Normal Inverse Gaussian (NIG)] 
	\label{ex:NIG_model}
\sloppy	The discounted characteristic function of the NIG model under the risk-neutral pricing measure is given in the form of
	$ \Phi^{\text{NIG}}_{\boldsymbol{X}_T}(\boldsymbol{z}) = e^{\mathrm{i} \boldsymbol{z}^{\top} ( \boldsymbol{X}_0  + (r + \boldsymbol{\mu}_{\text{NIG}}) T ) } \phi^{\text{NIG}}_{\boldsymbol{X}_T}(\boldsymbol{z})$ for $\boldsymbol{z} = \boldsymbol{y} + \mathrm{i} \boldsymbol{R} \in \mathbb{C}^d $ with $\boldsymbol{R}  \in \delta_X^{\text{NIG}} = \left\{\boldsymbol{R} \in \mathbb{R}^d \mid \alpha^2-(\boldsymbol{\beta}-\boldsymbol{R})^{\top} \boldsymbol{\Delta}(\boldsymbol{\beta}-\boldsymbol{R})>0\right\}$, for $\quad \Im[\boldsymbol{z}] \in \delta_X^{\text{NIG}}$, by  \cite{prause1999generalized}
	\begin{equation}
		\phi^{\text{NIG}}_{\boldsymbol{X}_T}(\boldsymbol{z}) = \exp \left(\delta T\left(\sqrt{\alpha^2-\boldsymbol{\beta}^{\top} \boldsymbol{\Delta} \boldsymbol{\beta}}-\sqrt{\alpha^2-(\boldsymbol{\beta}+\mathrm{i} \boldsymbol{z})^{\top} \boldsymbol{\Delta}(\boldsymbol{\beta}+\mathrm{i} \boldsymbol{z})}\right)\right)
	\end{equation}
where $\alpha \in \mathbb{R}_{+}, \delta \in \mathbb{R}_{+}, \boldsymbol{\beta} \in \mathbb{R}^d$ with $\alpha^2>\boldsymbol{\beta}^{\mathrm{T}} \boldsymbol{\Delta} \boldsymbol{\beta},$ and $\boldsymbol{\Delta} \in \mathbb{R}^{d \times d}$ is a symmetric positive definite matrix with a unit determinant i.e. $| \det(\boldsymbol{\Delta})  | = 1$, related to the covariance matrix of the log returns as follows \cite{eberlein2010analysis}
	$$
	\boldsymbol{\Sigma} = \delta \left(\alpha^2 - \boldsymbol{\beta}^{\top} \boldsymbol{\Delta} \boldsymbol{\beta} \right)^{-\frac{1}{2}} \left( \boldsymbol{\Delta} + \left(\alpha^2 - \boldsymbol{\beta}^{\top} \boldsymbol{\Delta} \boldsymbol{\beta}\right)^{-1} \boldsymbol{\Delta} \boldsymbol{\beta} \boldsymbol{\beta}^{\top} \boldsymbol{\Delta} \right)
	$$
	Moreover, $\boldsymbol{\mu}_{\text{NIG}}$ is a vector of drift correction terms that ensures that $\left\{e^{-r t} S^j_t \mid t \geq 0\right\}$ is a martingale for all $j = 1,\ldots d$, and is given by
	$$
	\mu_{\text{NIG}}^j=-\delta\left(\sqrt{\alpha^2-\beta_j^2}-\sqrt{\alpha^2-\left(\beta_j+1\right)^2}\right), \quad j = 1, \ldots, d .
	$$

\end{example}

\begin{example}[Generalized Hyperbolic (GH)]
	\label{ex:GH_model}
	\sloppy
	The discounted characteristic function of the GH model under the risk-neutral  measure is given in the form 
	$\Phi^{\text{GH}}_{\boldsymbol{X}_T}(\boldsymbol{z}) = e^{\mathrm{i} \boldsymbol{z}^{\top} ( \boldsymbol{X}_0  + ( r + \boldsymbol{\mu}_{\text{GH}}) T ) }\phi^{\text{GH}}_{\boldsymbol{X}_T}(\boldsymbol{z})$	for $\boldsymbol{z} = \boldsymbol{y} + \mathrm{i} \boldsymbol{R}$, with $\boldsymbol{R} \in \delta_X^{\text{GH}}= \left\{\boldsymbol{R} \in \mathbb{R}^d \mid \alpha^2-(\boldsymbol{\beta}-\boldsymbol{R})^{\top} \boldsymbol{\Delta}(\boldsymbol{\beta}-\boldsymbol{R})>0\right\}$   for 	\quad $\Im[\boldsymbol{z}] \in \delta_X^{\text{GH}}$ by \cite{prause1999generalized}
	
	\begin{equation}
		\phi^{\text{GH}}_{\boldsymbol{X}_T}(\boldsymbol{z}) =\left(\frac{\alpha^2-\boldsymbol{\beta}^{\top} \boldsymbol{\Delta} \boldsymbol{\beta}}{\alpha^2  -(\boldsymbol{\beta}+ \mathrm{i} \boldsymbol{z})^{\top}\boldsymbol{\Delta}(\boldsymbol{\beta}+ \mathrm{i} \boldsymbol{z})}\right)^{\lambda / 2} \frac{\mathrm{~K}_\lambda\left(\delta T \sqrt{\alpha^2-(\boldsymbol{\beta}+ \mathrm{i} \boldsymbol{z})^{\top}\boldsymbol{\Delta}(\boldsymbol{\beta}+ \mathrm{i} \boldsymbol{z})}\right)}{\mathrm{K}_\lambda\left(\delta T \sqrt{\alpha^2-\boldsymbol{\beta}^{\top} \boldsymbol{\Delta} \boldsymbol{\beta}}\right)}
	\end{equation}
	
	where $\alpha \in \mathbb{R}_{+}, \delta \in \mathbb{R}_{+}, \boldsymbol{\beta} \in \mathbb{R}^d, \lambda \in \mathbb{R}$ with $\alpha^2>\boldsymbol{\beta}^{\mathrm{T}} \boldsymbol{\Delta} \boldsymbol{\beta},$ where $\boldsymbol{\Delta} \in \mathbb{R}^{d \times d}$ is a symmetric positive definite matrix with a unit determinant i.e. $| \det(\boldsymbol{\Delta})  | = 1$, related to the covariance matrix of the log returns as follows \cite{eberlein2010analysis}
	
	$$
	\boldsymbol{\Sigma} = \delta \left(\alpha^2 - \boldsymbol{\beta}^{\top} \boldsymbol{\Delta} \boldsymbol{\beta} \right)^{-\frac{1}{2}} \left( \boldsymbol{\Delta} + \left(\alpha^2 - \boldsymbol{\beta}^{\top} \boldsymbol{\Delta} \boldsymbol{\beta}\right)^{-1} \boldsymbol{\Delta} \boldsymbol{\beta} \boldsymbol{\beta}^{\top} \boldsymbol{\Delta} \right)
	$$
	Moreover, is a vectorof drift correction terms that ensures that $\left\{e^{-r t} S^j_t \mid t \geq 0\right\}$ is a martingale for all $j = 1,\ldots d$. In the case of the GH model, we do not have an explicit expression for $\boldsymbol{\mu}_{\text{GH}}$, hence we compute it by evaluating the 1D characteristic function as follows
	$$
	\mu_{\text{GH}}^j =  - \frac{1}{T} \log\left( 	\phi^{\text{GH}}_{X_T^j}( - \mathrm{i}) \right), \quad j = 1,\ldots,d
	$$
	where
	\begin{equation}
		\phi^{\text{GH}}_{X_T^j}(z) = \left(\frac{\alpha^2-\beta^2}{\alpha^2-(\beta+i z)^2}\right)^{\lambda / 2} \frac{\mathrm{~K}_\lambda\left(\delta \sqrt{\alpha^2-(\beta+i z)^2}\right)}{\mathrm{K}_\lambda\left(\delta \sqrt{\alpha^2-\beta^2}\right)}, \quad \Im[z] \in \delta_X^{\text{GH}}.
	\end{equation}
	
	\begin{remark}
		The GH model coincides with the NIG model for  $\lambda = -\frac{1}{2}$ and coincides with the hyperbolic model for $\lambda = 1$  \cite{prause1999generalized}.
	\end{remark}

\end{example}

\begin{example}[Variance Gamma (VG)]
		\label{ex:VG_model}
	The discounted characteristic function of the VG model under the risk-neutral  measure is given in the form  $ \Phi^{\text{VG}}_{\boldsymbol{X}_T}(\boldsymbol{z}) = e^{\mathrm{i} \boldsymbol{z}^{\top} ( \boldsymbol{X}_0  + (r  + \boldsymbol{\mu}_{\text{VG}}) T ) } \phi^{\text{VG}}_{\boldsymbol{X}_T}(\boldsymbol{z})$  for $\boldsymbol{z} = \boldsymbol{y} + \mathrm{i} \boldsymbol{R} \in \mathbb{C}^d $ with $\boldsymbol{R}  \in \delta_X^{\text{VG}} = \left\{\boldsymbol{R} \in \mathbb{R}^d \left\lvert\, 1+\nu \boldsymbol{R}^{\top} \boldsymbol{\theta}-\frac{1}{2} \nu \boldsymbol{R}^{\top} \boldsymbol{\Sigma} \boldsymbol{R}>0\right.\right\} $, by \cite{luciano2006multivariate}
	
	\begin{equation}
		\phi^{\text{VG}}_{\boldsymbol{X}_T}(\boldsymbol{z}) =\left(1-\mathrm{i} \nu \boldsymbol{z}^{\top} \boldsymbol{\theta}+\frac{1}{2} \nu \boldsymbol{z}^{\top} \boldsymbol{\Sigma} \boldsymbol{z}\right)^{-T / \nu},  \quad \Im[\boldsymbol{z}] \in \delta_X^{\text{VG}}.
	\end{equation}
	
	where    $\boldsymbol{\sigma} = (\sigma_1,\ldots, \sigma_d) \in \mathbb{R}^d_+$, $\boldsymbol{\theta} = (\theta_1, \ldots, \theta_d) \in \mathbb{R}^d$, $\nu >0$, and $\boldsymbol{\Sigma} \in \mathbb{R}^{d \times d}$ denotes the covariance matrix of the log returns i.e. $\boldsymbol{\Sigma}_{i j}=\rho_{i, j} \sigma_i \sigma_j$  with $\rho_{i,j}$ denoting the correlation between the Brownian motions of the $i^{th}$ and $j^{th}$ asset price processes. Moreover, $\boldsymbol{\mu}_{V G}$ is a vector of drift correction terms that ensures that $\left\{e^{-r t} S^j_t \mid t \geq 0\right\}$ is a martingale for all $j = 1,\ldots d$, and is given by
	
	$$
	\mu_{V G}^j=\frac{1}{\nu} \log \left(1-\frac{1}{2} \sigma_j^2 \nu-\theta_j \nu\right), \quad j=1, \ldots, d .
	$$

\end{example}

\section{Strip of Analyticity of the Characteristic Functions}
This section presents the strip of analyticity of some examples of the characteristic functions considered in this work (see Table \ref{table:chf_table})
\FloatBarrier
\begin{table}[h]
	\centering
	\begin{tabular}{| p{2cm} | p{6cm}  |}
		\hline   \textbf{Model} &  $\delta_X$  \\
		\hline
		\textbf{GBM} &  \small $ \mathbb{R}^d $  \\ 
		\hline 
		\textbf{GH, NIG}   & \small $\{\boldsymbol{R} \in \mathbb{R}^d \; | \; \alpha^2 -  (\boldsymbol{\beta} -\boldsymbol{R})^{\prime} \boldsymbol{\Delta}( \boldsymbol{\beta} - \boldsymbol{R} )>0\}$  \\
		\hline
			\textbf{VG} & \small $\{\boldsymbol{R} \in \mathbb{R}^d \; | \; 1+ \nu \boldsymbol{R}^{\prime} \boldsymbol{\theta} -\frac{1}{2} \nu\boldsymbol{R}^{\prime} \boldsymbol{\Sigma} \boldsymbol{R}  > 0\}$  \\
		\hline
	\end{tabular}
	\caption{Strip of analyticity,  $\delta_X$, of the characteristic functions  for the different pricing models.}
	\label{table:strip_table} 
\end{table}

\section{Strip of Analyticity of the Fourier Transforms of the Payoff Functions}
This section presents the strip of analyticity of the Fourier transforms of some examples  of payoff functions considered in this work  (see Table \ref{table:payoffs}) 
\FloatBarrier
\begin{table}[h]
	\centering
	\begin{tabular}{| p{3cm} | p{8cm}  |}
		\hline   \textbf{Payoff} &   $\delta_P$  \\
		\hline
		\textbf{Basket put} &  \small $\{ \boldsymbol{R}\in \mathbb{R}^d  \; |  R_j > 0  \; \}$  \\ 
		\hline 
		\textbf{Spread call} & \small$\{ \boldsymbol{R}\in \mathbb{R}^d  \; | R_j >0, j = 2,\ldots d, R_1 <-1-\sum_{j=2}^d R_j   \; \}$ \\
		\hline
		\textbf{Call on min}   & \small $\{ \boldsymbol{R}\in \mathbb{R}^d  \; |  R_j < 0, \sum_{j = 1}^d R_j < -1   \; \}$  \\
		
			\hline
		\textbf{CON call}   & \small $\{ \boldsymbol{R}\in \mathbb{R}^d  \; |  R_j < 0  \; \}$  \\
		\hline
	\end{tabular}
	\caption{Strip of analyticity,  $\delta_P$, of the Fourier transforms of the payoff functions.}
	\label{table:payoff_strip_table}
\end{table}

\section[Semi-heavy-tailed characteristic functions: example of independent assets]{Domain transformation for semi-heavy-tailed characteristic functions: example of the GH model for independent assets}
\label{sec:domain_transf_gh_independent}
\paragraph{Product-form domain transformation}
Following the same line of reasoning as in  Section \newline ~\ref{sec:gbm_dom_transf}, we consider the setting of independent assets; hence, we have
\begin{equation}
	\phi^{GH}_{\boldsymbol{X}_T}(\boldsymbol{z}) = \prod_{j = 1}^d \phi^{GH}_{X^j_T}(z_j), \boldsymbol{z} \in \mathbb{C}^d, \Im[\boldsymbol{z} ] \in \delta_X^{GH},
\end{equation}
where \cite{prause1999generalized}
	\begin{equation}
		\label{eq:gh_1D_CF}
		\phi^{GH}_{X_T^j}(z_j) = 	\left(\frac{\alpha^2-\beta_j^2}{\alpha^2-(\beta_j+\mathrm{i} z_j)^2}\right)^{\lambda / 2}  \frac{K_\lambda\left(\delta T \sqrt{\alpha^2-(\beta_j+\mathrm{i} z_j)^2}\right)}{K_\lambda\left(\delta T \sqrt{\alpha^2-\beta_j^2}\right)},  \quad z_j \in \mathbb{C}.
\end{equation}
where $K_{\lambda}(\cdot)$ is the modified Bessel function of the second kind. We recall that the Bessel function satisfies the following relation
\begin{equation}
	K_{\lambda}(x) \stackrel{x\to +\infty}{\sim} \sqrt{\frac{\pi}{2x}} e^{-x}.
\end{equation}
Consequently, $\phi^{GH}_{X_T^j}(z_j)$ behaves asymptotically as a double-exponential function, i.e., \newline $|\phi^{GH}_{X_T^j}(z_j)|\leq C \exp(- \gamma |\Re[z_j]| )$ as $\Re[z_j] \to \pm \infty$.  In fact, we have that
\begin{equation}
\phi^{GH}_{X_T^j}(y_j + \mathrm{i} R_j) \stackrel{|y_j| \to \infty}{\sim} \left(\frac{\alpha^2 - \beta^2}{y^2_j} \right)^{\lambda / 2} \sqrt{\frac{\pi}{2 \delta T |y_j|}} \frac{ \exp(- \delta T |y_j|)}{K_\lambda\left(\delta T \sqrt{\alpha^2-\beta_j^2}\right)}.
\end{equation}
Hence, we choose the density for the domain transformation to be that of  the Laplace distribution, also known as the double exponential distribution, given by $\psi^{lap}(\boldsymbol{y}) = \prod_{j = 1}^d \psi^{lap}_j(y_j)$, where
\begin{equation}
	\label{eq:product_form_nig_density}
	\psi^{lap}_j(y_j) =  \frac{\exp(- \frac{\mid y_j \mid}{\tilde{\sigma}_j } )}{2 \tilde{\sigma}_j}, \quad  y_j \in \mathbb{R}, \tilde{\sigma_j} > 0.
\end{equation}
Focusing on the leading asymptotic terms, we encapsulate the polynomial prefactor in the following function
\begin{equation}
	Q^{GH}_j(y_j) :=	\left(\frac{\alpha^2 - \beta^2}{y^2_j} \right)^{\lambda / 2} \sqrt{\frac{\pi}{2 \delta T |y_j|}} \frac{1}{K_\lambda\left(\delta T \sqrt{\alpha^2-\beta_j^2}\right)}.
\end{equation}
Upon defining the functional form of $\psi^{lap}_j(u_j)$, the objective is to identify an appropriate selection for the parameters $\{\tilde{\sigma}_j\}_{j=1}^d$. To begin, we introduce the function $r^{GH}_{lap,j}(\cdot)$, representing the ratio of the characteristic function of the GH-distributed RV $X^j_T$ to the density $\psi^{lap}_j(\cdot)$: 
\begin{equation}
	r^{GH}_{lap,j}(\Psi_{lap}^{-1}(u_j)) :=  \frac{ \phi^{GH}_{X^j_T}(\Psi_{lap}^{-1}(u_j)+  \mathrm{i} R_j)}{	\psi^{lap}_j(\Psi_{lap}^{-1}(u_j))},  \quad u_j \in [0,1], \boldsymbol{R}  \in \delta_V^{GH}.
\end{equation}
Then, $r^{GH}_{lap,j}(\Psi_{lap}^{-1}(u_j))$ can be approximated as follows
\begin{equation}
	\small
	\label{eq:ratio_nig}
	\begin{aligned}
		r^{GH}_{lap,j}(\Psi_{lap}^{-1}(u_j)) &\stackrel{u_j \to  0}{\sim} \ 2 Q^{GH}_j(\Psi^{-1}_{lap}(u_j))   \underbrace{ \tilde{\sigma}_j \exp\left( - |\Psi^{-1}_{lap}(u_j)| \left(\delta T - \frac{1}{\tilde{\sigma}_j} \right) \right) }_{h_{lap,j}^{GH}(u_j)}.
	\end{aligned}
\end{equation}

From \eqref{eq:ratio_nig}, we outline three possible cases by focusing on the  limiting behavior of the term $h_{lap,j}^{GH}(\Psi_{lap}^{-1}(u_j))$ as $u_j \to 0$:
\begin{equation}
	\label{eq:mnig_conditions}
	\lim_{u_j \to 0} h_{lap,j}^{GH}(\Psi_{lap}^{-1}(u_j)) =
	\left\{
	\begin{array}{lll}
		+\infty & \text{if } \tilde{\sigma}_j < \frac{1}{\delta T} & (i), \\[0.5pt]
		\tilde{\sigma}_j & \text{if } \tilde{\sigma}_j = \frac{1}{\delta T} & (ii), \\[0.5pt]
		0 & \text{if } \tilde{\sigma}_j > \frac{1}{\delta T} & (iii).
	\end{array}
	\right.
\end{equation}

From \eqref{eq:mnig_conditions}, an appropriate choice of the parameters $ \{ \tilde{\sigma}_j\}_{j = 1}^d$ satisfies either the condition in (ii) or (iii) (i.e., $\tilde{\sigma}_j = \overline{\sigma}_j + \epsilon_j$, where $\overline{\sigma}_j = \frac{1}{\delta T}$ and $\epsilon \geq 0$). Despite larger values of $\epsilon_j$ resulting in a faster decay of the integrand to zero, they concurrently amplify the magnitude of the mixed first partial derivatives. Consequently, relatively large values of $\tilde{\sigma}_j$ may degrade the performance of RQMC in high dimensions.

\section{Proof of Proposition  \ref{prop:nig_proposition}}
\label{sec:nig_dom_transf_proof}
Let $\boldsymbol{Y} \sim \boldsymbol{ML}_d(\boldsymbol{0}, \boldsymbol{\tilde{\Sigma}})$ denote a $d$-dimensional multivariate Laplace distribution with zero mean and covariance matrix $\boldsymbol{\tilde{\Sigma}}$. The random vector $\boldsymbol{Y}$ can be expressed in its normal variance-mean form as $\boldsymbol{Y} \stackrel{d}{=} \sqrt{W}\boldsymbol{N}$, where $\boldsymbol{N} \sim \mathcal{N}_d(\boldsymbol{0}, \boldsymbol{\tilde{\Sigma}})$ is a $d$-dimensional multivariate normal distribution with zero mean and covariance matrix $\boldsymbol{\tilde{\Sigma}}$, and $W \sim \text{Exp}(1)$ follows a one-dimensional exponential distribution with rate 1. Furthermore, $\boldsymbol{N}$ can be written as $\boldsymbol{N} \stackrel{d}{=} \boldsymbol{\tilde{L}}\boldsymbol{Z}$, where $\boldsymbol{Z} \sim \mathcal{N}_d(\boldsymbol{0}, \boldsymbol{I}_d)$ follows $d$-dimensional standard normal distribution, and $\boldsymbol{\tilde{L}}$ is the square root matrix of $\boldsymbol{\tilde{\Sigma}}$.

Consequently, the PDF of $\boldsymbol{Y}$ can be expressed in terms of the PDFs of $\boldsymbol{N}$ and $W$ as follows:
\begin{equation}
	\label{eq:laplace_product_density}
	\psi_{\boldsymbol{Y}}(\boldsymbol{y}) = \int_{0}^{+\infty} w^{-d/2} \psi_W(w) \psi_{\boldsymbol{N}}\left( \frac{\boldsymbol{y}}{\sqrt{w}} \right) \mathrm{d}w, \quad \boldsymbol{y} \in \mathbb{R}^d.
\end{equation}

Using this PDF expression in \eqref{eq:laplace_product_density}, we can rewrite the integrand in \eqref{QOI} as follows.

\begin{equation}
	\begin{aligned}
		\int_{\mathbb{R}^d} g(\boldsymbol{y}) \mathrm{d} \boldsymbol{y} &=  \int_{\mathbb{R}^d} \frac{g(\boldsymbol{y})}{\psi_{\boldsymbol{Y}}(\boldsymbol{y})} \psi_{\boldsymbol{Y}}(\boldsymbol{y})\mathrm{d} \boldsymbol{y} \\
		&= \int_{\mathbb{R}^d} \frac{g(\boldsymbol{y})}{\psi_{\boldsymbol{Y}}(\boldsymbol{y})} 
		\left( \int_{0}^{+\infty} w^{-d/2} \psi_W(w) \psi_{\boldsymbol{N}}\left( \frac{\boldsymbol{y}}{\sqrt{w}} \right)  \mathrm{d}w \right) \mathrm{d} \boldsymbol{y} \\
		&= \int_{\mathbb{R}^d}  \int_{0}^{+\infty}   \psi_W(w)\frac{g( \sqrt{w}\boldsymbol{y})}{\psi_{\boldsymbol{Y}}  (\sqrt{w}\boldsymbol{y})} \psi_{\boldsymbol{N}}( \boldsymbol{y})  \mathrm{d}w \mathrm{d} \boldsymbol{y} \\
		&= \int_{\mathbb{R}^d}  \int_{0}^{+\infty}   \psi_W(w)\frac{g( \sqrt{w}	\boldsymbol{\tilde{L}}\boldsymbol{y})}{\psi_{\boldsymbol{Y}}  (\sqrt{w}	\boldsymbol{\tilde{L}} \boldsymbol{y})} \psi_{\boldsymbol{Z}}( \boldsymbol{y})  \mathrm{d}w \mathrm{d} \boldsymbol{y} \\
		&= \int_{[0,1]^d}  \int_{0}^{+\infty} \psi_W(w)\frac{g\left( \sqrt{w}	\boldsymbol{\tilde{L}} \Psi_{\boldsymbol{Z}}^{-1}(\boldsymbol{u})\right)}{\psi_{\boldsymbol{Y}}  \left( \sqrt{w} 	\boldsymbol{\tilde{L}}\Psi_{\boldsymbol{Z}}^{-1}(\boldsymbol{u})\right)}  \mathrm{d}w \mathrm{d} \boldsymbol{u} \\
		&= \int_{[0,1]^d}  \int_{[0,1]}\frac{g\left( \sqrt{\Psi_W^{-1}(u^{\prime})} \boldsymbol{\tilde{L}} \Psi_{\boldsymbol{Z}}^{-1}(\boldsymbol{u})\right)}{\psi_{\boldsymbol{Y}}  \left( \sqrt{\Psi_W^{-1}(u^{\prime})} \boldsymbol{\tilde{L}}\Psi_{\boldsymbol{Z}}^{-1}(\boldsymbol{u})\right)}  \mathrm{d} u^{\prime}\mathrm{d} \boldsymbol{u} \\
		&= \int_{[0,1]^{d+1}} 
		\frac{
			g\left( 
			\sqrt{\Psi_W^{-1}(u_{d+1})}
			\boldsymbol{\tilde{L}}	\Psi_{\boldsymbol{Z}}^{-1}
			(\boldsymbol{u}_{1:d})
			\right)
		}{
			\psi_{\boldsymbol{Y}}\left( 
			\sqrt{\Psi_W^{-1}(u_{d+1})}
			\boldsymbol{\tilde{L}}		\Psi_{\boldsymbol{Z}}^{-1}
			(\boldsymbol{u}_{1:d})
			\right)
		} 
		\mathrm{d}\boldsymbol{u},
	\end{aligned}
\end{equation}
where in the second line we plugged in the expression of the PDF in \eqref{eq:laplace_product_density}, in the third line we used the change of variable $\boldsymbol{y}^{\prime} = \frac{\boldsymbol{y}}{\sqrt{w}}$, in the fourth line we applied the change of variable $\boldsymbol{y}^{\prime}  = \boldsymbol{\tilde{L}}^{-1} \boldsymbol{y}$, in the fifth line we  applied the domain transformation mapping $\boldsymbol{y} = \Psi_{\boldsymbol{Z}}^{-1}(\boldsymbol{u})$ followed by the mapping $w = \Psi_W^{-1}(u^{\prime})$. In the last line, due to the non-nested nature of the integrals, we merged them into a single $(d+1)$-dimensional integral. 

\section[Heavy-tailed characteristic functions: example of independent assets]{Domain transformation for heavy-tailed characteristic functions: example of the VG model for independent assets}
\label{sec:domain_transf_vg_independent}

\paragraph{Product-form domain transformation}
This section follows the same steps as in Section~\ref{sec:gbm_dom_transf} to obtain an appropriate domain transformation for the heavy-tailed characteristic functions, using the example of the VG model, for which we have
$$
\phi^{VG}_{\boldsymbol{X}_T}(\boldsymbol{z}) = \prod_{j = 1}^d \phi^{VG}_{X_T}(z_j), \boldsymbol{z} \in \mathbb{C}^d, \quad \Im[ \boldsymbol{z} ]\in \delta^{VG}_X,
$$
where
$$
\phi^{VG}_{X_T}(z_j) = \left(1-\mathrm{i} \nu \theta_j z_j + \frac{\nu \sigma_j^2 }{2}z_j^2 \right)^{-T / \nu}, \quad z_j \in \mathbb{C}, \Im[ z_j ]\in \delta^{VG}_X.
$$
As $\phi^{VG}_{X_T}(z_j)$ is a rational function, a natural choice of the density $\psi(\cdot)$ is also to be a rational function. A suitable candidate density with this form is the generalized Student's $t$-distribution, $\psi^{stu}(\cdot)$, which we can write in the product-form as  $\psi^{stu}(\boldsymbol{y}) = \prod_{j = 1}^d \psi^{stu}_j(y_j)$, with
$$
\label{eq:product_form_vg_density}
\psi^{stu}_j(y_j) :=  \frac{\Gamma\left(\frac{\tilde{\nu}+1}{2}\right)}{\sqrt{\tilde{\nu}\pi} \tilde{\sigma}_j\Gamma\left(\frac{\tilde{\nu}}{2}\right)}\left(1+\frac{y_j^2}{\tilde{\nu}\tilde{\sigma}_j^2}\right)^{-(\tilde{\nu}+1 ) / 2}, y_j \in \mathbb{R}, \tilde{\sigma}_j > 0, \tilde{\nu} > 0.
$$
Other related studies \cite{kuo2006randomly,ouyang2024achieving} have typically considered the standard Student’s $t$-distribution ($\tilde{\sigma} =1$) and not the generalized distribution. Section~\ref{sec: num_exp_results} reveals the advantage of including the scaling parameters $\{\tilde{\sigma}_j\}_{j =1}^d$ and their effects on the convergence of RQMC. 
After specifying the functional form of $\psi^{stu}_j(y_j)$, the aim is to determine an appropriate range for the parameters $\tilde{\nu}$ and $\{\tilde{\sigma}_j\}_ {j = 1}^d$. We first define the function $r^{VG}_{stu,j}(\cdot)$ as the ratio of the characteristic function of the variable $X^j_T$ and the proposed density $\psi^{stu}_j(\cdot)$:
$$
r^{VG}_{stu,j}(\Psi_{stu}^{-1}(u_j)) :=  \frac{ \phi^{VG}_{X_T}(\Psi_{stu}^{-1}(u_j)+  \mathrm{i} R_j)}{	\psi^{stu}_j(\Psi_{stu}^{-1}(u_j))}, \quad u_j \in [0,1],  \boldsymbol{R}  \in \delta^{VG}_V.
$$
To determine the appropriate parameters, we replace the characteristic function and the proposed density with their explicit expressions, and obtain the following: 
\begin{equation}
	\label{eq:T2_uniariateVG}
	\hspace{-1cm} 
	\begin{aligned}
		r_{stu,j}^{VG}(\Psi_{stu}^{-1}(u_j)) &:=   	C_{\tilde{\nu}}    \tilde{\sigma}_j \times \frac{\left[ (\Psi_{stu}^{-1}(u_j))^2\left( \frac{\nu \sigma_j^2}{2} + \mathrm{i} \frac{\nu \sigma_j^2 R_j - \nu \theta_j}{\Psi_{stu}^{-1}(u_j)} +  \frac{1 + \nu \theta_j R_j - \frac{\nu \sigma_j^2 R_j^2}{2}}{(\Psi_{stu}^{-1}(u_j))^2} \right) \right]^{-\frac{T}{\nu}}}{ \left[ 1+ \frac{(\Psi_{stu}^{-1}(u_j))^2}{\tilde{\nu} \tilde{\sigma}_j^2} \right]^{- \frac{ \tilde{\nu} + 1}{2} } },
	\end{aligned}
\end{equation}
where $C_{\tilde{\nu}} = \frac{  \sqrt{\tilde{\nu}} \sqrt{ \pi} \Gamma\left(\frac{\tilde{\nu}}{2}\right)}{\Gamma\left(\frac{\tilde{\nu}+1}{2}\right)}$. We are interested in the asymptotic behavior of $	r_{stu,j}^{VG}(u_j) $ as $u_j \to 0$ i.e., as  $\Psi^{-1}(u_j) \to -\infty$; thus, we approximate $	r_{stu,j}^{VG}(u_j) $ near $u_j \to 0$ as follows:

\begin{equation}
	\begin{aligned}
		r_{stu,j}^{VG}(\Psi_{stu}^{-1}(u_j))  &=  C_{\tilde{\nu}}  \tilde{\sigma}_j \times \frac{\left[ (\Psi_{stu}^{-1}(u_j))^2\left( \frac{\nu \sigma_j^2}{2} + \mathrm{i} \frac{\nu \sigma_j^2 R_j - \nu \theta_j}{\Psi_{stu}^{-1}(u_j)} +  \frac{1 + \nu \theta_j R_j - \frac{\nu \sigma_j^2 R_j^2}{2}}{(\Psi_{stu}^{-1}(u_j))^2} \right) \right]^{-\frac{T}{\nu}}}{ \left[ 1+ \frac{(\Psi_{stu}^{-1}(u_j))^2}{\tilde{\nu} \tilde{\sigma}_j^2} \right]^{- \frac{ \tilde{\nu} + 1}{2} } } \\
		& \stackrel{u_j \to 0}{\sim}   C_{\tilde{\nu}} \tilde{\sigma}_j \frac{\left[\frac{\nu \sigma_j^2}{2}\left(\Psi_{stu}^{-1}(u_j)\right)^2\right]^{-\frac{T}{\nu}}}{\left[\frac{\left(\Psi_{stu}^{-1}(u_j)\right)^2}{\tilde{\nu}  \tilde{\sigma}_j ^2}\right]^{-\frac{\tilde{\nu} +1}{2}}} := h_{stu}^{VG}(u_j)
	\end{aligned}
\end{equation}
Given the expression of $h_{stu}^{VG}(\Psi_{stu}^{-1}(u_j))$, we  enumerate three possible limits 
\begin{equation}
	\label{eq:vg_conditions}
	\lim_{u_j \to 0} h_{stu}^{VG}(\Psi_{stu}^{-1}(u_j)) =
	\left\{
	\begin{array}{lll}
		+\infty & \text{if } \tilde{\nu} > \frac{2T}{\nu} - 1 & (i), \\[0.5pt]
		C_{\tilde{\nu}} \tilde{\sigma}_j \left[\frac{\nu \sigma_j^2 \tilde{\nu} \tilde{\sigma}_j^2}{2}\right]^{- \frac{T}{\nu}}
		& \text{if } \tilde{\nu} = \frac{2T}{\nu} - 1 & (ii), \\[0.5pt]
		0 & \text{if } \tilde{\nu} < \frac{2T}{\nu} - 1 & (iii).
	\end{array}
	\right.
\end{equation}

From \eqref{eq:vg_conditions}, an appropriate choice of the parameter $\tilde{\nu}$ satisfies either the condition in Case (ii) or (iii), where (ii) implicitly relies on the constraint $ \frac{2T}{ \nu } -1 > 0$, which, if not satisfied, implies that the characteristic function is not integrable \cite{kirkby2015efficient}, and which violates our Assumption \ref{ass:Assumptions on  the distribution}. After specifying the value of the parameter $\tilde{\nu}$, a candidate choice for $\{\tilde{\sigma}_j\}_ {j = 1}^d$ is 
\begin{equation}
	\label{eq:vg_sigma_choice}
	\tilde{\sigma}_j = \left[  \frac{\nu \sigma^2_j \tilde{\nu}}{2}  \right]^{ \frac{T}{\nu - 2T}}   (C_{\tilde{\nu}})^{-\frac{\nu}{\nu - 2T}}, \; \forall \; j \in \mathbb{I}_d.
\end{equation}
When $\tilde{\nu} = \frac{2T}{\nu}-1$, this choice in \eqref{eq:vg_sigma_choice} implies that $\lim_{u_j \to 0} h_{stu}^{VG}(\Psi_{stu}^{-1}(u_j)) = 1$. Consequently, this choice reduces the adverse effect that large values of $\tilde{\sigma}_j$ could have on the magnitude of mixed partial derivatives of the integrand, deteriorating the efficiency of RQMC. In summary, a suitable choice for $\tilde{\nu},\{\tilde{\sigma}_j\}_ {j = 1}^d$ is $\tilde{\nu} = \overline{\nu} - \epsilon$, where $\overline{\nu} = \frac{2T}{ \nu} - 1 $ defines the critical value and $0 \leq \epsilon < \overline{\nu}$, setting $\tilde{\sigma}_j = \left[  \frac{\nu \sigma^2_j \tilde{\nu}}{2}  \right]^{ \frac{T}{\nu - 2T}}   (C_{\tilde{\nu}})^{-\frac{\nu}{\nu - 2T}}$.


\section{Proof of Proposition \ref{prop:vg_proposition} }
\label{sec:vg_dom_transf_proof}
Let $\boldsymbol{Y} \sim t_d(\boldsymbol{0}, \boldsymbol{\tilde{\Sigma}}, \tilde{\nu})$ denote a $d$-dimensional Student-$t$ distribution with zero mean, covariance matrix $\boldsymbol{\tilde{\Sigma}}$, and degrees of freedom $\tilde{\nu}$. The random vector $\boldsymbol{Y}$ can be expressed in its normal variance-mean form as $\boldsymbol{Y} \stackrel{d}{=} \frac{\sqrt{\tilde{\nu}}}{\sqrt{W}}\boldsymbol{N}$, where $\boldsymbol{N} \sim \mathcal{N}_d(\boldsymbol{0}, \boldsymbol{\tilde{\Sigma}})$ is a $d$-dimensional multivariate normal distribution with zero mean and covariance matrix $\boldsymbol{\tilde{\Sigma}}$, and $W \sim \chi^2(\tilde{\nu})$ follows a one-dimensional chi-squared distribution with $\tilde{\nu}$ degrees of freedom. To simplify notation, we define $\boldsymbol{\tilde{N}} \sim \mathcal{N}_d(\boldsymbol{0}, \tilde{\nu}\boldsymbol{\tilde{\Sigma}})$. Furthermore, $\boldsymbol{\tilde{N}}$ can be written as $\boldsymbol{\tilde{N}}\stackrel{d}{=} \boldsymbol{\tilde{L}}\boldsymbol{Z}$, where $\boldsymbol{Z} \sim \mathcal{N}_d(\boldsymbol{0}, \boldsymbol{I}_d)$ follows a  $d$-dimensional standard normal distribution, and $\boldsymbol{\tilde{L}}$ is the square root matrix of $ \tilde{\nu} \boldsymbol{\tilde{\Sigma}}$. 

Consequently, the PDF of $\boldsymbol{Y}$ can be expressed in terms of the PDFs of $W$ and $\boldsymbol{\tilde{N}}$ as follows:
\begin{equation}
	\label{eq:student_product_density}
	\psi_{\boldsymbol{Y}}(\boldsymbol{y}) = \int_{0}^{+\infty} w^{d/2} \psi_W(w) \psi_{\boldsymbol{\tilde{N}}}( \sqrt{w}\boldsymbol{y} )  \mathrm{d}w, \quad \boldsymbol{y} \in \mathbb{R}^d.
\end{equation}

Using this PDF expression in \eqref{eq:student_product_density}, we can rewrite the integrand in \eqref{QOI} as follows.
\begin{equation}
	\begin{aligned}
		\int_{\mathbb{R}^d} g(\boldsymbol{y}) \mathrm{d} \boldsymbol{y} &=  
		\int_{\mathbb{R}^d} \frac{g(\boldsymbol{y})}{\psi_{\boldsymbol{Y}}(\boldsymbol{y})} 
		\psi_{\boldsymbol{Y}}(\boldsymbol{y})\mathrm{d} \boldsymbol{y} \\
		&= \int_{\mathbb{R}^d} \frac{g(\boldsymbol{y})}{\psi_{\boldsymbol{Y}}(\boldsymbol{y})} 
		\left( \int_{0}^{+\infty} w^{d/2} \psi_W(w) 
		\psi_{\boldsymbol{\tilde{N}}}\left( \sqrt{w}\boldsymbol{y} \right)  
		\mathrm{d}w \right) \mathrm{d} \boldsymbol{y} \\
		&= \int_{\mathbb{R}^d}  \int_{0}^{+\infty}   
		\psi_W(w)\frac{g( \frac{\boldsymbol{y}}{\sqrt{w}})}
		{\psi_{\boldsymbol{Y}}  (\frac{\boldsymbol{y}}{\sqrt{w}})} 
		\psi_{\boldsymbol{\tilde{N}}}( \boldsymbol{y})  
		\mathrm{d}w \mathrm{d} \boldsymbol{y} \\
		&= \int_{\mathbb{R}^d}  \int_{0}^{+\infty}   
		\psi_W(w)\frac{g( \frac{\boldsymbol{\tilde{L}} \boldsymbol{y}}{\sqrt{w}})}
		{\psi_{\boldsymbol{Y}}  (\frac{\boldsymbol{\tilde{L}} \boldsymbol{y}}{\sqrt{w}})} 
		\psi_{\boldsymbol{Z}}( \boldsymbol{y})  
		\mathrm{d}w \mathrm{d} \boldsymbol{y} \\
		&= \int_{[0,1]^d}  \int_{0}^{+\infty} 
		\psi_W(w)\frac{g\left( \frac{\boldsymbol{\tilde{L}} \Psi_{\boldsymbol{Z}}^{-1}(\boldsymbol{u})}
			{\sqrt{w}}\right)}{\psi_{\boldsymbol{Y}}  
			\left( \frac{\boldsymbol{\tilde{L}} \Psi_{\boldsymbol{Z}}^{-1}(\boldsymbol{u})}{\sqrt{w}}\right)}  
		\mathrm{d}w \mathrm{d} \boldsymbol{u} \\
		&= \int_{[0,1]^d}  \int_{[0,1]}\frac{g\left( 
			\frac{\boldsymbol{\tilde{L}} \Psi_{\boldsymbol{Z}}^{-1}(\boldsymbol{u})}
			{\sqrt{\Psi_W^{-1}(u^{\prime})}}\right)}
		{\psi_{\boldsymbol{Y}}  \left( 
			\frac{\boldsymbol{\tilde{L}} \Psi_{\boldsymbol{Z}}^{-1}(\boldsymbol{u})}
			{\sqrt{\Psi_W^{-1}(u^{\prime})}} \right)}  
		\mathrm{d} u^{\prime}\mathrm{d} \boldsymbol{u} \\
		&= \int_{[0,1]^{d+1}} 
		\frac{
			g\left( 
			\frac{\boldsymbol{\tilde{L}}
				\Psi_{\boldsymbol{Z}}^{-1}
				(\boldsymbol{u}_{1:d})}{\sqrt{\Psi_W^{-1}(u_{d+1})}}
			\right)
		}{
			\psi_{\boldsymbol{Y}}\left( 
			\frac{\boldsymbol{\tilde{L}}
				\Psi_{\boldsymbol{Z}}^{-1}
				(\boldsymbol{u}_{1:d})}{\sqrt{\Psi_W^{-1}(u_{d+1})}}
			\right)
		} 
		\mathrm{d}\boldsymbol{u}
	\end{aligned}
\end{equation}
where in the second line we plugged in the expression of the PDF in  \eqref{eq:student_product_density}, in the third line we used the change of variable $\boldsymbol{y}^{\prime} = \sqrt{w}\boldsymbol{y}$, in the fourth line we applied the change of variable $\boldsymbol{y}^{\prime}  = \boldsymbol{\tilde{L}}^{-1} \boldsymbol{y}$, in the fifth line we  applied the domain transformation mapping $\boldsymbol{y} = \Psi_{\boldsymbol{Z}}^{-1}(\boldsymbol{u})$ followed by the mapping $w = \Psi_W^{-1}(u^{\prime})$. In the last line, due to the non-nested nature of the integrals, we merged them into a single $(d+1)$-dimensional integral.

\section{Fourier-RQMC Pricing Pipeline}
	\label{app:rqmc_fourier_pipeline}

{
Before applying Algorithm~\ref{alg:rqmc_fourier}, the transformation density $\psi_{\boldsymbol{Y}}$, its parameters, the transformation map $\mathcal{T}_{\psi}$, and the RQMC dimension $m$ are selected according to Tables \ref{tab:univariate_dom_tranf}, \ref{tab:multivariate_dom_tranf}, and~\ref{tab:identification}, together with the transformation formulas \eqref{eq:dom_transf_general}, \eqref{eq:transformed_fourier_pricing_integral}, \eqref{eq:nig_transformed_integral_rep}, and~\eqref{eq:vg_nested_integ}.
\begin{algorithm}[H]
	\caption{Fourier-RQMC pricing with domain transformation}
	\label{alg:rqmc_fourier}
	\begin{algorithmic}
		
		\Require $\boldsymbol{\Theta}_X$, $\boldsymbol{\Theta}_P$, $\Phi_{\boldsymbol{X}_T}$, $\widehat{P}$, $\delta_V$,  $\psi_{\boldsymbol{Y}}$, $\mathcal{T}_{\psi}$, $m$, $N_0$, $N_{\max}$, $S$, $\alpha$, $\varepsilon_{\mathrm{rel}}$.
		
		\State $C_\alpha \gets \Psi_{\mathcal{N}(0,1)}^{-1}(1-\alpha/2)$.
		
		\State $\boldsymbol{R}^*
		\gets
		\displaystyle
		\underset{\boldsymbol{R}\in\delta_V}{\arg\min}\;
		g\left(
		\boldsymbol{0};
		\boldsymbol{R},
		\boldsymbol{\Theta}_X,
		\boldsymbol{\Theta}_P
		\right)$, using \eqref{eq:optimal_damping_rule}.

	\State $\widetilde{g}(\cdot;\boldsymbol{R}^*)
	\gets
	\displaystyle
	\frac{
		g\left(
		\mathcal{T}_{\psi}(\cdot);
		\boldsymbol{R}^*,
		\boldsymbol{\Theta}_X,
		\boldsymbol{\Theta}_P
		\right)
	}{
		\psi_{\boldsymbol{Y}}\left(
		\mathcal{T}_{\psi}(\cdot)
		\right)
	}$, using \eqref{eq:dom_transf_general}, \eqref{eq:transformed_fourier_pricing_integral}, \eqref{eq:nig_transformed_integral_rep}, or \eqref{eq:vg_nested_integ}.
		
		\State $N\gets N_0$.
		
		\Repeat
		\For{$s=1,\ldots,S$}
		\State Generate a Sobol point set randomized by digital shifting
		$
		\{\boldsymbol{u}_n^{(s)}\}_{n=1}^{N}
		\subset [0,1]^m
		$.
		
		\State $V_N^{(s)}
		\gets
		\displaystyle
		\frac{1}{N}
		\sum_{n=1}^{N}
		\widetilde{g}
		\left(
		\boldsymbol{u}_n^{(s)};
		\boldsymbol{R}^*
		\right)$.
		\EndFor
		
		\State $\widehat{V}
		\gets
		\displaystyle
		\frac{1}{S}
		\sum_{s=1}^{S}
		V_N^{(s)}$.
		
		\State $\widehat{e}
		\gets
		\displaystyle
		C_\alpha
		\left(
		\frac{1}{S(S-1)}
		\sum_{s=1}^{S}
		\left(
		V_N^{(s)}-\widehat{V}
		\right)^2
		\right)^{1/2}$, using \eqref{rqmc_error}.
		
		\If{$\widehat{e}/|\widehat{V}|>\varepsilon_{\mathrm{rel}}$}
		\State $N\gets 2N$.
		\EndIf
		
		\Until{$\widehat{e}/|\widehat{V}|\leq\varepsilon_{\mathrm{rel}}$ or $N>N_{\max}$}
		
		\State \Return $\widehat{V}$, $\widehat{e}$.
		
	\end{algorithmic}
\end{algorithm}
}

{
	\noindent The implementation of Algorithm~\ref{alg:rqmc_fourier} is publicly available at
	\href{https://github.com/Michael-Samet/Quasi-Monte-Carlo-for-Efficient-Fourier-Pricing-of-Multi-Asset-Options/tree/main/Premia\%20Implementation}{https://github.com/Michael-Samet/Quasi-Monte-Carlo-for-Efficient-Fourier-Pricing}.
}
\section[Comparison with the Monte Carlo Method in the Fourier Space]{Comparison of the Proposed Approach with the Monte Carlo Method in the Fourier Space}
\label{sec:qmc_vs_mc_fourier}
This section demonstrates the advantage of employing QMC in the Fourier space over applying the MC method in the Fourier space, as in \cite{ballotta2022powering}. Although the MC method does not require a domain transformation, we must still introduce a density from which to sample. Hence, for the sake of comparison between the two methods, the MC estimator in the Fourier space will be defined as 
	\begin{equation}
		Q^{MC}_{d,n}[g] : = \frac{1}{N}\sum_{n = 1}^N \tilde{g}\left( \boldsymbol{u}_n\right), \quad \boldsymbol{u}_n \in [0,1]^d
\end{equation}
where  $\{\boldsymbol{u}_n \}_{n = 1}^N$ are independent and identically distributed (i.i.d.) samples drawn from the uniform distribution $\mathcal{U}([0,1]^d)$, and the transformed integrand $\tilde{g}(\cdot)$ is defined in the same was as in Section \ref{sec:RQMC_fourier_pricing}. The following numerical experiments demonstrate that applying RQMC with the appropriate domain transformation allows retaining nearly optimal convergence rates, in contrast to the MC method, in which the convergence rate is insensitive to the regularity of the integrand. Figures~\ref{fig_fourmc_vs_fourqmc_gbm_basket}, \ref{fig_fourmc_vs_fourqmc_vg_basket}, and \ref{fig_fourmc_vs_fourqmc_nig_basket} illustrate that employing RQMC in the Fourier space achieves a relative statistical error of about one order of magnitude lower than that of MC in the Fourier space. Moreover, the convergence rates of RQMC for the pricing of 4D basket put options range between $\mathcal{O}(N^{-1})$ and $\mathcal{O}(N^{-1.3})$, which is double the rate $\mathcal{O}(N^{-0.5})$ of the method. These convergence rates indicate that the RQMC method takes advantage of the analyticity of the integrand in the Fourier domain.
\FloatBarrier
\begin{figure}[h!]
	\centering	
	\begin{subfigure}{0.45\textwidth}	\includegraphics[width=\linewidth]{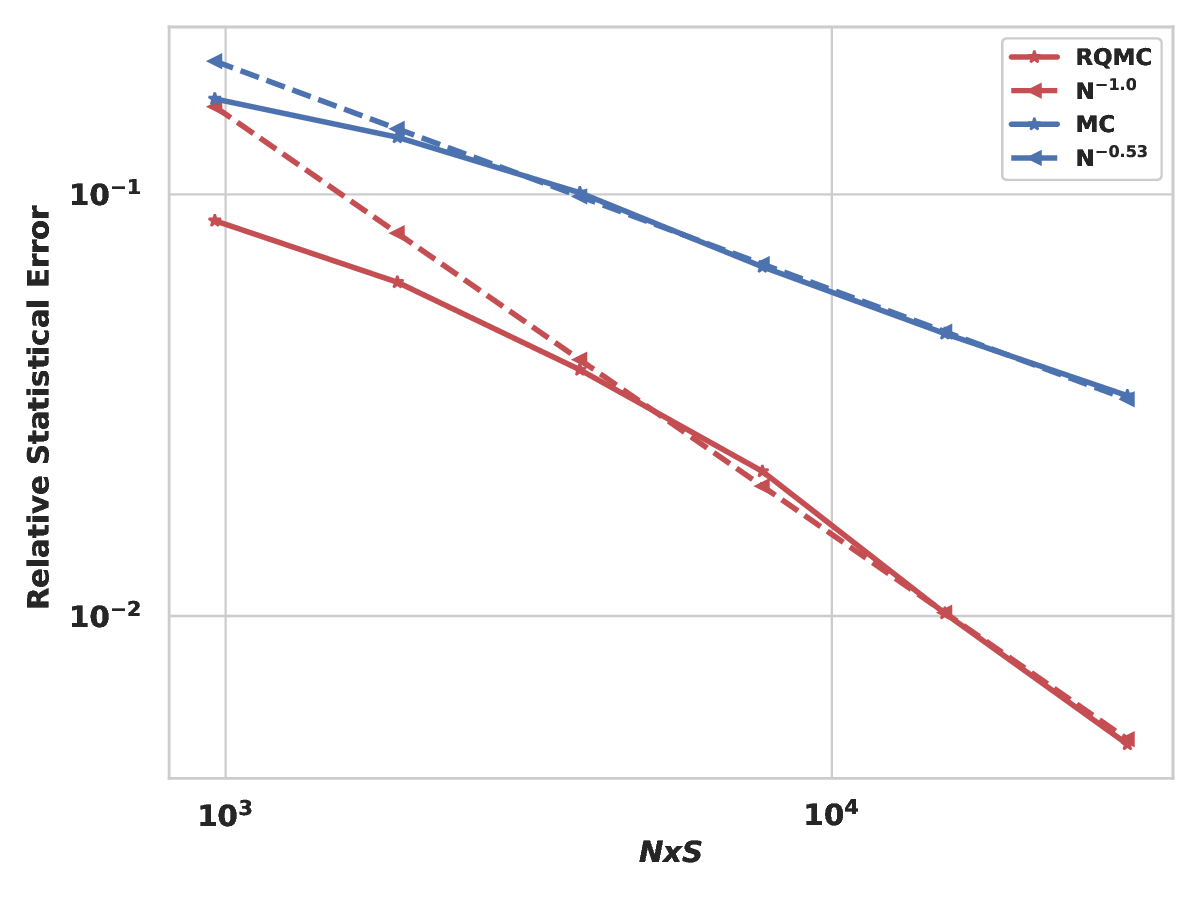}	\caption{}	\label{fig_fourmc_vs_fourqmc_gbm_basket}	\end{subfigure} 
	\begin{subfigure}{0.45\textwidth}
		\includegraphics[width=\linewidth]{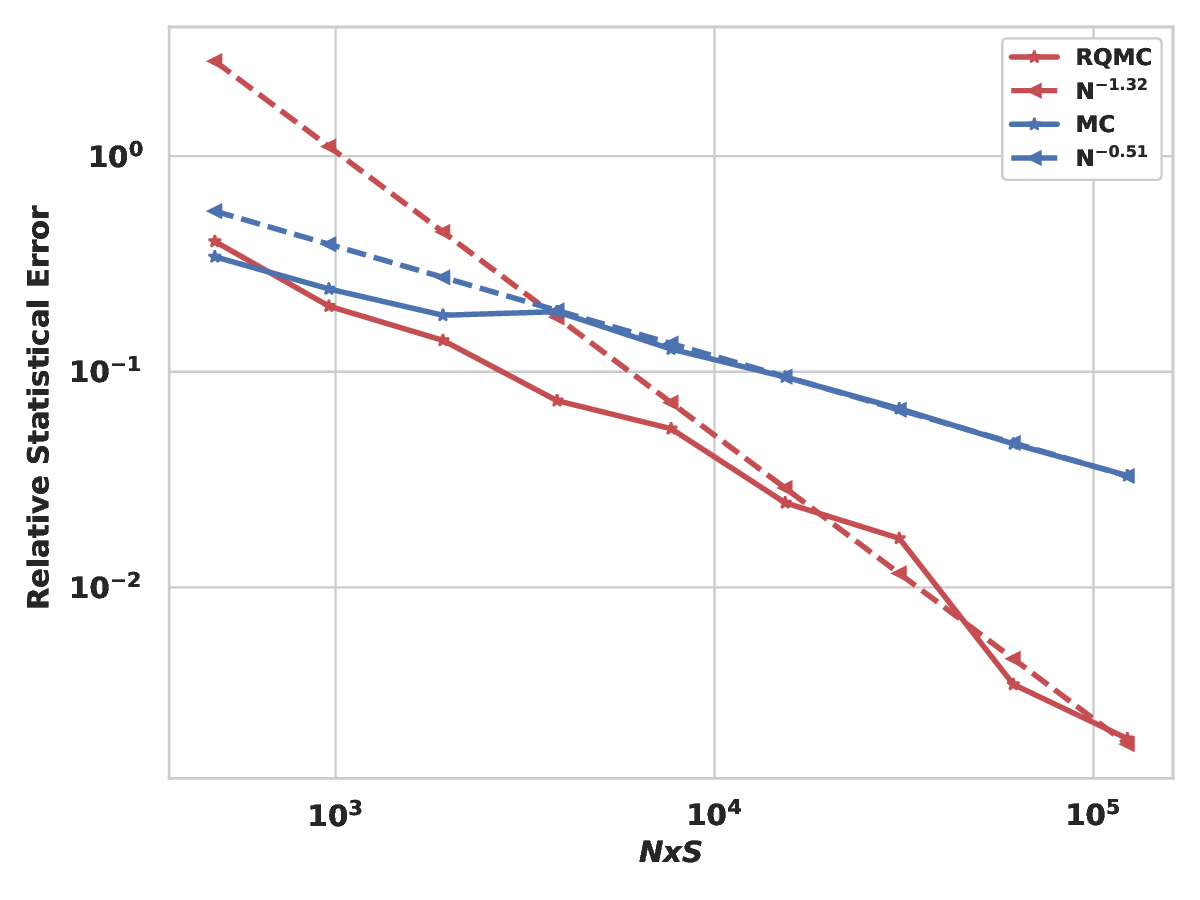}
		\caption{}
		\label{fig_fourmc_vs_fourqmc_vg_basket}
	\end{subfigure} \\
	\begin{subfigure}{0.45\textwidth}
		\includegraphics[width=\linewidth]{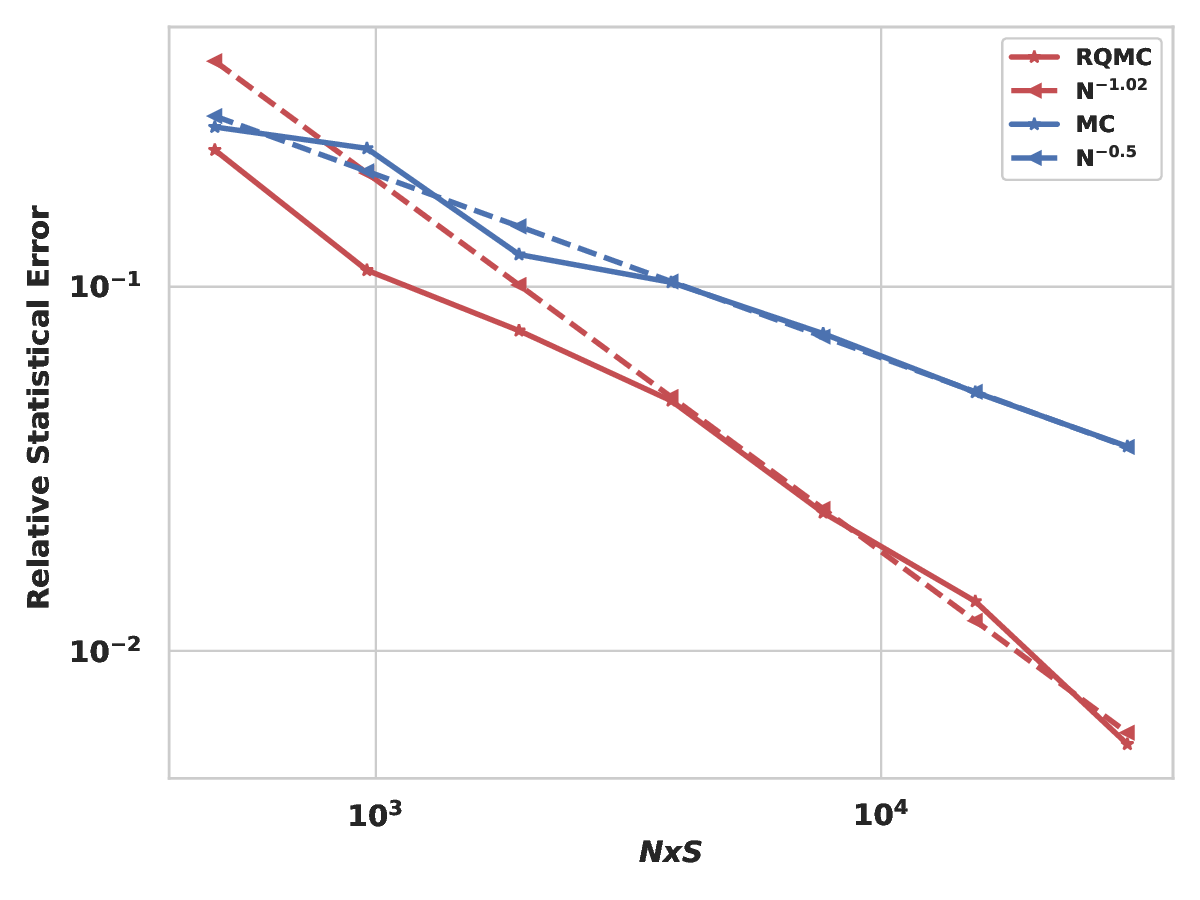}
		\caption{}
		\label{fig_fourmc_vs_fourqmc_nig_basket}
	\end{subfigure}
	\caption[]{4D basket put: convergence of the relative statistical error with respect to $N \times S$ function evaluations. For RQMC, $N$: number of QMC points; $S = 30$: number of independent and identically distributed uniform shifts. For MC, $N\times S$: number of random samples in the approximation. $S_0^j = 100$, $K = 100$, $r = 0$, and $T = 1$. (a) GBM model: $\sigma_j = 0.2$, $\boldsymbol{\Sigma} = \boldsymbol{I_d}$, $\overline{\mathbf{R}}_j = 4.44$, $\tilde{\sigma}_j = \frac{1}{\sqrt{T} \sigma} = 5$. (b) VG model: $\sigma_j = 0.4$, $\theta_j = -0.3$, $\nu = 0.2$, $\boldsymbol{\Sigma} = \boldsymbol{I_d}$, $\overline{\mathbf{R}}_j = 1.31$, $\tilde{\nu} = \frac{2T}{\nu}-1 = 9$, $\tilde{\sigma}_j = \left[ \frac{\nu \sigma^2_j \tilde{\nu}}{2} \right]^{ \frac{T}{\nu - 2T}} (\tilde{\nu})^{\frac{\nu}{4T - 2 \nu}}= 3.31$. (c) NIG model: $ \alpha = 20$, $\beta_j = -3$, $\delta = 0.2$, $\boldsymbol{\Delta} = \boldsymbol{I_d}$, $\overline{\mathbf{R}}_j = 5.73$, $\forall j = 1,\ldots,4$, and the domain transformation parameters are $ \tilde{\sigma}_j = \frac{1}{T \delta} = 10 $.  } 
	\label{fig:cpu_qmc_quad}
\end{figure}

\FloatBarrier

\end{document}